\newcommand{\myifthen}[2]{#1}
\definecolor{colcite}{rgb}{0,0.75,0}
\definecolor{collink}{rgb}{0,0,1}
\definecolor{colurl}{rgb}{1,0,0}
\newlength{\lpbox}
\newlength{\algobox}
  \newtheorem{theorem}{Theorem}
  \newtheorem{nottheorem}{}[section]
  \def\xthm[#1][#2][#3]{\newtheorem{#2}[nottheorem]{#3} \newrefformat{#2}{#3 \ref{#11}}}
  \def\xthm[#1][#2][#3]{\newtheorem{#2}[nottheorem]{#3}\newrefformat{#2}{#3 \ref{#11}}}
\def\U{\mathcal U}
\def\7{\frac{73}{60}}
\def\Shrink{{\tt Shrink}}
\def\L{\mathcal L}
\newcommand{\steiner}[2]{\fnode[framesize=0.2](#1){#2}}
\newcommand{\terminal}[2]{\cnode[fillstyle=solid,fillcolor=black](#1){.12}{#2}}
\newcommand{\bs}{\backslash}
\newcommand{\comment}[1]{}
\newcommand{\ignore}[1]{}
\newcommand{\NP}{\ensuremath{\mathsf{NP}}}
\newcommand{\PP}{\ensuremath{\mathsf{P}}}
\newcommand{\APX}{\ensuremath{\mathsf{APX}}}
\newcommand{\bdp}{\eqref{eq:LP-P2}}
\newcommand{\mtst}{\ensuremath{{\mathtt{mtst}}}}
\def\K{\ensuremath{\mathcal{K}}}
\def\C{\ensuremath{\mathcal{C}}}
\def\Z{\mathbf{Z}}
\def\R{\mathbf{R}}
\def\U{\mathcal U}
\def\rc{\mathtt{rc}}
\def\cp{\mathtt{cp}}
\def\dout{\delta^{\mathrm{out}}}
\newcommand{\spn}{\mathop{{\tt span}}}
\newcommand{\supp}{\mathop{{\tt supp}}}
\newcommand{\tight}{\mathop{{\tt tight}}}
\newcommand{\OPT}{\ensuremath{\mathop{\mathrm{OPT}}}}
\def\F{\mathcal F}
\def\Z{\mathcal Z}
\def\X{\mathcal X}
\def\valid{\mathrm{valid}}
\def\A{\bf A}
\def\mss{\textsc{RatioGreedy}}
\newcounter{thm_locopt}
\newcounter{thm_saved}
\title{Hypergraphic LP Relaxations for Steiner Trees}
  \date{University of Waterloo
    \footnote{Supported by NSERC grant no.\ 288340 and by an Early Research
      Award. Email: (deepc, jochen, dagpritc @uwaterloo.ca)}}
  \author{Deeparnab Chakrabarty \and Jochen K\"onemann \and
    David Pritchard}
 \author{
    Deeparnab Chakrabarty
    \and
    Jochen K\"onemann
    \and
    David Pritchard
  }
 \institute{
    University of Waterloo, Waterloo, Ontario N2L 3G1, Canada\thanks{Supported by NSERC grant no.\ 288340
      and by an Early Research Award}
  }
\begin{document}

\maketitle

\begin{abstract}
  We investigate hypergraphic LP relaxations for the Steiner tree
  problem, primarily the partition LP relaxation introduced by
  K\"{o}ne\-mann et al.~[Math.\ Programming, 2009]. Specifically, we are
  interested in proving upper bounds on the integrality gap of this
  LP, and studying its relation to other linear relaxations.  Our
  results are the following.

  \smallskip

  \noindent{\bf Structural results:} We extend the technique of
  uncrossing, usually applied to families of sets, to families of
  partitions.  As a consequence we show that any basic feasible
  solution to the partition LP formulation has sparse support.
  Although the number of variables could be exponential, the number of
  positive variables is at most the number of terminals.

  \smallskip

  \noindent{\bf Relations with other relaxations:} We show the
  equivalence of the partition LP relaxation with other known
  hypergraphic relaxations.  We also show that these hypergraphic
  relaxations are equivalent to the well studied bidirected cut
  relaxation, if the instance is quasibipartite.

  \smallskip

  \noindent{\bf Integrality gap upper bounds:} We show an upper bound
  of $\sqrt{3} \doteq 1.729$ on the integrality gap of these
  hypergraph relaxations in general graphs. In the special case of
  uniformly quasibipartite instances, we show an improved upper bound
  of $73/60 \doteq 1.216$. By our equivalence theorem, the latter
  result implies an improved upper bound for the bidirected cut
  relaxation as well.
\end{abstract}

\section{Introduction}\label{sec:intro}

In the {\em Steiner tree} problem, we are given an undirected graph
$G=(V,E)$, non-negative costs $c_e$ for all edges $e \in E$, and a set
of {\em terminal} vertices $R \subseteq V$. The goal is to find a
minimum-cost tree $T$ spanning $R$, and possibly some {\em Steiner
  vertices} from $V\setminus R$. We can assume that the graph is
complete and that the costs induce a metric.  The problem takes a
central place in the theory of combinatorial optimization and has
numerous practical applications.  Since
the Steiner tree problem is \NP-hard\footnote{Chleb{\'i}k and
  Chleb{\'i}kov{\'a} show that no $(96/95-\epsilon)$-approximation
  algorithm can exist for any positive $\epsilon$ unless \PP=\NP
  ~\cite{CC02}.} we are interested in approximation algorithms for it.
The best published approximation algorithm for the Steiner tree
problem is due to Robins and Zelikovsky \cite{RZ05}, which for any
fixed $\epsilon > 0$, achieves a performance ratio of $1+\frac{\ln
  3}{2}+\epsilon \doteq 1.55$ in polynomial time; an improvement is
currently in press~\cite{BGRS10}, see also \prettyref{remark:byrka}.

In this paper, we study linear programming (LP) relaxations for the
Steiner tree problem, and their properties. Numerous such formulations
are known \myifthen{(e.g., see
\cite{An80,CR94a,CR94b,DB+01,Ed67,GM93,Pol03,PVd01,War98,Wo84}),}{(e.g., see \cite{Ed67,GM93,Pol03,PVd01,War98,Wo84}),} and
their study has led to impressive running time improvements for
integer programming based methods. Despite the significant body of
work in this area, none of the known relaxations is known to exhibit
an {\em integrality gap} provably smaller\myifthen{\footnote{Achieving an integrality gap
  of $2$ is relatively easy for most relaxations by showing that the
  minimum spanning tree restricted on the terminals is within a factor
  $2$ of the LP.}}{}
than $2$. The integrality gap of a relaxation is the maximum
ratio of the cost of integral and fractional optima, over all
instances. It is commonly regarded as a measure of strength of
a formulation. One of the contributions of this paper are
improved bounds on the integrality gap for a number of Steiner
tree LP relaxations.

A Steiner tree relaxation of particular interest is the {\em
  bidirected cut relaxation} \cite{Ed67,Wo84} (precise definitions
will follow in \prettyref{sec:lpss}). This relaxation has a flow formulation
using $O(|E||R|)$ variables and constraints, which is much more compact
than the other relaxations we study. 
Also, it is also widely believed to have an
integrality gap significantly smaller than $2$ (e.g., see
\cite{CDV08,RV99,Va00}). The largest lower bound on the integrality
gap known is $8/7$ (by Martin Skutella, reported in \cite{KPT09}), and
Chakrabarty et al. \cite{CDV08} prove an upper bound of $4/3$ in so
called {\em quasi-bipartite} instances (where Steiner vertices form an
independent set).

Another class of formulations are the so called {\em hypergraphic} LP
relaxations for the Steiner tree problem. These relaxations are
inspired by the observation that the minimum Steiner tree problem can
be encoded as a minimum cost hyper-spanning tree (see
\prettyref{sec:hyp}) of a certain hypergraph on the terminals.  They
are known to be stronger than the bidirected cut
relaxation~\cite{PVd03}, and it is therefore natural to try to use
them to get better approximation algorithms, by drawing on the large
corpus of known LP techniques.
In this paper, we focus on one hypergraphic LP in particular: the
{\em partition} LP of K\"{o}nemann et al.~\cite{KPT09}.

\subsection{Our Results and Techniques}

There are three classes of results in this paper: structural results,
equivalence results, and integrality gap upper bounds.

\smallskip
\noindent {\bf Structural results}, \prettyref{sec:up}: We extend the
powerful technique of {\em uncrossing}, traditionally applied to
families of sets, to families of {\em partitions}. Set uncrossing has
been very successful in obtaining exact and approximate algorithms for
a variety of problems (for instance, \cite{EG77,J98,SL07}). Using
partition uncrossing, we show that any basic feasible solution to the
partition LP has at most $(|R|-1)$ positive variables (even though it
can have an exponentially large number of variables and constraints).

\smallskip
\noindent {\bf Equivalence results}, \prettyref{sec:equiv}: In
addition to the partition LP, two other hypergraphic LPs have been
studied before: one based on \emph{subtour elimination} due to Warme
\cite{War98}, and a \emph{directed hypergraph relaxation} of Polzin
and Vahdati Daneshmand \cite{PVd03}; these two are known to be
equivalent \cite{PVd03}. We prove that in fact \emph{all three
  hypergraphic relaxations are equivalent} (that is, they have the same objective value
  for any Steiner tree instance).
\myifthen{We
give two proofs (for completeness and to demonstrate our new
techniques), one showing the equivalence of  the partition LP and the subtour LP via
partition uncrossing, and one showing the equivalence of the partition LP to the
directed LP via hypergraph orientation results of Frank et
al.~\cite{FKK03b}.}{}

We also show that, on {\em quasibipartite instances}, the
hypergraphic and the bidirected cut LP relaxations are
equivalent.
\myifthen {
  We find this surprising for the following reasons.
  Firstly, some instances are known where the hypergraph relaxations
  is {\em strictly} stronger than the bidirected cut
  relaxation~\cite{PVd03}. Secondly, the bidirected cut relaxations
  seems to resist uncrossing techniques; e.g.\ even in quasi-bipartite
  graphs extreme points for bidirected cut can have as many as
  $\Omega(|V|^2)$ positive
  variables~\cite[Sec.~4.9]{P09thesis}. Thirdly, the known approaches
  to exploiting the bidirected cut relaxation (mostly primal-dual and
  local search algorithms \cite{RV99,CDV08}) are very different from
  the combinatorial hypergraphic algorithms for the Steiner tree
  problem (almost all of them employ greedy strategies). In short,
  there is no qualitative similarity to suggest why the two
  relaxations should be equivalent!
}{
  This result is surprising since we are aware of no qualitative similarity to suggest why the two
  relaxations should be equivalent.
}
We believe a better understanding
  of the bidirected cut relaxation is important because it is central in theory \emph{and}
  practical for implementation.

\smallskip
\noindent {\bf Improved integrality gap upper bounds},
\prettyref{sec:gapbounds}: For \emph{uniformly quasibipartite
  instances} (quasibipartite instances where for each Steiner vertex,
all incident edges have the same cost), we show that the integrality
gap of the hypergraphic LP relaxations is upper bounded by $73/60
\doteq 1.216$.  Our proof uses the approximation algorithm of
Gr\"{o}pl et al.~\cite{GH+02} which achieves the same ratio with
respect to the (integral) optimum. We show, via a simple dual fitting
argument, that this ratio is also valid with respect to the LP
value. To the best of our knowledge this is the only nontrivial class
of instances where the best currently known approximation ratio and
integrality gap upper bound are the same.

For general graphs, we give simple upper bounds of $2\sqrt{2}-1 \doteq
1.83$ and $\sqrt{3} \doteq 1.729$ on the integrality gap of the
hypergraph relaxation. Call a graph {\em gainless} if the minimum {\em
  spanning} tree of the terminals is the optimal Steiner tree. To
obtain these integrality gap upper bounds, we use the following key
property of the hypergraphic relaxation which was implicit in
\cite{KPT09}: on gainless instances (instances where the optimum terminal
spanning tree is the optimal Steiner tree), the LP value equals the minimum
spanning tree and the integrality gap is 1. Such a theorem was known
for quasibipartite instances and the bidirected cut relaxation
(implicitly in \cite{RV99}, explicitly in \cite{CDV08}); we
extend techniques of \cite{CDV08} to obtain improved integrality gaps
on all instances.

\begin{remark}\label{remark:byrka}
  The recent independent work of Byrka et al.~\cite{BGRS10}, which gives
  an improved approximation for Steiner trees in general graphs, also shows an
  integrality gap bound of $1.55$ on the hypergraphic directed cut
  LP.
  This is stronger than our integrality gap bounds and was obtained prior
to the completion of our paper;
yet we include our bounds because they are obtained
  using fairly different methods which might be of independent
  interest in certain settings.

  \comment{ We have added these bounds to this paper despite the
    existence of a stronger result for two main reasons. First, the
    methods used in our proofs are independent of the work in
    \cite{BGRS10} and rely on properties of hypergraphic relaxations
    in gainless graphs derived in \cite{KPT09}, and the scaling
    technique of \cite{CDV08}. Secondly, the algorithms to prove these
    upper bounds are simple, and in addition have an ``online''
    flavour, which we believe is interesting in its own right.}

  \comment{ (see the e-print \cite{CKP09}. The result of the upper
    bound of $\sqrt{3}$ in general graphs was obtained afterwards. We
    have added this result to our paper despite the existence of a
    stronger result because of two reasons. We believe the theorem of
    \cite{KPT09} about hypergraphic relaxations in gainless graphs,
    and the scaling technique of \cite{CDV08} are simple and, as we
    show, quite useful in proving upper bounds. Secondly, the
    algorithm to prove the upper bound is a simple algorithm (along
    the lines of that in \cite{CDV08}) which we believe deserves merit
    in itself.  We stress here that although we were aware of the
    results of \cite{BGRS10}, our methods are completely independent.}

  The proof in \cite{BGRS10} can be easily modified to show an
  integrality gap upper bound of $1.28$ in quasibipartite
  instances. Then using our equivalence result, we get an integrality
  gap upper bound of $1.28$ for the bidirected cut relaxation on
  quasibipartite instances, improving the previous best of $4/3$.
\end{remark}

\subsection{Bidirected Cut and Hypergraphic Relaxations}\label{sec:lpss}
\subsubsection{The Bidirected Cut Relaxation}\label{sec:bcr}

The first bidirected LP was given by Edmonds \cite{Ed67} as an exact
formulation for the spanning tree problem. Wong \cite{Wo84} later
extended this to obtain the bidirected cut relaxation for the Steiner
tree problem, and gave a dual ascent heuristic based on the
relaxation.  For this relaxation, introduce two arcs $(u,v)$ and
$(v,u)$ for each edge $uv \in E$, and let both of their costs be
$c_{uv}$. Fix an arbitrary terminal $r \in R$ as the root.  Call a
subset $U \subseteq V$ {\em valid} if it contains a terminal but not
the root, and let $\valid(V)$ be the family of all valid sets.
Clearly, the in-tree rooted at $r$ (the directed tree with all
vertices but the root having out-degree exactly $1$) of a Steiner tree
$T$ must have at least one arc with tail in $U$ and head outside $U$,
for all valid $U$.  This leads to the bidirected cut relaxation
\eqref{eq:LP-B} (shown in \prettyref{fig:B} with dual) which has a
variable for each arc $a \in A$, and a constraint for every valid set
$U$.  Here and later, $\dout(U)$ denotes the set of arcs in $A$ whose
tail is in $U$ and whose head lies in $V\setminus U$. When there are
no Steiner vertices, Edmonds' work~\cite{Ed67} implies this relaxation
is exact.

\begin{figure}[h]
  \begin{minipage}{\lpbox} \begin{align}
      \min \sum_{a \in A} c_ax_a: \quad& x \in \R^A_{\ge 0}
\tag{\ensuremath{\mathcal{B}}}\label{eq:LP-B} \\
      \sum_{a\in \dout(U)} x_a \ge 1, \quad& \forall
       U \in {\valid(V)}  \myifthen{\label{eq:LP-B2}}{\notag}
    \end{align} \end{minipage}
  \hfill \vline \hfill
  \begin{minipage}{\lpbox} \begin{align}
      \max \sum_{U} z_U: \quad& z \in
\R^{\valid(V)}_{\ge 0}
\tag{\ensuremath{\mathcal{B}_D}}\label{eq:LP-BD} \\
      \sum_{U:a\in \dout(U)} z_U \le c_a, \quad&\forall a\in A   \myifthen{\label{eq:LP-BD1}}{\notag}
    \end{align}\end{minipage}
  \caption{\small The bidirected cut relaxation \eqref{eq:LP-B} and its dual \eqref{eq:LP-BD}.}\label{fig:B}
\end{figure}

Goemans \& Myung~\cite{GM93} made significant progress in
understanding the LP, by showing that the bidirected cut LP has the
same value independent of which terminal is chosen as the root, and by
showing that a whole ``catalogue" of very different-looking LPs also
has the same value; later Goemans~\cite{Go94} showed that if the graph
is series-parallel, the relaxation is exact. Rajagopalan and Vazirani
\cite{RV99} were the first to show a non-trivial integrality gap upper
bound of $3/2$ on quasibipartite graphs; this was subsequently
improved to $4/3$ by Chakrabarty et al.~\cite{CDV08}, who gave another
alternate formulation for \eqref{eq:LP-B}.

\subsubsection{Hypergraphic Relaxations}\label{sec:hyp}
Given a Steiner tree $T$, a \emph{full component} of $T$ is a maximal
subtree of $T$ all of whose leaves are terminals and all of whose
internal nodes are Steiner nodes. The edge set of any Steiner tree can
be partitioned in a {\em unique} way into full components by splitting
at internal terminals; see \prettyref{fig:decomp} for an example.

\begin{figure}[htb]
  \begin{center} \leavevmode
    \myifthen{\begin{pspicture}(0,0)(4,2.4)\psset{unit=0.8}}{\begin{pspicture}(0,0)(3.8,2)\psset{unit=0.65}}
      \terminal{0,2}{t1}
      \terminal{0.6,0.8}{t2}
      \steiner{0.6,1.3}{s1}
      \ncline{s1}{t1}
      \ncline{s1}{t2}
      \steiner{1.05,2.45}{s2}
      \steiner{1.45,0.65}{s3}
      \steiner{1.21,1.55}{s4}
      \ncline{s1}{s4}
      \terminal{1.8,2}{t3}
      \ncline{s4}{t3}
      \terminal{2.4,1}{t4}
      \ncline{s4}{t4}
      \terminal{2.5,0.1}{t5}
      \steiner{2.46,2.2}{s5}
      \ncline{t4}{t5}
      \steiner{3.6,1.45}{s6}
      \terminal{3.6,2.88}{t6}
      \ncline{t4}{s6}
      \ncline{s6}{t6}
       \terminal{4.7,0.9}{t8}
      \terminal{4.6,0.2}{t9}
      \ncline{s6}{t8}
      \ncline{t8}{t9}
      \steiner{4.2,0.56}{s7}
    \end{pspicture}
    \myifthen{\hfill}{}
    \myifthen{\begin{pspicture}(0,0)(4,2.4)\psset{unit=0.8}}{\begin{pspicture}(0,0)(3.8,2)\psset{unit=0.65}}
        \terminal{0,2}{t1}
        \terminal{0.6,0.8}{t2}
        \steiner{0.6,1.3}{s1}
        \ncline{s1}{t1}
        \ncline{s1}{t2}
        \steiner{1.21,1.55}{s4}
        \ncline{s1}{s4}
        \terminal{1.8,2}{t3}
        \ncline{s4}{t3}
        \terminal{2.4,1}{t4l}
        \terminal{2.8,1}{t4r}
        \terminal{2.6,0.75}{t4b}
        \ncline{s4}{t4l}
        \terminal{2.7,-0.15}{t5}
        \ncline{t4b}{t5}
        \steiner{4.0,1.45}{s6}
        \terminal{4.0,2.88}{t6}
        \ncline{t4r}{s6}
        \ncline{s6}{t6}
        \terminal{5.1,0.9}{t8l}
        \terminal{5.1,0.6}{t8r}
        \terminal{5.0,-0.1}{t9}
        \ncline{s6}{t8l}
        \ncline{t8r}{t9}
      \end{pspicture}
      \myifthen{\hfill}{}
      \myifthen{\begin{pspicture}(0,0)(4,2.4)\psset{unit=0.8}}{\begin{pspicture}(0,0)(3.8,2)\psset{unit=0.65}}
          \pspolygon[linearc=.2](-0.3,2.2)(0.5,0.6)(2.7,0.84)(1.9,2.2)
          \pspolygon[linearc=.2](2.1,0.8)(5,0.7)(3.6,3.3)
          \psellipse(2.45,0.55)(0.3,0.75)
          \psellipse(4.65,0.55)(0.3,0.75)
          \terminal{0,2}{t1}
          \terminal{0.6,0.8}{t2}
          \terminal{1.8,2}{t3}
          \terminal{2.4,1}{t4}
          \terminal{2.5,0.1}{t5}
          \terminal{3.6,2.88}{t6}
          \terminal{4.7,0.9}{t8}
          \terminal{4.6,0.2}{t9}
        \end{pspicture}
      \end{center}
      \caption{\small Black nodes are terminals and white nodes are
        Steiner nodes. Left: a Steiner tree for this instance.
        Middle: the Steiner tree's edges are partitioned into full
        components; there are four full components. Right: the
        hyperedges corresponding to these full
        components.} \label{fig:decomp} \end{figure}

Let $\K$ be the set of all nonempty subsets of terminals
(\emph{hyperedges}). We associate with each $K \in \K$ a fixed full
component spanning the terminals in $K$, and let $C_K$ be its
cost\footnote{We choose the minimum cost full component if there are
  many. If there is no full component spanning $K$, we let $C_K$ be
  infinity. Such a minimum cost component can be found in polynomial
  time, if $|K|$ is a constant.}.  The problem of finding a
minimum-cost Steiner tree spanning $R$ now reduces to that of finding
a minimum-cost hyper-spanning tree in the hypergraph $(R,\K)$.

Spanning trees in (normal) graphs are well understood and there are
many different exact LP relaxations for this problem.  These exact LP
relaxations for spanning trees in graphs inspire the {\em hypergraphic
  relaxations} for the Steiner tree problem. Such relaxations have a
variable $x_K$ for every\footnote{Observe that there could be
  exponentially many hyperedges.  This computational issue is
  circumvented by considering hyperedges of size at most $r$, for some
  constant $r$. By a result of Borchers and Du~\cite{BD97}, this leads
  to only a $(1+\Theta(1/\log r))$ factor increase in the optimal
  Steiner tree cost.} $K\in \K$, and the different relaxations are
based on the constraints used to capture a hyper-spanning tree, just
as constraints on edges are used to capture a spanning tree in a
graph.

The oldest hypergraphic LP relaxation is the subtour LP introduced by
Warme \cite{War98} which is inspired by Edmonds' subtour elimination
LP relaxation \cite{Ed71} for the spanning tree polytope. This LP
relaxation uses the fact that there are no hypercycles in a
hyper-spanning tree, and that it is spanning. More formally, let
$\rho(X) := \max(0,|X|-1)$ be the {\em rank} of a set $X$ of
vertices. Then a sub-hypergraph $(R,\K')$ is a hyper-spanning tree iff
$\sum_{K \in \K'} \rho(K)=\rho(R)$ and $\sum_{K \in \K'} \rho (K \cap
S) \leq \rho(S)$ for every subset $S$ of $R$. The corresponding LP
relaxation, denoted below as \eqref{eq:LP-S}, is called the {\em subtour elimination} LP relaxation.

\begin{align}
  \min \Big\{\sum_{K \in \K} C_Kx_K: ~ & x \in \R^\K_{\ge 0}, ~
  \sum_{K \in \K} x_K\rho(K) = \rho(R), \tag{\ensuremath{\mathcal{S}}}\label{eq:LP-S} \\
  & \sum_{K \in \K} x_K\rho(K \cap S) \leq \rho(S), ~\forall S
  \subset R \Big\} \notag
\end{align}

Warme showed that if the maximum hyperedge size $r$ is bounded by a
constant, the LP can be solved in polynomial time.

\def\Delin{\Delta^{\mbox{\scriptsize {\ensuremath{\mathrm{in}}}}}}
\def\Delout{\Delta^{\mbox{\scriptsize {\ensuremath{\mathrm{out}}}}}}

The next hypergraphic LP introduced for Steiner tree was a directed
hypergraph formulation \eqref{eq:LP-PUDir}, introduced by Polzin and
Vahdati Daneshmand \cite{PVd03}, and inspired by the bidirected cut
relaxation. Given a full component $K$ and a terminal $i\in K$, let
$K^i$ denote the arborescence obtained by directing all the edges of
$K$ towards $i$. Think of this as directing the hyperedge $K$ towards
$i$ to get the directed hyperedge $K^i$. Vertex $i$ is called the
\emph{head} of $K^i$ while the terminals in $K\setminus i$ are the
\emph{tails} of $K$.  The cost of each directed hyperedge $K^i$ is
the cost of the corresponding undirected hyperedge $K$.  In the
directed hypergraph formulation, there is a variable $x_{K^i}$ for
every directed hyperedge $K^i$.  As in the bidirected cut relaxation,
there is a vertex $r\in R$ which is a root, and as described above, a
subset $U\subseteq R$ of terminals is valid if it does not contain the
root but contains at least one vertex in $R$. We let $\Delout(U)$ be
the set of directed full components coming out of $U$, that is all
$K^i$ such that $U\cap K\neq \varnothing$ but $i\notin U$.  Let
$\overrightarrow{\K}$ be the set of all directed hyperedges. We show
the directed hypergraph relaxation and its dual in Figure \ref{fig:PUDir}.

\begin{figure}[h]
  \begin{minipage}{\lpbox}
    \begin{align} \min \Big\{\sum_{K \in \K,i\in
        K} C_{K}x_{K^i}: & \,\, x \in \R^{\overrightarrow{\K}}_{\ge 0}
      \tag{\ensuremath{\mathcal{D}}}\label{eq:LP-PUDir} \\
      \sum_{K^i \in \Delout(U)} x_{K^i} \geq 1, \quad& \forall \mbox{
        valid }~ U\subseteq R
       \Big\} \myifthen{\label{eq:LP-PUDir2}}{\notag}
    \end{align} \end{minipage}
  \hfill \vline \hfill
  \begin{minipage}{\lpbox} \begin{align}
      \max \Big\{\sum_{U} z_U:~~~~ \quad \myifthen{}{\hspace{1cm}}&\myifthen{}{\hspace{-1cm}}z \in
      \R^{\textrm{valid}(R)}_{\ge 0}\!\!\!\! \tag{\ensuremath{\mathcal{D}_D}}\label{eq:LP-A} \\
      \sum_{U:K \cap U \neq \varnothing, i \notin U} z_U \leq C_K,
      \quad&\forall K\in \K, \myifthen{\forall}{} i\in K \Big\}\myifthen{}{\!\!} \myifthen{\label{eq:LP-A2}}{\notag}
    \end{align}\end{minipage}
  \caption{\small The directed hypergraph relaxation \eqref{eq:LP-PUDir} and its
dual \eqref{eq:LP-A}.}\label{fig:PUDir} \end{figure}

Polzin \& Vahdati Daneshmand~\cite{PVd03} showed that
$\OPT\eqref{eq:LP-PUDir}=\OPT\eqref{eq:LP-S}$. Moreover they observed
that this directed hypergraphic relaxation strengthens the bidirected
cut relaxation.
\begin{lemma}[\cite{PVd03}]\label{lem:simple}
  For any instance, $\OPT\eqref{eq:LP-PUDir} \ge \OPT\eqref{eq:LP-B}$.
  \myifthen{}{There are instances for which this inequality is strict.}
\end{lemma}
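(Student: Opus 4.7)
The plan is to show that any feasible solution $x$ for \eqref{eq:LP-PUDir} can be ``projected'' to a feasible solution $y$ for \eqref{eq:LP-B} of the same cost; since both LPs are minimizations this immediately gives $\OPT\eqref{eq:LP-B} \le \OPT\eqref{eq:LP-PUDir}$. For a directed hyperedge $K^i$, write $E(K^i)$ for the set of arcs obtained by orienting the edges of the fixed full component associated with $K$ toward $i$; this is an arborescence whose leaves are the terminals of $K$ and whose sink is $i$. Given $x$, define $y \in \R^A_{\ge 0}$ by
\[ y_a \;:=\; \sum_{K^i \,:\, a \in E(K^i)} x_{K^i} \qquad \text{for each } a \in A. \]

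First I would verify the cost identity. Swapping the order of summation, and using that the underlying edges of the arborescence $E(K^i)$ are exactly those of the full component of $K$, whose total cost is $C_K$,
\[ \sum_{a \in A} c_a \, y_a \;=\; \sum_{K \in \K,\, i \in K} x_{K^i} \sum_{a \in E(K^i)} c_a \;=\; \sum_{K,i} C_K \, x_{K^i}. \]
Next I would check feasibility of $y$. Given $U \in \valid(V)$, set $U' := U \cap R$; since $U$ contains a terminal and avoids the root $r$, $U'$ is a valid set in the sense of \eqref{eq:LP-PUDir}. For any $K^i \in \Delout(U')$ there is a terminal $t \in K \cap U' \subseteq U$, while the head $i$ is itself a terminal outside $U'$, so $i \notin U$ too. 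The arborescence $E(K^i)$ contains a unique directed $t$-to-$i$ path that begins inside $U$ and ends outside, so at least one of its arcs lies in $\dout(U)$. Summing the contributions,
\[ \sum_{a \in \dout(U)} y_a \;\ge\; \sum_{K^i \in \Delout(U')} x_{K^i} \;\ge\; 1, \]
the last step by feasibility of $x$ for \eqref{eq:LP-PUDir}.

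The only real subtlety is the mismatch between the two LPs' notions of ``valid set'': in \eqref{eq:LP-B} a valid $U$ may contain Steiner vertices, whereas in \eqref{eq:LP-PUDir} valid sets are subsets of $R$. The projection $U \mapsto U \cap R$ resolves this cleanly because the sink and leaves of every arborescence $E(K^i)$ are terminals, so a directed $t$-to-$i$ path must still cross out of $U$ whenever $t \in U \cap R$ and $i \notin U \cap R$, regardless of how $U$ treats the Steiner vertices lying on that path. This is the one point to be careful about; once it is spelled out, the argument is essentially a one-line projection.
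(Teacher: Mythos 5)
Your proposal is correct and matches the paper's own proof sketch essentially line for line: both define $y_a$ (the paper's $x'_a$) by summing $x_{K^i}$ over directed full components whose arborescence contains $a$, and both observe that every directed full component leaving $R\cap U$ contributes at least one arc to $\dout(U)$. You simply fill in the ``it is not hard to see'' step (the $t$-to-$i$ path crossing out of $U$) and the cost identity, both of which the paper leaves implicit.
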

\myifthen{
  \begin{proof}[Proof sketch.]
    It suffices to show that any solution $x$ of \eqref{eq:LP-PUDir} can
    be converted to a feasible solution $x'$ of \eqref{eq:LP-B} of the
    same cost. For each arc $a$, let $x'_a$ be the sum of $x_{K^i}$ over
    all directed full components $K^i$ that (when viewed as an
    arborescence) contain $a$. Now for any valid subset $U$ of $V$, it
    is not hard to see that every directed full component leaving $R
    \cap U$ has at least one arc leaving $U$, hence $\sum_{a\in
      \dout(U)} {x'}_a \ge \sum_{K^i \in \Delout(R \cap U)} x_{K^i} \ge
    1$ and $x'$ is feasible as needed.
  \end{proof}
  \noindent See \cite{PVd03} for an example where the strict inequality
  $\OPT\eqref{eq:LP-PUDir} > \OPT\eqref{eq:LP-B}$ holds.
}{}

K\"onemann et al.~\cite{KPT09}, inspired by the work of Chopra
\cite{Cho89}, described a partition-based relaxation which captures
that given any partition of the terminals, any hyper-spanning tree
must have sufficiently many ``cross hyperedges". More formally, a
partition, $\pi$, is a collection of pairwise disjoint nonempty
terminal sets $(\pi_1, \ldots, \pi_q)$ whose union equals $R$. The
number of {\em parts} $q$ of $\pi$ is referred to as the partition's
{\em rank} and denoted as $r(\pi)$.  Let $\Pi_R$ be the set of all
partitions of $R$. Given a partition $\pi = \{\pi_1, \ldots, \pi_q\}$,
define the {\em rank contribution} $\rc_K^\pi$ of hyperedge $K \in \K$
for $\pi$ as the rank reduction of $\pi$ obtained by merging the parts
of $\pi$ that are touched by $K$; i.e., $\rc_K^{\pi} := |\{i \,:\,
K\cap \pi_i \neq \varnothing \}| - 1.$ Then a hyper-spanning tree
$(R,\K')$ must satisfy $\sum_{K\in \K'} \rc^\pi_K \ge r(\pi) - 1$.
The partition based LP of \cite{KPT09} and its dual are given in
\prettyref{fig:PU}.

\begin{figure}[h]
  \begin{minipage}{\lpbox} \begin{align}
      \min \Big\{\sum_{K \in \K} C_Kx_K: \quad& x \in \R^\K_{\ge 0}
\tag{\ensuremath{\mathcal{P}}}\label{eq:LP-PU} \myifthen{\!\!\!}{} \\
      \sum_{K \in \K} x_K\rc_K^\pi \geq r(\pi)-1, \quad& \forall \pi
\in \Pi_R \Big\} \myifthen{\label{eq:LP-PU2}}{\notag}
    \end{align} \end{minipage}
  \hfill \vline \hfill
  \begin{minipage}{\lpbox} \begin{align}
      \max \Big\{\sum_{\pi} (r(\pi)-1)\myifthen{\cdot}{}y_\pi: \quad& y \in
\R^{\Pi_R}_{\ge 0} \tag{\ensuremath{\mathcal{P}_D}}\label{eq:LP-PUD} \\
      \sum_{\pi \in \Pi_R} y_\pi\rc_K^\pi\leq C_K,
\quad&\forall K\in \K \Big\} \myifthen{\label{eq:LP-PUD3}}{\notag}
\end{align}\end{minipage}
\caption{\small The unbounded partition relaxation \eqref{eq:LP-PU} and its
  dual \eqref{eq:LP-PUD}.}\label{fig:PU}
\end{figure}

The feasible region of \eqref{eq:LP-PU} is \emph{unbounded}, since if
$x$ is a feasible solution for \eqref{eq:LP-PU} then so is any $x' \ge
x$.  We obtain a \emph{bounded} partition LP relaxation, denoted by
\eqref{eq:LP-P2} and shown below, by adding a valid equality
constraint to the LP.

\begin{align}\tag{$\mathcal P'$}\label{eq:LP-P2}
  \min \Big\{\sum_{K \in \K} C_Kx_K: x \in \eqref{eq:LP-PU}, \sum_{K \in \K} x_K (|K|-1) = |R|-1 \Big\}
\end{align}

\myifthen{
\subsubsection{Discussion of Computational Issues}\label{sec:discussion}
The bidirected cut relaxation is very attractive from a perspective of
computational implementation.  Although the formulation given in
\prettyref{sec:bcr} has an exponential number of constraints, an equivalent
compact flow formulation with $O(|E||R|)$ variables and constraints is
well-known.

What is known regarding solving the hypergraphic LPs? They are good
enough to get theoretical results but less attractive in practice, as
we now explain. Using a separation oracle, Warme showed~\cite{War98}
that for any chosen family $\K$ of full components, the subtour LP can
be optimized in time $\textrm{poly}(|V|,|\K|)$. For the common
\emph{$r$-restricted setting} of $\K$ to be all possible full
components of size at most $r$ for constant $r$, we have $\K \le
\tbinom{|R|}{r}$. This is polynomial for any fixed $r$, and the
relative error caused by this choice of $r$ is at most the
\emph{$r$-Steiner ratio} $\rho_r = 1+\Theta(1/\log
r)$~\cite{BD97}. But this is not so practical: to get relative error
$1+\epsilon$, we apply the ellipsoid algorithm to an LP with
$|R|^{\exp(\Theta(1/\epsilon))}$ variables!

In the \emph{unrestricted setting} where $\K$ contains all possible
full components without regard to size, it is an open problem to
optimize any of the hypergraphic LPs exactly in polynomial time. We
make some progress here: in quasibipartite instances, the proof method
of our hypergraphic-bidirected equivalence theorem
(\prettyref{sec:lifting}) implies that one can exactly compute the LP
optimal value, and a dual optimal solution. Regarding this open
problem, we note that the $r$-restricted LP optimum is at most
$\rho_r$ times the unrestricted optimum, and wonder whether there
might be some advantage gained by using the fact that the hypergraphic
LPs have sparse optima.

We reiterate our feeling that it is important to obtain practical
algorithms and understand the bidirected cut relaxation as well as
possible, e.g.\ we know now that it has an integrality gap of at most
1.28 on quasi-bipartite instances, but obtaining such a bound {\em
  directly} could give new insights.
}{}
\myifthen{
\subsubsection{Other Related Work}
In the special case of $r$-restricted instances for $r=3$, the partition hypergraphic LP is essentially a special case of an LP introduced by Vande Vate~\cite{Vate92} for matroid matching, which is totally dual half-integral~\cite{GP08}. Additional facts about the hypergraphic relaxations appear in the thesis of the third author~\cite{P09thesis}, e.g.~a combinatorial ``gainless tree formulation" for the LPs similar in flavour to the ``1-tree bound" for the Held-Karp TSP relaxation.
}{}
\section{Uncrossing Partitions}
\label{sec:up} \label{sec:setuncrossing}
In this section we are interested in {\em uncrossing} a minimal set of {\em tight partitions}
that uniquely define a basic feasible solution to \eqref{eq:LP-PU}. We start with a few preliminaries necessary to state our result formally.
\subsection{Preliminaries}
We introduce some needed well-known properties of partitions that arise in combinatorial lattice theory~\cite{Stan86}.
\begin{definition}
 We say that a partition $\pi'$ \emph{refines} another partition $\pi$ if each part of $\pi'$ is contained in some
part of $\pi$. We also say $\pi$ {\em coarsens} $\pi'$. Two partitions \emph{cross} if neither refines the other.
A family of partitions forms a \emph{chain} if no pair of them cross. Equivalently, a chain is any family $\pi^1, \pi^2, \dotsc, \pi^t$ such that $\pi^i$ refines $\pi^{i-1}$ for each $1 < i \leq t$.
\end{definition}

The family $\Pi_R$ of all partitions of $R$ forms a \emph{lattice}
with a \emph{meet operator} $\wedge : \Pi_R^2 \to \Pi_R$ and a
\emph{join operator} $\vee : \Pi_R^2 \to \Pi_R$. The meet $\pi \wedge
\pi'$ is the coarsest partition that refines both $\pi$ and $\pi'$,
and the join $\pi \vee \pi'$ is the most refined partition that
coarsens both $\pi$ and $\pi'$. See \prettyref{fig:partitions} for an
illustration.

\begin{definition}[Meet of partitions]\label{definition:meet}Let the parts of $\pi$ be $\pi_1, \dotsc, \pi_t$ and let the
parts of $\pi'$ be $\pi'_1, \dotsc, \pi'_u$. Then the parts of the
meet $\pi \wedge \pi'$ are the nonempty intersections of parts of
$\pi$ with parts of $\pi'$,
$$\pi \wedge \pi' = \{\pi_i \cap \pi'_j \mid 1 \leq i \leq t, 1 \leq
j \leq u \textrm{ and } \pi_i \cap \pi'_j \neq \varnothing\}.$$
\end{definition}

\noindent
Given a graph $G$ and a partition $\pi$ of $V(G)$, we say that $G$
\emph{induces} $\pi$ if the parts of $\pi$ are the vertex sets of
the connected components of $G$.
\begin{definition}[Join of partitions]\label{definition:join}
Let $(R, E)$ be a graph that induces $\pi$, and let $(R, E')$ be a
graph that induces $\pi'$. Then the graph $(R, E \cup E')$ induces
$\pi \vee \pi'$.
\end{definition}

\comment{\begin{pspicture}(-2,-2)(2,2)\psset{unit=0.66}
\terminal{0,1}{t1}\terminal{1,0}{t2}\terminal{-1,0}{t3}\terminal{0,-1}{t4}
\terminal{1,2}{t1a}\terminal{2,1}{t2a}\terminal{-1,2}{t3a}\terminal{-2,1}{t4a}
\terminal{1,-2}{t1b}\terminal{2,-1}{t2b}\terminal{-1,-2}{t3b}\terminal{-2,-1}{t4b}
\psccurve(-1,-0.4)(-2.4,1)(-1,2.4)(0.4,1)
\psccurve(-2.4,-0.8)(-1,-0.5)(0.4,-0.8)(1.4,-2.2)(-1.4,-2.2)
\psccurve(0.8,2.4)(0.5,1)(0.8,-0.4)(2.2,-1.4)(2.2,1.4)
\psset{linestyle=dashed}
\psccurve(-1,-0.2)(-2.2,1)(-1,2.2)(0.2,1)
\pscircle(1,-2){0.3}
\pscircle(2,-1){0.3}
\pscircle(1,2){0.3}
\psccurve(0.8,-0.1)(1.2,0.8)(2.2,1.1)(1.8,0.2)
\psccurve(-2.2,-0.9)(-1,-0.6)(0.2,-0.9)(-1,-2.2)
\rput*(0,-3){(a): The dashed partition refines the solid one.}
\end{pspicture}}

\begin{figure}[htb]
\begin{center} \leavevmode
\begin{pspicture}(-2,-2)(2,2)\psset{unit=0.66}
\terminal{0,1}{t1}\terminal{1,0}{t2}\terminal{-1,0}{t3}\terminal{0,-1}{t4}
\terminal{1,2}{t1a}\terminal{2,1}{t2a}\terminal{-1,2}{t3a}\terminal{-2,1}{t4a}
\terminal{1,-2}{t1b}\terminal{2,-1}{t2b}\terminal{-1,-2}{t3b}\terminal{-2,-1}{t4b}
\psccurve(-2.2,1)(-0.9,-0.2)(-0.6,1)(-0.9,2.2)
\psccurve(-2.2,-0.8)(-1.7,-1.7)(-0.8,-2.2)(-1.3,-1.3)
\psccurve(-0.2,0.8)(0.2,1.8)(1.2,2.2)(0.8,1.2)
\psccurve(-0.2,-1.2)(2.2,1.2)(2.2,-1.2)(1,-2.2)
\psset{linestyle=dashed}
\psccurve(2.2,0.8)(1.8,1.8)(0.8,2.2)(1.2,1.2)
\psccurve(1.2,-2.2)(0.8,-1.2)(-0.2,-0.8)(0.2,-1.8)
\psccurve(-2,-1.3)(-2.4,0)(-2,1.3)(-1.6,0)
\psccurve(-1,-2.3)(-1.5,0)(-1,2.3)(-0.5,0)
\psccurve(-0.2,1.2)(1.3,0.3)(2.2,-1.2)(0.7,-0.3) \rput*(0,-3){(a)}
\end{pspicture}\hfill
\begin{pspicture}(-2,-2)(2,2)\psset{unit=0.66}
\pscircle(-2,-1){0.3}
\pscircle(-1,-2){0.3}
\pscircle(-2,1){0.3}
\pscircle(0,1){0.3}
\pscircle(1,2){0.3}
\pscircle(2,1){0.3}
\psccurve(-1,-0.3)(-1.4,1)(-1,2.3)(-0.6,1)
\psccurve(1.2,-2.2)(0.8,-1.2)(-0.2,-0.8)(0.2,-1.8)
\psccurve(2.2,-1.2)(1.8,-0.2)(0.8,0.2)(1.2,-0.8)
\terminal{0,1}{t1}\terminal{1,0}{t2}\terminal{-1,0}{t3}\terminal{0,-1}{t4}
\terminal{1,2}{t1a}\terminal{2,1}{t2a}\terminal{-1,2}{t3a}\terminal{-2,1}{t4a}
\terminal{1,-2}{t1b}\terminal{2,-1}{t2b}\terminal{-1,-2}{t3b}\terminal{-2,-1}{t4b}
\rput*(0,-3){(b)}
\end{pspicture}\hfill
\begin{pspicture}(-2,-2)(2,2)\psset{unit=0.66}
\psccurve(-2.2,1)(-2.2,-1)(-1,-2.2)(-0.45,0)(-1,2.2)
\psccurve(-0.2,1)(-0.3,0)(-0.2,-1)(1,-2.2)(2.1,-1.1)(2.1,1.1)(1,2.2)
\terminal{0,1}{t1}\terminal{1,0}{t2}\terminal{-1,0}{t3}\terminal{0,-1}{t4}
\terminal{1,2}{t1a}\terminal{2,1}{t2a}\terminal{-1,2}{t3a}\terminal{-2,1}{t4a}
\terminal{1,-2}{t1b}\terminal{2,-1}{t2b}\terminal{-1,-2}{t3b}\terminal{-2,-1}{t4b}
\rput*(0,-3){(c)}
\end{pspicture}
\end{center}
\caption{\small Illustrations of some partitions. The black dots are the
terminal set $R$. (a): two partitions; neither refines the other. (b): the meet of the partitions from (a). (c): the join of the partitions from (a).} \label{fig:partitions}
\end{figure}
\def\meet{\wedge}
\def\join{\vee}
\noindent

Given a feasible solution $x$ to \eqref{eq:LP-PU}, a partition $\pi$
is \emph{tight} if $\sum_{K\in \K}x_K\rc^\pi_K = r(\pi) - 1$.  Let
$\tight(x)$ be the set of all tight partitions. We are interested in
{\em uncrossing} this set of partitions. More precisely, we wish to
find a cross-free set of partitions (chain) which uniquely defines
$x$.  One way would be to prove the following.

\begin{ppty}\label{ppty:quote}
  If two crossing partitions $\pi$ and $\pi'$ are in $\tight(x)$, then
  so are $\pi \meet \pi'$ and $\pi \join \pi'$.
\end{ppty}

This type of property is already well-used~\myifthen{\cite{CNF85,EG77,J98,SL07}}{\cite{EG77,J98,SL07}}
for sets (with meets and joins replaced by unions and intersections
respectively), and the standard approach is the following. The typical proof
considers the constraints in \eqref{eq:LP-PU} corresponding to $\pi$
and $\pi'$ and uses the ``supermodularity'' of the RHS and the
``submodularity'' of the coefficients in the LHS. In particular, if
the following is true,
\begin{align}
  \forall \pi, \pi':~ r(\pi \vee \pi')+r(\pi \wedge \pi') &~~~\geq~~~ r(\pi) + r(\pi') \label{eq:naive1} \\
  \forall K, \pi, \pi': ~ \rc_K^\pi + \rc_K^{\pi'} &~~~\ge~~~
  \rc_K^{\pi \vee \pi'} + \rc_K^{\pi \wedge \pi'} \label{eq:naive}
\end{align}
then \prettyref{ppty:quote} can be proved easily by writing a
string of inequalities.\footnote{In this hypothetical scenario we get
  $r(\pi) + r(\pi') - 2 = \sum_{K} x_K(\rc_K^{\pi}+\rc_K^{\pi'}) \ge
  \sum_{K} x_K(\rc_K^{\pi\meet\pi'}+\rc_K^{\pi\join\pi'}) \ge
  r(\pi\meet\pi')+r(\pi\join\pi') - 2 \ge r(\pi)+r(\pi')-2$; thus the
  inequalities hold with equality, and the middle one shows $\pi
  \wedge \pi'$ and $\pi \vee \pi'$ are tight.}

Inequality \eqref{eq:naive1} is indeed true (see, for example,
\cite{Stan86}), but unfortunately inequality \eqref{eq:naive} is not
true in general, as the following example shows.
\begin{example}\label{example:nonsubm}Let $R = \{1, 2, 3, 4\}$, $\pi =
  \{\{1, 2\}, \{3, 4\}\}$ and $\pi' = \{\{1, 3\}, \{2, 4\}\}.$ Let $K$
  denote the full component $\{1, 2, 3, 4\}$. Then $\rc_K^\pi +
  \rc_K^{\pi'} = 1 + 1 < 0 + 3 = \rc_K^{\pi \vee \pi'} + \rc_K^{\pi
    \wedge \pi'}.$
\end{example}

Nevertheless, \prettyref{ppty:quote} is true; its correct proof is
given in \myifthen{\prettyref{sec:pui}}{the full version of this paper
  \cite{CKP09}} and depends on a simple though subtle extension of the
usual approach. The crux of the insight needed to fix the approach is
not to consider \emph{pairs} of constraints in \eqref{eq:LP-PU}, but
rather multi-sets which may contain more than two inequalities. Using
this uncrossing result, we can prove the following theorem (details
are given in \myifthen{\prettyref{sec:pbu}}{\cite{CKP09}}). Here, we let
$\underline{\pi}$ denote $\{R\}$, the unique partition with (minimal)
rank 1; later we use $\overline{\pi}$ to denote $\{\{r\}\mid r\in
R\}$, the unique partition with (maximal) rank $|R|$.

\begin{theorem}\label{thm:1}
  Let $x^*$ be a basic feasible solution of \eqref{eq:LP-PU}, and let
  \C\ be an inclusion-wise maximal chain in $\tight(x^*) \bs \underline{\pi}$. Then
  $x^*$ is uniquely defined
  by
  \begin{equation} \label{eq:chain}
    \sum_{K \in \K} \rc_K^\pi x^*_K = r(\pi)-1 \quad \forall \pi \in \C.
  \end{equation}
\end{theorem}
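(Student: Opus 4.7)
The plan is to adapt the standard uncrossing framework, modified to handle the multi-set form of \prettyref{ppty:quote} required in the partition setting. Let $S := \supp(x^*)$. Since $x^*$ is a basic feasible solution of \eqref{eq:LP-PU}, there exists a family $\F \subseteq \tight(x^*)$ such that the vectors $v^\pi := (\rc_K^\pi)_{K \in S}$ for $\pi \in \F$ form a basis of $\R^S$, so in particular $|\F| = |S|$. Since $v^{\underline\pi} = 0$, we may take $\F \subseteq \tight(x^*) \setminus \underline{\pi}$. To prove the theorem, it suffices to show that the vectors $\{v^\pi : \pi \in \C\}$ span $\R^S$: given any such chain, linear independence will automatically follow from the dimension count, and any solution to \eqref{eq:chain} must agree with $x^*$ on $S$ and equal $0$ off $S$ (the latter using that \eqref{eq:LP-PU} forces non-support variables to stay at $0$ under the tight constraints of a basis). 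So the theorem reduces to the claim: $\spn\{v^\pi : \pi \in \C\} \supseteq \{v^\sigma : \sigma \in \tight(x^*)\setminus\underline{\pi}\}$.

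For the spanning claim, I would proceed by contradiction. Suppose some $\pi^* \in \tight(x^*) \setminus \underline{\pi}$ has $v^{\pi^*}$ outside the span of $\C$'s vectors. Among all such counterexamples, choose $\pi^*$ that crosses the fewest members of $\C$, breaking further ties with an auxiliary potential (e.g.\ $\sum_P |P|^2$ over the parts $P$ of $\pi^*$) used to control the behavior of the uncrossing step. By maximality of $\C$, $\pi^*$ must cross at least one $\pi \in \C$, for otherwise $\C \cup \{\pi^*\}$ would be a strictly larger chain in $\tight(x^*)\setminus\underline{\pi}$.

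Now I would apply the multi-set form of \prettyref{ppty:quote} to the pair $\pi, \pi^*$ to obtain tight partitions $\sigma_1, \dotsc, \sigma_m \in \tight(x^*)$ satisfying $\sum_i \rc_K^{\sigma_i} = \rc_K^\pi + \rc_K^{\pi^*}$ for every $K \in \K$ (and $\sum_i (r(\sigma_i)-1) = (r(\pi)-1) + (r(\pi^*)-1)$), where each $\sigma_i$ is built from $\pi \wedge \pi^*$ and $\pi \vee \pi^*$ and therefore, by the standard lattice observation that meets and joins do not create new crossings with a chain, crosses strictly fewer members of $\C$ than $\pi^*$ does. Summing the $v^{\sigma_i}$ on coordinates in $S$ yields $v^\pi + v^{\pi^*}$. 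Since $v^\pi \in \spn\{v^\tau : \tau \in \C\}$ while $v^{\pi^*}$ is not, the sum is also outside this span, and hence at least one $v^{\sigma_i}$ lies outside the span. That $\sigma_i$ contradicts the minimal choice of $\pi^*$, completing the proof.

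The main obstacle is exactly the failure of inequality \eqref{eq:naive}, illustrated by \prettyref{example:nonsubm}: one cannot simply substitute $\pi^*$ by the pair $\{\pi \wedge \pi^*, \pi \vee \pi^*\}$ and preserve the rank-contribution vector, unlike in set uncrossing where $\chi^A + \chi^B = \chi^{A\cap B} + \chi^{A\cup B}$ holds identically. Overcoming this requires the multi-set refinement of \prettyref{ppty:quote}, which replaces $\{\pi, \pi^*\}$ by a possibly larger multi-set of tight partitions whose rank-contribution vectors sum correctly; once this lemma (proved in the preceding uncrossing section) is available, the argument above goes through in a manner fully parallel to the classical set-uncrossing proofs in \cite{EG77,J98,SL07}.
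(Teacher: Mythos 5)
Your overall strategy is the same as the paper's: argue by contradiction, choose an extremal counterexample $\pi^*$ whose row $v^{\pi^*}$ is outside $\spn\{v^\tau : \tau\in\C\}$, uncross it against a chain element it crosses to produce a linear combination of rows, and conclude that one of the produced rows is outside the span yet ``smaller,'' contradicting extremality. However, there are two genuine errors in how you apply the uncrossing step, and they matter.

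First, the decomposition $\sum_i \rc_K^{\sigma_i}=\rc_K^{\pi}+\rc_K^{\pi^*}$ is not what the partition uncrossing lemma gives. The correct identity (the reduced form of \prettyref{claim:talon}, \prettyref{eq:lincomb}) in your notation reads
\[
\row(\pi) + |\cp_{\pi^*}(\pi)|\cdot\row(\pi^*) \;=\; \row(\pi\wedge\pi^*) + \sum_{\pi_i\in\cp_{\pi^*}(\pi)}\row\bigl(m(\pi^*,\pi_i)\bigr),
\]
so the counterexample's row enters with a coefficient $|\cp_{\pi^*}(\pi)|$, and the uncrossed rows are the meet $\pi\wedge\pi^*$ together with the \emph{merged} partitions $m(\pi^*,\pi_i)$ --- not the join. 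The ``some $\sigma_i$ is outside the span'' step still goes through since the coefficient is positive, but the proposal misdescribes the identity it relies on.

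Second, and more seriously, the claim that each uncrossed partition ``crosses strictly fewer members of $\C$ than $\pi^*$ does'' is false for the merged partitions. They coarsen $\pi^*$ and refine $\pi\vee\pi^*$, so by the no-new-crossings observation they cross a \emph{subset} of the chain elements $\pi^*$ crosses, but the containment need not be strict: take $\pi=\{\{1,2\},\{3,4\},\{5,6\}\}$, $\pi^*=\{\{1,3\},\{2,5\},\{4,6\}\}$ and $\pi_1=\{1,2\}$. Then $m(\pi^*,\pi_1)=\{\{1,2,3,5\},\{4,6\}\}$ still crosses $\pi$. So the primary measure ``number of crossings with $\C$'' does not decrease for the merged partitions, and your induction stalls. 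Your auxiliary potential $\sum_P|P|^2$ could rescue this if you specified it to be \emph{maximized} (merging strictly increases it, and the no-new-crossings fact gives a weak decrease in the primary), but you neither state the direction nor indicate that you realize the merged partitions are the case where the tiebreaker is essential. The paper avoids this issue entirely by using a different two-level extremality: choose $\pi^*$ of \emph{smallest rank}, which handles the merged partitions directly since each $m(\pi^*,\pi_i)$ with $\pi_i\in\cp_{\pi^*}(\pi)$ has strictly smaller rank; and for the meet $\pi\wedge\pi^*$ (which has \emph{larger} rank), the paper instead chooses $\pi$ to be the most refined chain element crossing $\pi^*$ and shows that the meet being outside $\spn(\C)$ would produce a strictly more refined crossing chain element, contradicting the secondary choice.
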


Any chain of distinct partitions of $R$ that does not contain
$\underline{\pi}$ has size at most $|R|-1$, and this is an upper bound
on the rank of the system in \eqref{eq:chain}. Elementary linear
programming theory immediately yields the following corollary.

\begin{corollary}\label{corollary:structure}
  Any basic solution $x^*$ of \eqref{eq:LP-PU} has at most $|R|-1$ non-zero coordinates.
\end{corollary}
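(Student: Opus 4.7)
The plan is to derive \prettyref{corollary:structure} from \prettyref{thm:1} by combining two elementary observations, essentially as sketched in the paragraph preceding the statement.

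First, I would bound the length of any chain $\mathcal{C}$ of distinct partitions of $R$ avoiding $\underline{\pi}$. Writing such a chain as $\pi^1, \pi^2, \ldots, \pi^t$ with each $\pi^{i+1}$ properly refining $\pi^i$, strict refinement forces a strict rank increase, so $r(\pi^{i+1}) \geq r(\pi^i) + 1$. Since $\pi^1 \neq \underline{\pi}$ we have $r(\pi^1) \geq 2$, and trivially $r(\pi^t) \leq |R|$. Chaining these bounds gives $t - 1 \leq r(\pi^t) - r(\pi^1) \leq |R| - 2$, hence $|\mathcal{C}| \leq |R|-1$.

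Second, I would invoke the standard linear programming fact that the number of nonzero coordinates of a basic feasible solution is at most the rank of the system of tight constraints defining it. By \prettyref{thm:1}, the chain equations \eqref{eq:chain} indexed by the maximal chain $\mathcal{C}$ already suffice to pin down $x^*$ (together with $x_K = 0$ for $K \notin \supp(x^*)$), so the coefficient submatrix $(\rc_K^\pi)_{\pi \in \mathcal{C},\, K \in \supp(x^*)}$ must have full column rank. This forces $|\supp(x^*)| \leq |\mathcal{C}|$, and combined with the chain bound above yields $|\supp(x^*)| \leq |R|-1$.

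Since both ingredients are routine, there is no real obstacle here once \prettyref{thm:1} is available; the substantive work lies in the partition uncrossing that produces the chain, and the corollary is essentially a bookkeeping consequence, as is standard in extreme-point analyses via uncrossing.
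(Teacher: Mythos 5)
Your proposal is correct and follows exactly the paper's argument: the paper's preceding paragraph already states that a chain of distinct partitions avoiding $\underline{\pi}$ has size at most $|R|-1$, bounding the rank of the system \eqref{eq:chain}, and then appeals to "elementary linear programming theory." You have simply unpacked that appeal (full column rank of the submatrix $(\rc_K^\pi)_{\pi\in\mathcal C,\,K\in\supp(x^*)}$ forces $|\supp(x^*)|\le|\mathcal C|$), which is precisely what the authors had in mind.
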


\myifthen{
\subsection{Partition Uncrossing Inequalities}\label{sec:pui}
We start with the following definition.
\begin{definition}\label{definition:merged}
  Let $\pi \in \Pi_R$ be a partition and let $S \subset R$. Define the
  \emph{merged partition} $m(\pi, S)$ to be the most refined partition
  that coarsens $\pi$ and contains all of $S$ in a single part. See
  \prettyref{fig:merged} for an example.  Informally, $m(\pi,S)$ is
  obtained by merging all parts of $\pi$ which intersect $S$.
  Formally, $m(\pi,S)$ equals the set of parts $\{\{\pi_j\}_{j:
    \pi_j\cap S=\varnothing}, \bigcup_{j:\pi_j\cap S\neq \varnothing}
  \pi_j\}$.
\end{definition}

\begin{figure}[htb]
\myifthen{}{\psset{unit=0.8cm}}
\begin{center} \leavevmode
\begin{pspicture}(-3,-2.5)(3,2.5)
\terminal{0,1}{t1}\terminal{1,0}{t2}\terminal{-1,0}{t3}\terminal{0,-1}{t4}
\terminal{1,2}{t1a}\terminal{2,1}{t2a}\terminal{-1,2}{t3a}\terminal{-2,1}{t4a}
\terminal{1,-2}{t1b}\terminal{2,-1}{t2b}\terminal{-1,-2}{t3b}\terminal{-2,-1}{t4b}
\psccurve(2.2,0.8)(1.8,1.8)(0.8,2.2)(1.2,1.2)
\psccurve(1.2,-2.2)(0.8,-1.2)(-0.2,-0.8)(0.2,-1.8)
\psccurve(-2,-1.3)(-2.4,0)(-2,1.3)(-1.6,0)
\psccurve(-1,-2.3)(-1.5,0)(-1,2.3)(-0.5,0)
\psccurve(-0.2,1.2)(1.3,0.3)(2.2,-1.2)(0.7,-0.3)
\psccurve[linestyle=dashed](-1.2,0.1)(1.2,0.2)(2.2,-1)(1,-2.2)
\end{pspicture}
\hspace{0.8cm}
\begin{pspicture}(-3,-2.5)(3,2.5)
\terminal{0,1}{t1}\terminal{1,0}{t2}\terminal{-1,0}{t3}\terminal{0,-1}{t4}
\terminal{1,2}{t1a}\terminal{2,1}{t2a}\terminal{-1,2}{t3a}\terminal{-2,1}{t4a}
\terminal{1,-2}{t1b}\terminal{2,-1}{t2b}\terminal{-1,-2}{t3b}\terminal{-2,-1}{t4b}
\psccurve(2.2,0.8)(1.8,1.8)(0.8,2.2)(1.2,1.2)
\psccurve(-2,-1.3)(-2.4,0)(-2,1.3)(-1.6,0)
\psccurve(-1,-2.3)(-1.4,0)(-1,2.3)(0.8,0.8)(2.2,-1.2)(1,-2.2)
\end{pspicture}
\end{center}
\caption{Illustration of merging. The left figure shows a (solid)
partition $\pi$ along with a (dashed) set $S$. The right figure
shows the merged partition $m(\pi, S)$.} \label{fig:merged}
\end{figure}
\noindent
We will use the following straightforward fact later:
\begin{equation}\label{eq:rdrop}\rc_K^\pi = r(\pi)-r(m(\pi,
K)).\end{equation}

We now state the (true) inequalities which replace the false inequality \eqref{eq:naive}. Later, we show how one uses these to obtain
partition uncrossing, e.g.\ to prove \prettyref{ppty:quote}.

\begin{lemma}[Partition Uncrossing Inequalities]\label{lemma:pu}
Let $\pi, \pi' \in \Pi_R$ and let the parts of $\pi$ be $\pi_1,
\pi_2, \dotsc, \pi_{r(\pi)}$.
\begin{eqnarray}
\label{eq:const} r(\pi) \left[r(\pi')-1\right] +
\left[r(\pi)-1\right] &=& \left[r(\pi \wedge \pi')-1\right] +
\sum_{i=1}^{r(\pi)} \left[r(m(\pi', \pi_i))-1\right]\\
\label{eq:coeff}
\forall K \in \K: \quad r(\pi) \Bigl[\rc_K^{\pi'}\Bigr] +
\Bigl[\rc_K^\pi\Bigr] &\geq& \Bigl[\rc_K^{\pi \wedge \pi'}\Bigr] +
\sum_{i=1}^{r(\pi)} \Bigl[\rc_K^{m(\pi', \pi_i)}\Bigr]
\end{eqnarray}
\end{lemma}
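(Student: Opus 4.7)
The plan is to prove both \eqref{eq:const} and \eqref{eq:coeff} by direct counting on the incidence structure between the parts of $\pi$ and $\pi'$. Fix the parts of $\pi$ as $\pi_1,\dots,\pi_{r(\pi)}$ and those of $\pi'$ as $\pi'_1,\dots,\pi'_{r(\pi')}$, and for each $i$ set $A_i := \{j : \pi_i \cap \pi'_j \neq \varnothing\}$. Two identities underpin everything. First, $r(m(\pi',\pi_i)) = r(\pi') - |A_i| + 1$, since merging $|A_i|$ parts of $\pi'$ into a single part lowers its rank by exactly $|A_i|-1$. Second, $r(\pi \wedge \pi') = \sum_i |A_i|$, since by \prettyref{definition:meet} the parts of $\pi \wedge \pi'$ are precisely the nonempty intersections $\pi_i \cap \pi'_j$.

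To establish \eqref{eq:const}, I would sum the first identity over $i$ and substitute the second to obtain $\sum_i r(m(\pi',\pi_i)) = r(\pi) r(\pi') + r(\pi) - r(\pi\wedge\pi')$. Plugging this into the right-hand side of \eqref{eq:const} and simplifying shows that both sides equal $r(\pi)r(\pi') - 1$.

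For \eqref{eq:coeff}, recall that $\rc_K^\sigma + 1$ equals the number of parts of $\sigma$ meeting $K$. Introduce $N_i := \{j \in A_i : K \cap \pi'_j \neq \varnothing\}$ and $L_i := \{j : K \cap \pi_i \cap \pi'_j \neq \varnothing\}$, and observe $L_i \subseteq N_i$. The parts of $m(\pi',\pi_i)$ consist of one merged ``big'' part (met by $K$ exactly when $N_i \neq \varnothing$) together with the unmerged parts $\pi'_j$ for $j \notin A_i$ (of which exactly $\rc_K^{\pi'} + 1 - |N_i|$ are met by $K$, by inclusion--exclusion). Hence
\[ \rc_K^{m(\pi',\pi_i)} \;=\; \mathbf{1}[N_i \neq \varnothing] + \rc_K^{\pi'} - |N_i|. \]

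Plugging this formula into \eqref{eq:coeff} and using the further decompositions $\rc_K^\pi + 1 = \sum_i \mathbf{1}[K \cap \pi_i \neq \varnothing]$ and $\rc_K^{\pi \wedge \pi'} + 1 = \sum_i |L_i|$, the $r(\pi)\cdot \rc_K^{\pi'}$ terms cancel and the inequality reduces to the indexwise claim $\mathbf{1}[K\cap\pi_i\neq\varnothing] + |N_i| \ge |L_i| + \mathbf{1}[N_i \neq \varnothing]$ for each $i$. If $K\cap\pi_i\neq\varnothing$, pick any vertex $v\in K\cap\pi_i$; the part $\pi'_j$ containing $v$ witnesses $L_i\neq\varnothing$, and since $L_i\subseteq N_i$ also $N_i\neq\varnothing$, so the claim collapses to $|N_i|\ge|L_i|$, which is immediate. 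If $K\cap\pi_i=\varnothing$, then $L_i=\varnothing$ and the claim reduces to $|N_i|\ge\mathbf{1}[N_i\neq\varnothing]$, which is trivial. The main obstacle is getting the formula for $\rc_K^{m(\pi',\pi_i)}$ correct: $K$ can reach the merged big part through \emph{any} $\pi'_j$ with $j\in A_i$ that it meets, not only through those $\pi'_j$ containing a vertex of $K\cap\pi_i$, so the correct indicator is $\mathbf{1}[N_i\neq\varnothing]$ rather than $\mathbf{1}[L_i\neq\varnothing]$. The slack $|N_i|-|L_i|$ that this creates is precisely what rescues the inequality where the naive submodular bound (\prettyref{example:nonsubm}) fails.
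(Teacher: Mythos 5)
Your proof is correct, and it takes a genuinely different route from the paper's. The paper first normalizes via a contraction argument: it reduces to the case $\pi \wedge \pi' = \overline{\pi}$ (the all-singletons partition) by contracting each part of $\pi \wedge \pi'$ to a pseudo-vertex, so that thereafter $|\pi_i \cap \pi'_j| \le 1$. In this normalized setting it proves \eqref{eq:const} by a short rank computation, and \eqref{eq:coeff} via the per-index bound $\rc_K^{\pi'} - \rc_K^{m(\pi',\pi_i)} \ge \rho(\pi_i \cap K)$ together with the auxiliary identity $\rho(K) = \rc_K^\pi + \sum_i \rho(K \cap \pi_i)$ (the paper's \prettyref{lemma:rc}). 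You instead work in full generality, derive an \emph{exact} formula $\rc_K^{m(\pi',\pi_i)} = \mathbf{1}[N_i\neq\varnothing] + \rc_K^{\pi'} - |N_i|$, and reduce everything to the per-index inequality $\mathbf{1}[K\cap\pi_i\neq\varnothing] + |N_i| \ge |L_i| + \mathbf{1}[N_i\neq\varnothing]$, which you dispatch by cases. The two approaches share the essential insight --- the slack $|N_i| - |L_i|$, corresponding to the paper's remark that ``the inequality could be strict if both $\pi_i$ and $K$ intersect a part of $\pi'$ without having a common vertex in that part'' --- but yours makes the bookkeeping fully explicit and avoids both the without-loss-of-generality contraction and the auxiliary rank lemma, at the cost of carrying more notation ($A_i$, $N_i$, $L_i$). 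Your closing observation about why the naive submodularity bound of \prettyref{example:nonsubm} fails, yet this weighted sum succeeds, is a nice addition and is in fact a sharper version of the paper's parenthetical remark.
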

\noindent
Before giving the proof of the above lemma, let us first show how it can be used to prove the statement \prettyref{ppty:quote}. \\

\noindent
{\em Proof of \prettyref{ppty:quote}}.
Since $\pi$ and $\pi'$ are tight,
\begin{align*}
r(\pi)[r(\pi') - 1] + [r(\pi) - 1] = r(\pi)\Bigl[\sum_{K} x_K\rc_K^{\pi'}\Bigr] + \Bigl[\sum_K x_K\rc_K^\pi\Bigr] = \sum_K x_K \biggl(r(\pi)\Bigl[\rc_K^{\pi'}\Bigr] + \Bigl[\rc_K^\pi\Bigr]\biggr) \\
 \ge \sum_K x_K\biggl(\Bigl[\rc_K^{\pi \wedge \pi'}\Bigr] +
\sum_{i=1}^{r(\pi)} \Bigl[\rc_K^{m(\pi', \pi_i)}\Bigr]\biggr) = \sum_K x_K\Bigl[\rc_K^{\pi \wedge \pi'}\Bigr] + \sum_{i=1}^{r(\pi)} \sum_K x_K \Bigl[\rc_K^{m(\pi', \pi_i)}\Bigr] \\
\ge \left[r(\pi \wedge \pi')-1\right] +
\sum_{i=1}^{r(\pi)} \left[r(m(\pi', \pi_i))-1\right] = r(\pi) \left[r(\pi')-1\right] +
\left[r(\pi)-1\right]
\end{align*}
where the first inequality follows from \eqref{eq:coeff} and the second from~\eqref{eq:LP-PU2} (as $x$ is feasible); the last equality is \eqref{eq:const}. Since the first and last terms are equal, all the inequalities are equalities, in particular our application of \eqref{eq:LP-PU2} shows that $\pi \wedge \pi'$ and each $m(\pi',\pi_i)$ is tight.
Iterating the latter fact, we see that $m(\dotsb m(m(\pi', \pi_1), \pi_2), \dotsb)=\pi \vee \pi'$ is also tight. $\square$\\

\noindent
To prove the inequalities in \prettyref{lemma:pu}  we need the following lemma that relates the rank of sets and the rank contribution of partitions. Recall $\rho(X) := \max(0,|X|-1)$.
\begin{lemma} \label{lemma:rc} For a partition $\pi = \{\pi_1, \dotsc, \pi_t\}$ of $R$, where $t = r(\pi)$, and for any $K\subseteq R$, we have
$$\rho(K) = \rc_K^\pi + \sum_{i=1}^t\rho(K \cap \pi_i).$$
\end{lemma}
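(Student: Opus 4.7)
The plan is to verify the identity by direct counting, since both sides decompose cleanly in terms of the parts of $\pi$ that $K$ intersects. Since $K \in \K$ is a nonempty subset of $R$, I have $\rho(K) = |K|-1$, and I want to match this against the right-hand side.

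First, I would introduce $s := |\{i : K \cap \pi_i \neq \varnothing\}|$, the number of parts of $\pi$ that $K$ touches. By definition of rank contribution this gives $\rc_K^\pi = s-1$. Because $\pi$ is a partition of $R$ and $K \subseteq R$, the set $K$ decomposes as the disjoint union $K = \bigsqcup_{i:K \cap \pi_i \neq \varnothing} (K \cap \pi_i)$, so $|K| = \sum_{i:K \cap \pi_i \neq \varnothing} |K \cap \pi_i|$.

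Next, I would evaluate the sum on the right. For any index $i$ with $K \cap \pi_i = \varnothing$, the term contributes $\rho(\varnothing)=0$. For any index $i$ with $K \cap \pi_i \neq \varnothing$, the intersection is nonempty so $\rho(K \cap \pi_i) = |K \cap \pi_i| - 1$. Summing over all $t$ parts,
\[
\sum_{i=1}^{t} \rho(K \cap \pi_i) \;=\; \sum_{i:K\cap \pi_i \neq \varnothing} \bigl(|K \cap \pi_i|-1\bigr) \;=\; |K| - s.
\]
Adding $\rc_K^\pi = s-1$ yields $(s-1) + (|K|-s) = |K|-1 = \rho(K)$, which is exactly the claimed identity.

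There is really no obstacle here; the lemma is a bookkeeping statement, and the only thing to be careful about is handling parts $\pi_i$ disjoint from $K$ (which contribute zero to the sum) and the sign convention making $\rc_K^\pi = s-1$ correctly cancel the $-s$ that appears when we peel off a ``$-1$'' from each nonempty intersection. Under the standing assumption that $K$ is nonempty so that $s \ge 1$, no edge cases arise.
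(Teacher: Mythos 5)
Your proof is correct and takes essentially the same route as the paper: both evaluate $\sum_i \rho(K\cap\pi_i)$ by restricting to the parts $K$ actually meets, peeling off a $-1$ from each nonempty intersection, and using $\sum_{i:K\cap\pi_i\neq\varnothing}|K\cap\pi_i| = |K|$. Your $s$ notation and the explicit remark about nonemptiness of $K$ (needed for $\rho(K)=|K|-1$) are minor expository improvements over the paper's version, but the argument is the same.
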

\begin{proof}
By definition, $K \cap \pi_i \neq \varnothing$ for exactly $1 +
\rc_K^\pi$ values of $i$. Also, $\rho(K \cap \pi_i)=0$ for all other
$i$. Hence \begin{equation}\sum_{i=1}^t\rho(K \cap \pi_i) =
\sum_{i:K \cap \pi_i \neq \varnothing} (|K \cap \pi_i|-1) =
\left(\sum_{i:K \cap \pi_i \neq \varnothing} |K \cap
\pi_i|\right)-(\rc_K^\pi+1).\label{eq:houston}\end{equation} Observe
that $ \sum_{i:K \cap \pi_i \neq \varnothing} |K \cap
\pi_i|=|K|=\rho(K)+1$; using this fact together with Equation
\eqref{eq:houston} we obtain
$$\sum_{i=1}^t\rho(K \cap \pi_i) =
\left(\sum_{i:K \cap \pi_i \neq \varnothing} |K \cap
\pi_i|\right)-(\rc_K^\pi+1) = \rho(K)-1+(\rc_K^\pi+1).$$
Rearranging, the proof of \prettyref{lemma:rc} is complete.
\end{proof}

\noindent
{\em Proof of \prettyref{lemma:pu}}.
First, we argue that $\pi \wedge \pi' = \overline{\pi}$ holds without loss of generality.
In the general case, for each part $p$ of $\pi \wedge \pi'$ with
$|p|\ge 2$, contract $p$ into one pseudo-vertex and define
the new $K$ to include the pseudo-vertex corresponding to $p$ if and
only if $K \cap p \neq \varnothing$. This contraction does not affect
the value of any of the terms in Equations \eqref{eq:coeff} and
\eqref{eq:const}, so is without loss of generality. After contraction, for any part $\pi_i$ of
$\pi$ and part $\pi'_j$ of $\pi'$, we have $|\pi_i \cap \pi'_j| \leq
1$, so indeed $\pi \wedge \pi' = \overline{\pi}$.

\begin{proof}[Proof of Equation \eqref{eq:const}]
Fix $i$. Since $|\pi_i \cap \pi'_j| \leq 1$ for all $j$, the rank
contribution $\rc_{\pi_i}^{\pi'}$ is equal to $|\pi_i|-1.$ Then
using Equation \eqref{eq:rdrop} we know that $r(m(\pi', \pi_i)) =
r(\pi') - |\pi_i|+1$. Thus adding over all $i$, the right-hand side
of Equation \eqref{eq:const} is equal to
$$|R|-1 + \sum_{i=1}^{r(\pi)}(r(\pi')-|\pi_i|) = |R|-1 + r(\pi)r(\pi')-|R|$$
and this is precisely the left-hand side of Equation
\eqref{eq:const}.
\end{proof}

\begin{proof}[Proof of Equation \eqref{eq:coeff}]
Fix $i$. Since $|\pi_i \cap \pi'_j| \leq 1$ for all $j$, we have
\begin{equation}
\rc_K^{\pi'} - \rc_K^{m(\pi', \pi_i)} \geq \rho(\pi_i \cap
K)\label{eq:tight}
\end{equation}
because, when we merge the parts of $\pi'$ intersecting $\pi_i$, we
make $K$ span at least $\rho(\pi_i \cap K)$ fewer parts.  Note that the inequality could be strict if both $\pi_i$ and $K$ intersect a part of $\pi'$ without having a common vertex in that part.

Adding the right-hand side of Equation \eqref{eq:tight} over all $i$
gives
\begin{equation}\sum_{i=1}^{r(\pi)}(\rc_K^{\pi'} - \rc_K^{m(\pi', \pi_i)}) \geq \sum_{i=1}^{r(\pi)}\rho(\pi_i \cap K) = \rho(K)-\rc_K^\pi.\label{eq:snore}\end{equation}
where the last equality follows from \prettyref{lemma:rc}. To finish
the proof we observe $\rho(K) = \rc_K^{\pi\meet\pi'}$, since $\pi
\meet \pi' = \overline{\pi}$.
\end{proof}
\noindent
This completes the proof of \prettyref{lemma:pu}. $\hfill \Box$\\

\subsection{Sparsity of Basic Feasible Solutions: Proof of \prettyref{thm:1}}\label{sec:pbu}
\def\row{{\tt row}}
\begin{proof}
  Let $\supp(x^*)$ be the full components $K$ with $x^*_K>0$. Consider
  the constraint submatrix with rows corresponding to the
  tight partitions and columns corresponding to the full components in
  $\supp(x^*)$. Since $x^*$ is a basic feasible solution, any full-rank
  subset of rows uniquely defines $x^*$. We now show that any maximal chain $\C$ in $\tight(x^*)$
  corresponds to such a subset.

Let $\row(\pi) \in \R^{\supp(x^*)}$ denote the row corresponding to partition $\pi$ of
this matrix, i.e., $\row(\pi)_K = \rc^\pi_K$, and given a collection $\mathcal{R}$ of partitions (rows), let
$\spn(\mathcal{R})$ denote the linear span of the rows in $\mathcal{R}$.
We now prove that for any tight partition $\pi \notin \C$, we have
$\row(\pi) \in \spn(\C)$; this will complete the proof of the theorem.

For sake of contradiction, suppose $\row(\pi) \not\in \spn(\C)$. Choose $\pi$ to be the
counterexample partition with smallest rank $r(\pi)$. Firstly, since
$\C$ is maximal, $\pi$ must cross some partition $\sigma$ in $\C$.
Choose $\sigma$ to be the most refined partition in $\C$ which crosses
$\pi$. Let the parts of $\sigma$ be $(\sigma_1,\ldots,\sigma_t)$. The
following claim uses the partition uncrossing inequalities to derive a
linear dependence between the rows corresponding to $\sigma,\pi$ and
the partitions formed by merging parts of $\sigma$ with $\pi$.

\begin{claim}\label{claim:talon}
We have $\row(\sigma) + |r(\sigma)|\cdot \row(\pi) = \row(\pi\meet\sigma) + \sum_{i=1}^t \row(m(\pi,\sigma_i))$.
\end{claim}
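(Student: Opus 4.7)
The plan is to derive Claim \ref{claim:talon} as the ``entry-wise'' version of Property \ref{ppty:quote} restricted to $\supp(x^*)$. Recall that the proof of Property \ref{ppty:quote} given earlier takes the partition uncrossing inequalities of Lemma \ref{lemma:pu} (applied to two tight partitions), weights them by $x^*_K$, and sums to get a chain of three inequalities that are squeezed into equality by Equation \eqref{eq:const}. Since each $K \in \supp(x^*)$ has $x^*_K > 0$ and every inequality in the chain points the same way, the weighted-sum equality forces each individual entry-wise inequality to be tight. This is exactly the statement of the claim.

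Concretely, I would apply Lemma \ref{lemma:pu} with the substitutions $\pi \leftarrow \sigma$ and $\pi' \leftarrow \pi$, and using $r(\sigma) = t$ with parts $\sigma_1, \dots, \sigma_t$. The coefficient inequality \eqref{eq:coeff} becomes, for every $K \in \K$,
\begin{equation*}
    t \cdot \rc_K^{\pi} + \rc_K^{\sigma} \;\geq\; \rc_K^{\sigma \wedge \pi} + \sum_{i=1}^{t} \rc_K^{m(\pi, \sigma_i)}.
\end{equation*}
Multiplying by $x^*_K \geq 0$ and summing over $K \in \K$, the left-hand side equals $t(r(\pi)-1) + (r(\sigma)-1)$ by tightness of $\sigma, \pi \in \tight(x^*)$, while the right-hand side is at least $(r(\sigma \wedge \pi)-1) + \sum_{i=1}^{t}(r(m(\pi, \sigma_i))-1)$ by feasibility of $x^*$ in \eqref{eq:LP-PU}. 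The constant identity \eqref{eq:const}, applied with the same substitution, tells us that these two bookend quantities are in fact equal.

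Since the outer inequalities collapse to an equality, each intermediate inequality must also hold with equality. In particular, every individual term in $\sum_K x^*_K(\cdot)$ must be tight, and because $x^*_K > 0$ for $K \in \supp(x^*)$, the entry-wise inequality above must hold with equality for each such $K$. That is exactly $\row(\sigma)_K + t \cdot \row(\pi)_K = \row(\sigma \wedge \pi)_K + \sum_{i=1}^{t} \row(m(\pi, \sigma_i))_K$ for every $K \in \supp(x^*)$, which is the claimed vector equation (noting $\pi \wedge \sigma = \sigma \wedge \pi$ and $|r(\sigma)| = r(\sigma) = t$).

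I do not expect a real obstacle here; the main conceptual point is to recognize that the argument proving Property \ref{ppty:quote} has already done the hard work, and that passing from a weighted-sum equality to entry-wise equalities is automatic on $\supp(x^*)$ because all weights are strictly positive. The only thing to be careful about is bookkeeping the substitution of roles in Lemma \ref{lemma:pu} and making sure the three inequalities all line up in the same direction so that the squeeze argument is valid.
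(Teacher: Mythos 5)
Your proposal is correct and follows the same route as the paper: the paper's own proof of the claim is a one-line appeal to the argument for Property~\ref{ppty:quote}, observing that the squeeze there forces \eqref{eq:coeff} to hold with equality for every $K$ with $x^*_K > 0$, which is exactly the entry-wise reading of the claimed row identity. Your substitution bookkeeping ($\pi \leftarrow \sigma$, $\pi' \leftarrow \pi$, $t = r(\sigma)$) is correct and matches the indexing used in the claim.
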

\begin{proof}
  Since $\sigma$ and $\pi$ are both tight partitions, the proof of
  \prettyref{ppty:quote} shows that the partition inequality
  \eqref{eq:coeff} holds with equality for all $K\in \supp(x^*)$,
  $\pi$ and $\sigma$, implying the claim.
\end{proof}

Let $\cp_\pi(\sigma)$ be the parts of $\sigma$ which intersect at
least two parts of $\pi$; i.e., merging the parts of $\pi$ that
intersect $\sigma_i$, for any $\sigma_i \in \cp_\pi(\sigma)$,
decreases the rank of $\pi$. Formally,
$$\cp_\pi(\sigma) := \{ \sigma_i \in \sigma: ~~ m(\pi,\sigma_i) \neq \pi \}$$
Note that one can modify \prettyref{claim:talon} by subtracting $(r(\sigma) - |\cp_\pi(\sigma)|)\row(\pi)$ from both sides to get
\begin{equation}\label{eq:lincomb}
\row(\sigma) + |\cp_\pi(\sigma)|\cdot \row(\pi) = \row(\pi\meet\sigma) + \sum_{\sigma_i\in \cp_\pi(\sigma)} \row(m(\pi,\sigma_i))
\end{equation}

Now if $\row(\pi) \notin \spn(\C)$, we must have either $\row(\pi\meet\sigma)$ is not in $\spn(\C)$ or $\row(m(\pi,\sigma_i))$ is not in $\spn(\C)$ for some $i$.
We show that either case leads to the needed contradiction, which will prove the theorem.
\begin{description}
\item[Case 1:]$\row(\pi \meet \sigma) \notin \spn(\C)$. Note there is $\sigma'\in \C$ which crosses $\pi\meet\sigma$, since $\pi\meet\sigma$ is not in the maximal chain $\C$.
Since $\sigma', \sigma \in \C$ and by considering the refinement order, it is easy to see that $\sigma'$ (strictly) refines $\sigma$ and $\sigma'$ crosses $\pi$. This contradicts our choice of $\sigma$ as the most refined partition in $\C$ crossing $\pi$, since $\sigma'$ was also a candidate.
\item[Case 2:]
$\row(m(\pi, \sigma_i)) \not\in \spn(\C)$. Note $m(\pi, \sigma_i)$ is also tight.
Since $\sigma_i \in \cp_\pi(\sigma)$, $m(\pi, \sigma_i)$ has smaller rank than
$\pi$. This contradicts our choice of $\pi$.
\end{description}
This completes the proof of \prettyref{thm:1}.
\end{proof}}{}

\section{Equivalence of Formulations}\label{sec:equiv}

In this section we describe our equivalence results. A summary of the
known and new results is given in \prettyref{fig:overview}.

\begin{figure}[htb]
  \begin{center} \leavevmode
    \myifthen{}{\psset{unit=0.8cm}}
    \begin{pspicture}(-2,-2)(14,2.5)

    \rput(3,-2){\rnode{S}{\psframebox{\ensuremath{\OPT\eqref{eq:LP-S}}}}}
    \rput(3,2){\rnode{PU}{\psframebox{\OPT\eqref{eq:LP-PU}}}}
    \rput(0,0){\rnode{P}{\psframebox{\OPT\bdp}}}
    \rput(6,0){\rnode{D}{\psframebox{\OPT\eqref{eq:LP-PUDir}}}}
    \rput(12,0){\rnode{B}{\psframebox{\OPT\eqref{eq:LP-B}}}}

\psset{arcangle=30}
\ncline{-}{PU}{P}\mput*{$=$ {[Thm.~\ref{theorem:p-pu}]}}
\ncline{-}{P}{S}\mput*{$=$ {[Thm.~\ref{theorem:spe}]}}
\ncline{-}{S}{D}\mput*{$=$ {\cite{PVd03}}}
\ncline{-}{PU}{D}\mput*{$=$ {\myifthen{[Appendix \ref{app:reproof}]}{\cite{CKP09}}}}
\ncarc{-}{D}{B}\mput*{$\ge$ {[Lemma \ref{lem:simple}],\cite{PVd03}}}
\ncarc{-}{B}{D}\mput*{$\le$ in quasi-bipartite {[Thm.~\ref{theorem:lifting}]}}

    \end{pspicture}
  \end{center}
  \caption{Summary of relations among various LP relaxations}
  \label{fig:overview} \end{figure}

\myifthen{
As we mentioned in the introduction, we give a redundant set of proofs
for completeness and to demonstrate novel techniques. The proof that \eqref{eq:LP-PU} and \eqref{eq:LP-PUDir} have the same value, which appears in Appendix \ref{app:reproof}, is a consequence of hypergraph orientation results of Frank et al.~\cite{FKK03b}.

\subsection{Bounded and Unbounded Partition Relaxations}

\begin{theorem}\label{theorem:p-pu}
  The LPs \eqref{eq:LP-P2} and \eqref{eq:LP-PU} have the same optimal
  value.
\end{theorem}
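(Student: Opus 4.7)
The easy direction $\OPT\eqref{eq:LP-PU} \leq \OPT\eqref{eq:LP-P2}$ is immediate, since \eqref{eq:LP-P2}'s feasible region is contained in \eqref{eq:LP-PU}'s (we only strengthened one inequality to an equality). Only the reverse inequality $\OPT\eqref{eq:LP-P2} \leq \OPT\eqref{eq:LP-PU}$ requires work.

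The first thing I would observe is that the added constraint $\sum_K x_K(|K|-1) = |R|-1$ is precisely the \eqref{eq:LP-PU}-partition constraint for $\overline{\pi} = \{\{r\}: r \in R\}$ written as an equality, because $\rc_K^{\overline{\pi}} = |K|-1$ and $r(\overline{\pi}) = |R|$. So it is enough to exhibit, for any instance, some optimum $x^*$ of \eqref{eq:LP-PU} at which the $\overline{\pi}$-constraint is tight; such an $x^*$ is then feasible and optimal for \eqref{eq:LP-P2}.

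My plan is as follows. Take any basic feasible optimum $x^*$ of \eqref{eq:LP-PU} and apply \prettyref{thm:1}: $x^*$ is determined by the tightness equations for an inclusion-wise maximal chain $\C \subseteq \tight(x^*) \setminus \{\underline{\pi}\}$. Since $\overline{\pi}$ refines every partition of $R$, the collection $\C \cup \{\overline{\pi}\}$ is also a chain; hence by maximality of $\C$, either $\overline{\pi} \in \C$ (so $\overline{\pi}$ is tight and we are done) or $\overline{\pi} \notin \tight(x^*)$.

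The main obstacle is handling this second case, i.e.\ showing that some basic feasible optimum has $\overline{\pi}$ tight. Here my plan is to iteratively modify $x^*$: if $\overline{\pi}$ has strict slack, pick some $K \in \supp(x^*)$ with $|K|\geq 2$ and decrease $x^*_K$ by the largest amount that preserves all other partition constraints; since $C_K \geq 0$, this does not increase cost and optimality is maintained. The delicate part is showing the process terminates with $\overline{\pi}$ tight (rather than cycling through tight partitions distinct from $\overline{\pi}$). For this I would appeal to \prettyref{ppty:quote}: the family $\tight(x^*)$ is closed under meet and join, so the meet of the partitions blocking all attempted reductions is itself tight and, by a combinatorial argument using the rank-contribution identity $|K|-1 = \rc_K^{\overline{\pi}}$, must coincide with $\overline{\pi}$. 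This forces $\overline{\pi}$ into $\tight(x^*)$, completing the proof.
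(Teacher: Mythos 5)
Your high-level strategy is the right one (and matches the paper's): observe that the added equality in \eqref{eq:LP-P2} is exactly the $\overline{\pi}$-constraint of \eqref{eq:LP-PU} taken with equality, and then exhibit an optimum of \eqref{eq:LP-PU} at which $\overline{\pi}$ is tight. Your reduction to ``find an optimum where $\overline{\pi}$ is tight'' is correct, and the easy containment is fine. But the mechanism you propose for producing such an optimum does not work, and the place where it fails is exactly the place where the paper brings in a different idea.

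The core problem is the choice of local move. You propose to simply \emph{decrease} $x_K$ for some $K$ in the support, observing that since $C_K \ge 0$ the cost does not go up. But this observation cuts both ways: if $C_K>0$ (which holds for every non-singleton $K$ in a Steiner tree instance), any positive decrease that stays feasible would \emph{strictly} reduce the objective, contradicting the optimality of $x^*$. So at an optimum, the only $K$ you can actually decrease are those with $C_K=0$, i.e.\ singletons, and singletons have $\rc_K^\pi = 0$ for every $\pi$, so touching them does not move the $\overline{\pi}$-constraint at all. Your iteration is therefore a no-op at any optimum, and it need not terminate with $\overline{\pi}$ tight. Even ignoring the cost issue, the blocking condition for a plain decrease is ``some tight $\pi$ has $\rc_K^\pi>0$,'' which after taking the meet $\pi^*$ of all tight partitions gives only $\rc_K^{\pi^*} \ge 1$ for supported $K$. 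What one actually needs, to force $\pi^* = \overline{\pi}$ via the identity $r(\pi^*)-1 = \sum_K x_K \rc_K^{\pi^*}$, is the much stronger $\rc_K^{\pi^*} = |K|-1$ for every supported $K$. Your proposal asserts this but gives no argument for it, and the weak blocking condition cannot supply it.

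The ingredient you are missing is the \emph{shrink} operation, and with it the structural hypotheses that make it legal. The paper does not decrease $x_K$; it transfers weight from $x_K$ to $x_{K'}$ where $K'=K\setminus\{r\}$ has size $|K|-1$. Because the Steiner-tree hyperedge family is down-closed and the cost function is non-decreasing ($C_{K'} \le C_K$), this transfer is cost-nonincreasing, hence possible even at an optimum; and it strictly reduces the potential $\sum_K x_K|K|$. Choosing an optimum that \emph{minimizes} $\sum_K x_K|K|$ therefore forces, for every supported $K$ and every $r\in K$, a tight $\pi$ whose part containing $r$ misses $K\setminus\{r\}$ (else one could shrink). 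Taking the meet $\pi^*$ of all tight partitions and applying \prettyref{ppty:quote} then gives $\rc_K^{\pi^*} = |K|-1$ for every supported $K$, and the rank count pins $\pi^*=\overline{\pi}$. That is the step your argument needs and cannot replicate with a plain decrease. (Separately, the appeal to \prettyref{thm:1} and basic feasible solutions is not needed; the paper's argument works with any optimum minimizing the stated potential.)
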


\noindent We actually prove a stronger statement.
\begin{definition}The collection $\K$ of hyperedges is
  \emph{down-closed} if whenever $S \in \K$ and $\varnothing \neq T
  \subset S$, then $T \in \K.$ For down-closed $\K$, the cost function
  $C: \K \to \R_+$ is \emph{non-decreasing} if $C_S \leq C_T$ whenever
  $S \subset T$.\end{definition}

\begin{theorem}\label{theorem:p-pu-app}
  If the set of hyperedges is down-closed and the cost function is
  non-decreasing, then \eqref{eq:LP-P2} and \eqref{eq:LP-PU} have the
  same optimal value.
\end{theorem}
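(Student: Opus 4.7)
The plan is to prove the nontrivial direction $\OPT\eqref{eq:LP-P2} \le \OPT\eqref{eq:LP-PU}$; the reverse is immediate since \eqref{eq:LP-P2} just adds an equation to \eqref{eq:LP-PU}. Set $\sigma(x) := \sum_{K \in \K} x_K (|K|-1)$; the $\overline{\pi}$-constraint of \eqref{eq:LP-PU} reads $\sigma(x) \ge |R|-1$. Let $\tilde x$ minimize $\sigma$ over all optima of \eqref{eq:LP-PU}; such a minimum is attained since $\sigma$ is bounded below by $|R|-1$ on the closed optimal face. I will show $\sigma(\tilde x) = |R|-1$, whence $\tilde x$ is feasible for \eqref{eq:LP-P2} with cost $\OPT\eqref{eq:LP-PU}$, completing the proof.

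The basic move is to \emph{shrink}: pick $K \in \supp(\tilde x)$ with $|K| \ge 2$ and some $v \in K$, then transfer $\epsilon > 0$ mass from $\tilde x_K$ to $\tilde x_{K \setminus v}$. Down-closedness of $\K$ gives $K \setminus v \in \K$, and non-decreasing costs give $C_{K \setminus v} \le C_K$, so the cost does not increase; meanwhile $\sigma$ strictly decreases by $\epsilon$ since $|K \setminus v|-1 = |K|-2$. For each partition $\pi$, the $\pi$-constraint's LHS changes by $\epsilon(\rc_{K \setminus v}^\pi - \rc_K^\pi) \in \{0, -\epsilon\}$, the latter occurring exactly when $v$ is the unique element of $K$ lying in the part $\pi(v)$. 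Thus for all sufficiently small $\epsilon > 0$, the move preserves feasibility of \eqref{eq:LP-PU} precisely when every tight partition $\pi$ has $|K \cap \pi(v)| \ge 2$; call such a $(K, v)$ a \emph{good pair}.

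The heart of the argument is to show a good pair exists whenever $\sigma(\tilde x) > |R|-1$. By \prettyref{ppty:quote} the family $\tight(\tilde x)$ is closed under the meet operator, so it contains a finest element $\pi^* := \bigwedge_{\pi \in \tight(\tilde x)} \pi$, and $\pi^*$ refines every tight partition (so every part of $\pi^*$ lies inside a single part of each tight $\pi$). If $\pi^* = \overline{\pi}$ then $\overline{\pi}$ is tight and $\sigma(\tilde x) = |R|-1$, contradicting the assumption. Otherwise, suppose for contradiction that every $K \in \supp(\tilde x)$ with $|K| \ge 2$ is a partial transversal of $\pi^*$, i.e.\ $|K \cap P| \le 1$ for every part $P \in \pi^*$. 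Then $\rc_K^{\pi^*} = |K|-1$ for all $K \in \supp(\tilde x)$, and tightness of $\pi^*$ forces
\[
\sigma(\tilde x) \;=\; \sum_K \tilde x_K \rc_K^{\pi^*} \;=\; r(\pi^*)-1 \;\le\; |R|-1,
\]
again contradicting our assumption. Hence some $K \in \supp(\tilde x)$ has $|K \cap P| \ge 2$ for a part $P \in \pi^*$, and any $v \in K \cap P$ is good: for every tight $\pi$ we have $P \subseteq \pi(v)$ and so $|K \cap \pi(v)| \ge |K \cap P| \ge 2$. A small shrink on this good pair then produces a feasible solution to \eqref{eq:LP-PU} with the same cost as $\tilde x$ but strictly smaller $\sigma$, contradicting the choice of $\tilde x$. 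The main obstacle is exactly this existence argument; the trick is to compare $\sigma(\tilde x) = \sum_K \tilde x_K (|K|-1)$ against the tight value $r(\pi^*)-1$ via the transversal/non-transversal dichotomy for $K$ relative to the finest tight partition $\pi^*$.
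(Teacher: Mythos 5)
Your proof is correct and follows essentially the same approach as the paper: both select an optimal solution minimizing a ``total size'' functional (you use $\sum_K x_K(|K|-1)$, the paper uses $\sum_K x_K|K|$), both invoke \prettyref{ppty:quote} to make the meet $\pi^*$ of all tight partitions tight, and both combine the shrinking move with the transversality of $\supp(x)$ relative to $\pi^*$ to force $\pi^* = \overline{\pi}$. The only difference is the logical packaging: the paper first proves (as Claim \ref{claim:inter}) that every pair $(K,r)$ with $K \in \supp(x)$, $r \in K$ admits a tight partition isolating $r$, and then deduces $\rc_K^{\pi^*} = |K|-1$; you argue directly by contradiction, locating a single non-transversal $K$ and a corresponding good pair — this is precisely the contrapositive of the paper's claim, so the content is the same.
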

\prettyref{theorem:p-pu-app} implies \prettyref{theorem:p-pu} since
the hypergraph and cost function derived from instances of the Steiner
tree problem are down-closed and non-decreasing (e.g.~$C_{\{k\}} = 0$
for every $k \in R$; we remark that the variables $x_{\{k\}}$ act just
as placeholders).  Our proof of \prettyref{theorem:p-pu} relies on the
following operation which we call {\em shrinking}.

\begin{definition}
Given an assignment  $x:\K \to \R_+$ to the full components, suppose $x_K > 0$ for some $K$. The operation $\Shrink(x,K,K',\delta)$, where $K' \subseteq K$, $|K'| = |K|-1$ and  $0 < \delta \le x_K$, changes $x$ to $x'$ by decreasing $x'_K := x_K - \delta$ and increasing $x'_{K'}  := x_{K'} + \delta$. \end{definition}
\noindent
Note that shrinking is defined only for down-closed hypergraphs. Also note that on performing a shrinking operation, the cost of the solution cannot increase, if the cost function is non-decreasing. The theorem is proved by taking the optimum solution to \eqref{eq:LP-PU} which  minimizes the sum $\sum_{K\in \K} x_K|K|$, and then showing that this must satisfy the equality in \eqref{eq:LP-P2}, or a shrinking operation can be performed. Now we give the details.

\begin{proof}[Proof of \prettyref{theorem:p-pu-app}]
It suffices to exhibit an optimum solution of \eqref{eq:LP-PU} which satisfies the equality in \eqref{eq:LP-P2}.
Let $x$  be an optimal solution to \eqref{eq:LP-PU} which minimizes the sum $\sum_{K\in \K} x_K|K|$.

\begin{claim}\label{claim:inter}
For every $K$ with $x_K>0$ and for every $r\in K$, there exists a tight partition (w.r.t.\ $x$) $\pi$ such that the part of $\pi$ containing
$r$ contains no other vertex of $K$.
\end{claim}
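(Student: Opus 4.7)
The plan is to prove the claim by contradiction, using a single $\Shrink$ operation to contradict the choice of $x$ as minimizing $\sum_{K \in \K} x_K |K|$ among optima of \eqref{eq:LP-PU}. The case $|K| = 1$ is vacuous: then $K = \{r\}$ and any tight partition works (for example $\underline{\pi} = \{R\}$, which is always tight because its constraint reads $0 \geq 0$). So assume $|K| \geq 2$ and, for contradiction, that some $r \in K$ has no tight partition $\pi$ whose part containing $r$ avoids $K \setminus \{r\}$.

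I would then perform $\Shrink(x, K, K \setminus \{r\}, \delta)$ for a small $\delta > 0$; this is well-defined since $\K$ is down-closed and $|K \setminus \{r\}| \geq 1$. The crux is the elementary identity
\begin{equation*}
\rc^\pi_K - \rc^\pi_{K \setminus \{r\}} \;=\; \begin{cases} 1, & \text{if the part of $\pi$ containing $r$ meets $K$ only in $r$,} \\ 0, & \text{otherwise,} \end{cases}
\end{equation*}
which shows that the LHS of $\pi$'s constraint in \eqref{eq:LP-PU} changes by $-\delta \cdot (\rc^\pi_K - \rc^\pi_{K \setminus \{r\}}) \in \{-\delta, 0\}$ under the shrink. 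By the contradiction hypothesis, this change is $0$ for every tight $\pi$, so tight constraints remain tight. For non-tight $\pi$ the slack is strictly positive, and since $\Pi_R$ is finite I can pick $\delta \in (0, x_K]$ smaller than the minimum slack over all non-tight partitions, which preserves feasibility globally.

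To conclude, $C_{K \setminus \{r\}} \leq C_K$ by non-decreasingness of $C$, so the cost of the new solution $x'$ is at most that of $x$, and hence $x'$ is also optimal. However $\sum_{K'} x'_{K'}|K'| - \sum_{K'} x_{K'}|K'| = \delta(|K \setminus \{r\}| - |K|) = -\delta < 0$, contradicting the choice of $x$ as the optimum minimizing $\sum_{K'} x_{K'}|K'|$. The only real insight needed is the $\rc$-identity above together with its interaction with the contradiction hypothesis; the rest is routine perturbation bookkeeping, the only mild technical point being the simultaneous choice of $\delta$ that works for all partition constraints, which is immediate from finiteness of $\Pi_R$.
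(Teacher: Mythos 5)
Your proof is correct and follows essentially the same route as the paper's: assume no such tight partition, observe via the rank-contribution identity that $\rc_K^\pi = \rc_{K\setminus\{r\}}^\pi$ for every tight $\pi$, then perform a small $\Shrink(x,K,K\setminus\{r\},\delta)$ with $\delta$ bounded by $x_K$ and by the minimum positive slack, contradicting minimality of $\sum_{K'} x_{K'}|K'|$. The only (cosmetic) differences are that you spell out the $\rc$-identity and handle $|K|=1$ explicitly, while the paper states the $\delta$-choice more compactly as a minimum over $\pi$ with $\rc_K^\pi\neq\rc_{K'}^\pi$ — the same quantity in slightly different packaging.
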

\begin{proof}
Let $K'=K\setminus\{r\}$. If the above is not true, then this implies that for every tight partition $\pi$, we have $\rc_K^\pi = \rc_{K'}^\pi$.
We now claim that there is a $\delta > 0$ such that we can perform $\Shrink(x,K,K',\delta)$ while retaining feasibility in \eqref{eq:LP-PU}. This is a contradiction since the shrink operation
strictly reduces $\sum_K |K|x_K$ and doesn't increase cost.
Specifically, take
\begin{center}
$\delta := \min\{x_K, \min_{\pi: \rc_{K'}^\pi \neq \rc_K^\pi}\sum_K \rc_K^\pi x_K - r(\pi)+1\}$
\end{center}
which is positive since for tight partitions we have $\rc_K^\pi = \rc_{K'}^\pi$.
\end{proof}

Let $\tight(x)$ be the set of tight partitions, and $\pi^* :=
\bigwedge \{ \pi \mid \pi \in \tight(x)\}$ the meet of all tight
partitions. By \prettyref{ppty:quote}, $\pi^*$ is tight. By
\prettyref{claim:inter}, for any $K$ with $x_K > 0$, we have
$\rc_K^{\pi^*} = |K|-1$. Thus, $r(\pi^*)-1 = \sum_{K\in \K}
x_K\rc_K^{\pi^*} = \sum_{K\in\K} x_K(|K| - 1) \ge r(\overline{\pi}) -
1$. But since $\overline{\pi}$ is the unique maximal-rank partition,
this implies $\pi^* = \overline{\pi}$. Thus $\overline{\pi}$ is
tight. This implies $x \in \eqref{eq:LP-P2}$.
\end{proof}

\subsection{Partition and Subtour Elimination Relaxations}\label{sec:partsub}

\begin{theorem}\label{theorem:spe}
  The feasible regions of \eqref{eq:LP-P2} and \eqref{eq:LP-S} are the
  same.
\end{theorem}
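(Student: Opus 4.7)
My plan is to prove both inclusions by using the key identity from \prettyref{lemma:rc}, namely $\rho(K) = \rc_K^\pi + \sum_{i=1}^{r(\pi)} \rho(K \cap \pi_i)$, to translate between partition constraints and subtour constraints. Note first that the two equality constraints coincide: since $|K|\ge 1$ for every $K \in \K$, we have $|K|-1 = \rho(K)$, so $\sum_K x_K(|K|-1)=|R|-1$ is identical to $\sum_K x_K \rho(K) = \rho(R)$. Thus both LPs share this equality, and I only need to show that under it, the inequality systems are equivalent.

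Assume $\sum_K x_K \rho(K) = \rho(R)$ and fix a partition $\pi = \{\pi_1,\dotsc,\pi_t\}$. Multiplying the identity of \prettyref{lemma:rc} by $x_K$ and summing over $K \in \K$ gives
\begin{equation*}
\rho(R) \;=\; \sum_{K} x_K \rc_K^\pi + \sum_{i=1}^{t} \sum_{K} x_K \rho(K \cap \pi_i).
\end{equation*}
Since $\sum_i \rho(\pi_i) = |R| - t = \rho(R) - (r(\pi)-1)$, rearranging shows that the partition inequality $\sum_K x_K \rc_K^\pi \ge r(\pi)-1$ is equivalent to
\begin{equation*}
\sum_{i=1}^{t}\Bigl(\sum_{K} x_K \rho(K\cap \pi_i)\Bigr) \;\le\; \sum_{i=1}^{t} \rho(\pi_i).
\end{equation*}

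For the direction \eqref{eq:LP-S} $\subseteq$ \eqref{eq:LP-P2}: given any $x$ satisfying all subtour inequalities, applying them to each $\pi_i$ and summing immediately yields the above inequality, hence the partition inequality for $\pi$.

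For the direction \eqref{eq:LP-P2} $\subseteq$ \eqref{eq:LP-S}: given any $x$ feasible for the partition LP and any $S \subseteq R$, I will use the specific partition $\pi_S := \{S\} \cup \{\{r\} \mid r \in R \setminus S\}$. The singletons contribute nothing to $\sum_i \rho(\pi_i)$ and nothing to $\sum_i \sum_K x_K \rho(K \cap \pi_i)$, so the partition inequality for $\pi_S$ reduces to $\sum_K x_K \rho(K \cap S) \le \rho(S)$, which is exactly the subtour inequality for $S$. The boundary cases $S = \varnothing$, $|S|=1$, and $S = R$ are trivial (the last from the shared equality constraint).

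There is no real obstacle here: the proof is essentially a bookkeeping exercise built around \prettyref{lemma:rc}. The only thing to verify carefully is the identification $|K|-1 = \rho(K)$, which is why one needs the (implicit) convention that hyperedges are nonempty, and the observation that singleton parts of $\pi_S$ contribute nothing on either side, so that the partition constraint specializes cleanly to the subtour constraint for $S$.
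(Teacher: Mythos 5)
Your proof is correct and follows essentially the same approach as the paper: both rely on Lemma~\ref{lemma:rc} to translate between $\rc_K^\pi$ and $\rho(K\cap\pi_i)$, both observe that the equality constraints coincide, and both use the same singleton-plus-$S$ partition $\pi_S$ for the reverse inclusion. Your organization---first establishing, under the shared equality constraint, the equivalence of the partition inequality for $\pi$ with $\sum_i\sum_K x_K\rho(K\cap\pi_i)\le\sum_i\rho(\pi_i)$ and then reading off both directions---is a slightly cleaner packaging of the paper's argument (which subtracts inequalities from the equality constraint twice, once per direction), but the mathematical content is identical.
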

\begin{proof}
Let $x$ be any feasible solution
  to the LP \eqref{eq:LP-S}. Note that the equality constraint of \eqref{eq:LP-P2}
 is the same as that of \eqref{eq:LP-S}.
 We now show that $x$ satisfies \eqref{eq:LP-PU2}.
  Fix a partition $\pi=\{\pi_1, \dotsc, \pi_t\}$, so $t=r(\pi)$. For
  each $1 \leq i \leq t$, subtract the inequality constraint in \eqref{eq:LP-S}
  with $S = \pi_i$, from the equality constraint in \eqref{eq:LP-S} to obtain
  \begin{equation}
  \sum_{K \in \K} x_K \Bigl( \rho(K)-\sum_{i=1}^t\rho(K \cap \pi_i) \Bigr)
  \geq \rho(R)-\sum_{i=1}^t\rho(\pi_i). \label{eq:derived}\end{equation}
From \prettyref{lemma:rc}, $\rho(K)-\sum_{i=1}^t \rho(K \cap \pi_i) =
\rc_K^\pi$. We also have $\rho(R)-\sum_{i=1}^t\rho(\pi_i) =
|R|-1-(|R|-r(\pi)) = r(\pi)-1$. Thus $x$ is a feasible solution to the LP \eqref{eq:LP-P2}. \\

\noindent
Now, let $x$ be a feasible solution to \eqref{eq:LP-P2} and it suffices to show that it
satisfies the inequality constraints  of \eqref{eq:LP-S}. Fix a set $S\subset R$.
Note when $S = \varnothing$ that inequality constraint is vacuously true so we may assume $S \neq
\varnothing$. Let $R \bs S = \{r_1, \dotsc, r_u\}$. Consider the partition $\pi =
\{\{r_1\}, \dotsc, \{r_u\}, S\}$. Subtract \eqref{eq:LP-PU2} for this
$\pi$ from the equality constraint in \eqref{eq:LP-P2}, to obtain

\begin{equation}\label{eq:donut} \sum_{K \in \K} x_K
(\rho(K)-\rc_K^{\pi}) \leq \rho(R)-r(\pi)+1.\end{equation} Using
\prettyref{lemma:rc} and the fact that $\rho(K \cap \{r_j\}) = 0$ (the set is either empty or a singleton), we get
 $\rho(K)-\rc_K^{\pi} = \rho(K \cap S)$. Finally, as $\rho(R)-r(\pi)+1 = |R|-1-(|R\bs S|+1)+1 = \rho(S),$ the inequality \eqref{eq:donut} is the same as the constraint needed.
 Thus $x$ is a feasible solution to \eqref{eq:LP-S}, proving the theorem.
 \end{proof}

\subsection{Partition and Bidirected Cut Relaxations in Quasibipartite Instances}\label{sec:lifting}

\begin{theorem}\label{theorem:lifting}
On quasibipartite Steiner tree instances, $\OPT\eqref{eq:LP-B} \ge \OPT\eqref{eq:LP-PUDir}$.
\end{theorem}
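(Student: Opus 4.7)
The plan is to prove $\OPT\eqref{eq:LP-B} \ge \OPT\eqref{eq:LP-PUDir}$, since \prettyref{lem:simple} gives the reverse inequality in general. I would do this via a dual lifting: take an optimal dual solution $z^*$ of \eqref{eq:LP-A} and produce a feasible dual solution $\bar z$ of \eqref{eq:LP-BD} of the same objective value. By LP duality this gives
$\OPT\eqref{eq:LP-B} = \OPT\eqref{eq:LP-BD} \ge \sum_W \bar z_W = \sum_U z^*_U = \OPT\eqref{eq:LP-A} = \OPT\eqref{eq:LP-PUDir}$.

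The key structural input is quasi-bipartiteness: Steiner vertices form an independent set, so every full component is either a size-two hyperedge $\{u,v\}$ between two terminals, or a star whose center is a Steiner vertex $s$ and whose leaves are a terminal subset $K \subseteq N(s) \cap R$. Accordingly, I would define $\bar z$ to be supported on sets of the form $U^* = U \cup S(U)$, where $U$ ranges over valid subsets of $R$ and $S(U) \subseteq V \setminus R$ is a set of Steiner vertices to be determined, by setting $\bar z_{U^*} := z^*_U$. This immediately preserves the objective value, so the entire content of the argument is choosing $S(\cdot)$ so that $\bar z$ satisfies all arc constraints of \eqref{eq:LP-BD}.

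Feasibility decomposes by arc type. For a terminal-terminal arc $(u,v)$, the condition ``$u \in U^*$ and $v \notin U^*$'' reduces to ``$u \in U$ and $v \notin U$'' independently of $S(U)$, and the required bound $\sum_{U: u \in U,\, v \notin U} z^*_U \le c_{uv}$ is exactly the hyperedge constraint in \eqref{eq:LP-A} for the directed size-two hyperedge $\{u,v\}^v$, which always belongs to $\overrightarrow{\K}$ in quasi-bipartite instances. For an arc $(t,s)$ or $(s,t)$ incident to a Steiner vertex $s$, feasibility depends on whether $s \in S(U)$; the rule must simultaneously control the load on incoming arcs $(t,s)$ (active when $s \notin S(U)$) and on outgoing arcs $(s,t)$ (active when $s \in S(U)$), so as to distribute the total load bound coming from the hyperedge constraints for directed stars centered at $s$, namely $\sum_{U: K \cap U \neq \varnothing,\, i \notin U} z^*_U \le \sum_{t \in K} c_{ts}$.

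I expect the main obstacle to be the construction of the Steiner-inclusion rule $S(\cdot)$. The natural approach is local: at each Steiner vertex $s$, treat the relevant portion of $z^*$ as a fractional hypergraph on $N(s)$ and pick $s \in S(U)$ versus $s \notin S(U)$ by a hypergraph-orientation / splitting-off argument (analogous to what is used elsewhere in the paper to relate \eqref{eq:LP-PU} and \eqref{eq:LP-PUDir}), so that the aggregate bound provided by the star-hyperedge constraints can be split into per-arc bounds. Quasi-bipartiteness is essential since it decouples the decisions across Steiner vertices; in a non-quasi-bipartite instance, neighbouring Steiner vertices would couple, and the local rule could no longer be applied independently. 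A symmetric primal alternative, if the dual route proves unwieldy, is to take $x$ feasible for \eqref{eq:LP-B}, rebalance at each Steiner vertex $s$ to obtain flow-conservation (without raising cost), flow-decompose the arcs at $s$ into paths $t_{\mathrm{in}} \to s \to t_{\mathrm{out}}$, and set $x'_{\{t_{\mathrm{in}}, t_{\mathrm{out}}\}^{t_{\mathrm{out}}}}$ to be the corresponding path weight; verifying the cut constraints of \eqref{eq:LP-PUDir} for all valid $U \subseteq R$ would be the analogous hard step.
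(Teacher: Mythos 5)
Your high-level approach matches the paper's: pass to duals and lift a solution $z^*$ of \eqref{eq:LP-A} to a feasible solution of \eqref{eq:LP-BD} of equal value, processing Steiner vertices one at a time and using quasibipartiteness to decouple the per-vertex decisions. Your observations that terminal-terminal arcs are handled by the size-two hyperedge constraints, and that the budget available at a Steiner vertex comes from the star-hyperedge constraints, are exactly what the paper exploits. But the piece you yourself flag as the main obstacle --- the rule deciding which dual mass ends up on sets containing a given Steiner vertex $v$ --- is the whole content of the proof, and the specific recipe you sketch cannot work.

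You propose a \emph{deterministic} lifting: each terminal set $U$ is sent to a single superset $U\cup S(U)$, carrying all of $z^*_U$. The paper instead lifts \emph{fractionally}: at each Steiner vertex $v$ it moves an amount $x_U\in[0,z_U]$ from $U$ to $U\cup\{v\}$, governed by the feasibility system \eqref{eq:L1}--\eqref{eq:L3}, so that a single original $U$ ends up split across many lifted supersets. This extra freedom is genuinely necessary. For example, let $v$ be a Steiner vertex with terminal neighbours $t_1,\dotsc,t_4$, all $v$-incident edges of cost $1$, the root elsewhere, and suppose $z^*_{\{t_1,t_2\}}=z^*_{\{t_3,t_4\}}=2$ (this is feasible for \eqref{eq:LP-A} since every $C_{\{t_i,t_j\}}=2$). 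The arc-$(t_1,v)$ constraint of \eqref{eq:LP-BD} forces $v\in S(\{t_1,t_2\})$, else that arc sees load $2>1$; but then the arc-$(v,t_3)$ constraint sees load $2>1$ from the lifted set $\{t_1,t_2\}\cup S(\{t_1,t_2\})$, which contains $v$ and not $t_3$. A correct lifting moves exactly one of the two units on each of $\{t_1,t_2\}$ and $\{t_3,t_4\}$ up to contain $v$, leaving one unit below. You also omit the laminarity preprocessing (\prettyref{lemma:3lps}), which the paper needs: laminar support is what makes the Farkas alternative to the per-vertex lifting system totally unimodular (\prettyref{lemma:tu}), and hence provably infeasible; some structural reduction of this kind must precede any ``orientation / splitting-off'' argument, which in your sketch remains a placeholder rather than an argument. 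As a side remark, your proposed primal alternative produces only size-two hyperedges and does not obviously satisfy the cut constraints of \eqref{eq:LP-PUDir} for valid $U\subseteq R$: an arc $(u,s)$ with $u\in U$ can feed a path $u\to s\to t_{\mathrm{out}}$ whose other endpoint $t_{\mathrm{out}}$ lies back inside $U$, so its weight is lost to the cut.
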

\noindent

To prove \prettyref{theorem:lifting}, we
look at the duals of the two LPs and we show $\OPT\eqref{eq:LP-BD}
\ge \OPT\eqref{eq:LP-A}$ in quasibipartite instances.
Recall that the support of a solution to \eqref{eq:LP-A} is the family
of sets with positive $z_U$. A family of sets is called \emph{laminar} if for
any two of its sets $A,B$ we have $A\subseteq B, B\subseteq A$, or $A\cap B=\varnothing$.

\begin{lemma} \label{lemma:3lps}
There exists an optimal solution to \eqref{eq:LP-A} whose support is a laminar family of sets.
\end{lemma}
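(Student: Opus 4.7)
I would prove this by a standard uncrossing argument applied to the dual variables. Among the set of optimal solutions to \eqref{eq:LP-A} (a bounded, nonempty polytope since the dual objective is finite), choose $z^*$ maximizing the potential $\Phi(z) := \sum_U z_U |U|^2$. I will show that $\supp(z^*) := \{U : z^*_U > 0\}$ is laminar.

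Suppose otherwise, and pick $A, B \in \supp(z^*)$ that cross, so $A\setminus B$, $B\setminus A$, and $A\cap B$ are all nonempty. Then $A\cap B$ and $A\cup B$ are valid subsets of $R$: they are nonempty and contained in $R\setminus\{r\}$, because $A$ and $B$ are. Let $\epsilon := \min(z^*_A, z^*_B) > 0$ and form $z'$ from $z^*$ by decreasing $z_A$ and $z_B$ each by $\epsilon$ while increasing $z_{A\cap B}$ and $z_{A\cup B}$ each by $\epsilon$. The dual objective $\sum_U z_U$ is preserved.

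The main step is verifying that $z'$ is still dual feasible. Writing $f_K(X) := [X\cap K\neq\varnothing]$, the change in the LHS of the $(K,i)$ constraint equals
\[
\epsilon\Bigl([i\notin A\cap B]\,f_K(A\cap B)+[i\notin A\cup B]\,f_K(A\cup B)-[i\notin A]\,f_K(A)-[i\notin B]\,f_K(B)\Bigr).
\]
A short case analysis on whether $i$ lies in $A$ and/or $B$ shows this is $\le 0$: if $i$ belongs to both, all four ``$i\notin\cdot$'' indicators vanish; if $i$ belongs to exactly one, the expression collapses via $A\cap B\subseteq A, B$ to a nonpositive difference of two $f_K$ values; and if $i$ belongs to neither, all the $[i\notin\cdot]$ indicators equal $1$ and the expression reduces to $f_K(A\cap B)+f_K(A\cup B)-f_K(A)-f_K(B)\le 0$, using submodularity of the ``OR'' function $f_K$.

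Finally, the potential strictly increases: with $a:=|A\setminus B|\ge 1$, $b:=|B\setminus A|\ge 1$, and $c:=|A\cap B|$, one computes $\Phi(z')-\Phi(z^*)=\epsilon\bigl(|A\cap B|^2+|A\cup B|^2-|A|^2-|B|^2\bigr)=2\epsilon ab>0$, contradicting the choice of $z^*$. Hence $\supp(z^*)$ must be laminar. The only nontrivial obstacle is the feasibility case analysis, since the indicator $[i\notin U]$ does not behave as a set-function under union and intersection; the objective invariance and the potential computation are immediate.
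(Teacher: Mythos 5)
Your proof is correct and takes essentially the same route as the paper: both select an optimal dual solution maximizing the potential $\sum_U z_U|U|^2$ and uncross a crossing pair $A,B$ by shifting $\epsilon$ of dual from $A$ and $B$ to $A\cap B$ and $A\cup B$, then observe that the objective is unchanged while the potential strictly increases by $2\epsilon\,|A\setminus B|\,|B\setminus A|$. Your explicit case analysis on $i\in A$, $i\in B$ to verify dual feasibility is just a more spelled-out version of the paper's terse remark that any appearance of $A\cup B$ or $A\cap B$ in a constraint's summand is matched by an appearance of $A$ or $B$ (and both, if both appear).
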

\begin{proof}
  Choose an optimal solution $z$ to \eqref{eq:LP-A} which maximizes
  $\sum_U z_U|U|^2$ among all optimal solutions. We claim that the
  support of this solution is laminar. Suppose not and there exists
  $U$ and $U'$ with $U\cap U' \neq \varnothing$ and $z_U > 0$ and
  $z_{U'} > 0$. Define $z'$ to be the same as $z$
  except $z'_{U} = z_U - \delta$, $z'_{U'} = z_{U'} - \delta$,
  $z'_{U\cup U'} = z_{U\cup U'} + \delta$ and $z'_{U\cap U'} =
  z_{U\cap U'} + \delta$; we will show for small $\delta > 0$, $z'$ is feasible. Note that
  $U\cap U'$ is not empty and $U\cup U'$ doesn't contain $r$, and the
  objective value remains unchanged. Also note that for any $K$ and
  $i\in K$, if $z_{U\cup U'}$ or $z_{U\cap U'}$ appears in the summand of a constraint,
  then at least one of $z_{U}$ or $z_{U'}$ also appears. If both
  $z_{U\cup U'}$ and $z_{U\cap U'}$ appears, then both $z_U$ and
  $z_{U'}$ appears. Thus $z'$ is an optimal solution and $\sum_U
  z'_U|U|^2 > \sum_U z_U|U|^2$, contradicting the choice of $z$.
\end{proof}

\begin{lemma}\label{lem:main-qb}
  For quasibipartite instances, given a solution of \eqref{eq:LP-A}
  with laminar support, we can get a feasible solution to
  \eqref{eq:LP-BD} of the same value.
\end{lemma}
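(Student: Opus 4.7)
Proof plan. I plan to construct a feasible solution $z'$ for \eqref{eq:LP-BD} from $z$ by extending each set $U\in\supp(z)\subseteq 2^R$ to a set $U'\subseteq V$ with $U'\cap R = U$, and setting $z'_{U'} = z_U$; this immediately preserves the LP objective. The only structural requirement is that the extension remain a laminar family on $V$. Since $\supp(z)$ is already laminar, it forms a forest in which larger sets are ancestors, and extending it requires, for each Steiner vertex $s$, choosing a unique ``home node'' $U(s)\in\supp(z)$ and declaring $s\in U'$ precisely when $U$ is an ancestor of $U(s)$ (or $U=U(s)$).

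Feasibility of $z'$ splits by arc type. For an arc $(u,v)$ with both endpoints in $R$, the constraint reads $\sum_{U\ni u,\, v\notin U} z_U \le c_{uv}$, which is exactly the \eqref{eq:LP-A} constraint for the hyperedge $K=\{u,v\}$ with head $v$, since $C_{\{u,v\}}\le c_{uv}$; so terminal-terminal arcs are automatically satisfied regardless of the Steiner placement. For an arc incident to a Steiner vertex $s$, a short inspection of the laminar forest shows that the $z'$-sum telescopes to a $z$-sum along a specific chain in the forest: for arc $(s,v)$, the chain of ancestors of $U(s)$ not yet containing $v$; for arc $(v,s)$, the chain of sets containing $v$ not yet containing $U(s)$. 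I will bound each such chain sum by $c_{sv}$ using an \eqref{eq:LP-A} constraint for a star hyperedge through $s$, invoking the quasibipartite hypothesis: every full component through a Steiner vertex is a star with terminal leaves, so $C_K\le \sum_{t\in K}c_{st}$ whenever $K\subseteq N(s)$ (the terminal neighborhood of $s$).

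The main obstacle is to pick $U(s)$ so that \emph{every} arc incident to $s$ is simultaneously satisfied. The natural candidate, $U(s)=$ the smallest set in $\supp(z)$ containing all of $N(s)$, handles the outgoing arcs cleanly: for $v\in N(s)$ the outgoing chain is empty (since $v\in U(s)$ lies in every ancestor), and for $v\notin N(s)$ it is bounded by the \eqref{eq:LP-A} constraint with $K=N(s)\cup\{v\}$ and head $v$. The subtler case is incoming arcs $(v,s)$ with $v\in N(s)$, whose chain sums lie strictly below $U(s)$; I expect to bound these by varying the head across appropriate sub-stars $K\subseteq N(s)\cup\{v\}$ in \eqref{eq:LP-A}, and possibly by refining the choice of $U(s)$ (e.g.\ taking the smallest set containing $N(s)$ \emph{and} tight for the relevant star-constraint). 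This step is the crux; but because every hyperedge cost in a quasibipartite instance is realized by a star at a single Steiner center, the accounting should decompose cleanly into contributions of the form $c_{st}$, allowing each chain sum to be bounded by the single edge cost $c_{sv}$ as required.
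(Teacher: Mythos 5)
Your plan imposes a constraint that the target LP does not require: you want the extended family $\{U'\}$ to remain laminar on $V$, so that each Steiner vertex $s$ has a single ``home node'' $U(s)$ and $s$ belongs precisely to the ancestors of $U(s)$. But \eqref{eq:LP-BD} has no laminarity requirement, and insisting on it makes the construction fail. The paper's proof instead does a \emph{fractional} lift: at each Steiner vertex $v$ it solves a small LP deciding how much of each $z_U$ to move to $z_{U\cup\{v\}}$, proves that LP feasible via Farkas' lemma plus a total unimodularity argument, and only ever uses laminarity of the projection to $R$. The resulting support on $V$ need not be laminar.

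Here is a concrete instance where no integral home-node choice works but a fractional lift does. Take $R=\{a,b,d\}$ with root $d$, a single Steiner vertex $s$ adjacent to $a$ and $b$ with $c_{as}=c_{bs}=1$, and terminal--terminal edges of cost $10$; $s$ is not adjacent to $d$. A feasible, laminar-support solution to \eqref{eq:LP-A} is $z_{\{a\}}=z_{\{b\}}=2$, $z_{\{a,b\}}=8$ (check: $C_{\{a,b\}}=2$, $C_{\{a,d\}}=C_{\{b,d\}}=10$, $C_{\{a,b,d\}}=\infty$; the constraints $z_{\{b\}}\le 2$, $z_{\{a\}}\le 2$, $z_{\{a\}}+z_{\{a,b\}}\le 10$, $z_{\{b\}}+z_{\{a,b\}}\le 10$ are all met). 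Now try a single home node for $s$. Home $=\{a,b\}$: arc $(a,s)$ sees $z'_{\{a\}}=2>c_{as}=1$. Home $=\{a\}$: arc $(b,s)$ sees $z'_{\{b\}}=2>1$. Home $=\{b\}$: symmetric. Home $=$ none: arc $(a,s)$ sees $z_{\{a\}}+z_{\{a,b\}}=10>1$. Every integral choice fails. The paper's lifting LP \eqref{eq:L1}--\eqref{eq:L3} has the feasible solution $x_{\{a\}}=x_{\{b\}}=1$, $x_{\{a,b\}}=8$, which sends mass to \emph{both} $\{a,s\}$ and $\{b,s\}$ --- sets that cross on $V$. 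So laminarity is genuinely lost and the split is genuinely fractional in spirit (even though integral here, it crosses two incomparable branches simultaneously).

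You did flag the incoming-arc bound as the crux you hadn't nailed down, which is honest; the point is that it cannot be fixed by a cleverer rule for a single $U(s)$. The right formulation is the LP feasibility problem the paper sets up, and the real work is the Farkas/TU argument showing it is always feasible, which leans on the quasibipartite star structure in exactly the way you anticipated (the key step bounds the objective of the Farkas dual LP using the \eqref{eq:LP-A2} constraint for a star hyperedge $K\subseteq\Gamma(v)$ at the Steiner center $v$), but through this LP-duality machinery rather than a combinatorial placement.
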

\begin{proof}
  This lemma is the heart of the theorem, and is a little technical to
  prove. We first give a sketch of how we convert a feasible
  solution $z$ of \eqref{eq:LP-A} into a feasible solution to
  \eqref{eq:LP-BD} of the same value.

  Comparing \eqref{eq:LP-A} and \eqref{eq:LP-BD} one first notes that
  the former has a variable for every valid subset of the terminals,
  while the latter assigns values to all valid subsets of the entire
  vertex set.
  We say that an edge $uv$ is \emph{satisfied} for a candidate
  solution $z$, if both a) $\sum_{U:u\in U, v\notin U} z_{U} \le
  c_{uv}$ and b) $\sum_{U:v\in U, u\notin U} z_{U} \le c_{uv}$
  hold; $z$ is then feasible for \eqref{eq:LP-BD} if {\em all} edges
  are satisfied.

  Let $z$ be a feasible solution to \eqref{eq:LP-A}.
  One easily verifies that all terminal-terminal edges are
  satisfied. On the other hand, terminal-Steiner edges may
  initially not be satisfied. To see this consider the Steiner vertex
  $v$ and its neighbours depicted in \prettyref{fig:lift} below.
  Initially, none of the sets in $z$'s support contains $v$, and
  the load on the edges incident to $v$ is quite {\em skewed}:
  the left-hand side of condition a) above may be large, while
  the left-hand side of condition b) is initially $0$.

  To construct a valid solution for \eqref{eq:LP-BD}, we therefore
  {\em lift} the initial value $z_S$ of each terminal subset $S$ to
  supersets of $S$, by adding Steiner vertices. The lifting
  procedure processes each Steiner vertex $v$ one at a time; when
  processing $v$, we change $z$ by moving dual from some sets $U$ to
  $U \cup \{v\}$. Such a dual transfer decreases the left-hand side
  of condition a) for edge $uv$, and increases the
  (initially $0$) left-hand sides of condition b) for edges connecting $v$ to
  neighbours other than $v$.

  We will soon see that there is a way of carefully lifting duals
  around $v$ that ensures that all edges incident to $v$ become
  satisfied. The definition of our procedure will ensure
  that these edges remain satisfied for the rest of the lifting
  procedure. Since there are no Steiner-Steiner edges, all edges will
  be satisfied once all Steiner vertices are processed.

 \piccaptioninside
  \piccaption{\label{fig:lift} Lifting variable $z_U$.}
  \parpic(5.5cm,4.5cm)[fr]{\includegraphics[scale=.8]{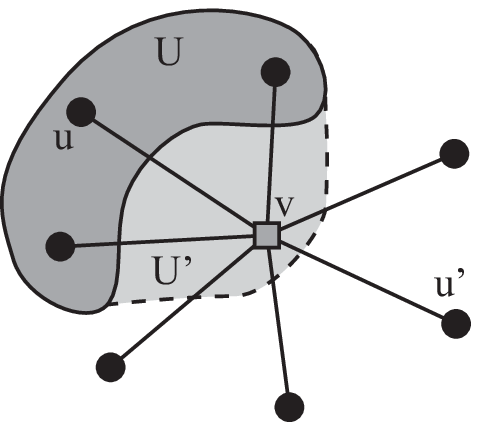}}

  Throughout the lifting procedure, we will maintain that $z$ remains
  unchanged, when projected to the terminals. Formally, we maintain
  the following crucial {\em projection invariant}:
  \begin{equation}\label{eq:pi}\tag{PI}
  \mbox{\begin{minipage}{7cm}
    The quantity
    $\sum_{U: S\subseteq U
      \subseteq S\cup (V\setminus R)} z_{U} $\\[1ex]
    remains constant, for all terminal
    sets $S$.
  \end{minipage}}
\end{equation}
 This invariant leads to two observations: first, the
  constraint \eqref{eq:LP-A2} is satisfied by $z$ at all times, even
  when it is defined on subsets of all vertices; second,
  $\sum_{U\subseteq V} z_U$ is constant throughout, and
  the objective value of $z$ in \eqref{eq:LP-BD} is not affected
  by the lifting. The existence of a lifting of duals around Steiner vertex $v$ such
  that \eqref{eq:pi} is maintained, and such that all edges incident
  to $v$ are satisfied can be phrased as a feasibility problem for
  a linear system of inequalities. We will use
  Farkas' lemma and the feasibility of $z$ for \eqref{eq:LP-A2}
  to complete the proof.

  We now fill in the proof details. Let $\Gamma(v)$ denote the
  set of neighbours of vertex $v$ in the given graph $G$. In each
  iteration, where we process Steiner node $v$, let
 $$ \U_v := \{U: z_U > 0 ~~\textrm{and} ~~U\cap \Gamma(v) \neq \varnothing \} $$
 be the sets in $z$'s support that contain neighbours of $v$.  Note
 that $U\in \U_v$ could contain Steiner vertices on which the lifting
 procedure has already taken place. However, by \eqref{eq:pi} and by
 \prettyref{lemma:3lps} the multi-family $\{U\cap R: U\in\U_v\}$ is
 laminar.  In the lifting process, we will transfer $x_U$ units of the
 $z_U$ units of dual of each set $U \in \U_v$ to the set $U'=U \cup
 \{v\}$; this decreases the dual load (LHS of \eqref{eq:LP-BD1}) on
 arcs from $U \cap \Gamma(v)$ to $v$ (e.g.~$uv$ in \prettyref{fig:lift}) and
 increases the dual load on arcs from $v$ to $\Gamma(v) \bs U$
 (e.g.~$vu'$ in the figure).  The following system of inequalities
 describes the set of feasible liftings.

 \begin{align}
   \forall U\in \U_v: &\qquad x_U \le z_U \tag{L1} \label{eq:L1}\\
   \forall u \in \Gamma(v): &\qquad \sum_{U: u\in U} (z_U - x_U) \le c_{uv}  \tag{L2} \label{eq:L2}\\
   \forall u \in \Gamma(v):&\qquad \sum_{U: u\notin U} x_U \le c_{uv}  \tag{L3} \label{eq:L3}
\end{align}

\begin{claim}\label{claim:primal}
  If \eqref{eq:L1}, \eqref{eq:L2}, \eqref{eq:L3} have a feasible
  solution $x \ge 0$, then the lifting procedure can be performed at Steiner
  vertex $v$, while maintaining the projection invariant property.
\end{claim}
\begin{proof}
  Define the new solution to be $z_U := z_U - x_U$, and, $z_{(U\cup
    v)} := x_U$, for all $U\in \U_v$, and $z_U$ remains unchanged for
  all other $U$. It is easy to check that all edges which were
  satisfied remain satisfied, and \eqref{eq:L2} and \eqref{eq:L3}
  imply that all edges incident to $v$ are satisfied. Also note that
  the projection invariant property is maintained.
\end{proof}

By Farkas' lemma, if \eqref{eq:L1}, \eqref{eq:L2}, \eqref{eq:L3} do {\em
  not} have a feasible solution $x \ge 0$, then there exist non-negative multipliers ---
$\lambda_U$ for all $U\in \U_v$, and $\alpha_u,\beta_u$ for all $u\in
\Gamma(v)$ --- satisfying the following dual set of linear inequalities:
\begin{align}
\sum_{U\in \U_v} \lambda_Uz_U + \sum_{u\in \Gamma(v)} \alpha_u \bigl(c_{uv}
- \sum_{U:u\in U} z_U\bigr) + \sum_{u\in \Gamma(v)} \beta_u c_{uv}
& \quad < \quad 0 \label{eq:D1} \tag{D1} \\
\forall U \in \U_v:  \lambda_U - \sum_{u\in U} \alpha_u +
\sum_{u\notin U} \beta_u & \quad \ge \quad 0 \label{eq:D2} \tag{D2}
\end{align}

As a technicality, note that the sub-system
$\{\eqref{eq:L1},\eqref{eq:L2},x\ge0\}$ is feasible --- take $x=z$.
Thus any $\alpha, \beta, \lambda$ satisfying \eqref{eq:D1} and
\eqref{eq:D2} has $\sum_u \beta_u > 0$, so by dividing all $\alpha,
\beta, \lambda$ by $\sum_i \beta_i$, we may assume without loss of
generality that
\begin{align}
  \sum_{u \in \Gamma(v)} \beta_u = 1 \tag{D3}. \label{eq:D3}
\end{align}
Subtracting \eqref{eq:D3} from \eqref{eq:D2} allows us to rewrite
the latter set of constraints conveniently as
\begin{align}
\forall U \in \U_v: &\qquad \lambda_U - \sum_{u\in U} (\alpha_u + \beta_u) + 1\ge 0  \tag{D2'}. \label{eq:D2'}
\end{align}

The following claim shows that \eqref{eq:L1}, \eqref{eq:L2},
\eqref{eq:L3} does have a feasible solution, and thus by
\prettyref{claim:primal}, lifting can be done, which completes the
proof of \prettyref{lem:main-qb}.

\begin{claim}\label{claim:dual}
There exists no feasible solution to $\{\alpha,\beta,\lambda \ge 0: \eqref{eq:D1},\eqref{eq:D2'}, \textrm{and } \eqref{eq:D3}\}$.
\end{claim}
\begin{proof}
  Consider the linear program which minimizes the LHS of \eqref{eq:D1}
  subject to the constraints \eqref{eq:D2'} and \eqref{eq:D3}. We show that the LP
  has value at least $0$, which will complete the proof.

  Let $(\lambda^*,\alpha^*,\beta^*)$ be an
  optimal solution to the LP. In \prettyref{lemma:tu} we will show that the
  constraint matrix of the LP is
  totally unimodular; hence, since the right-hand side of the given
  system is integral, we may assume that $\lambda^*, \alpha^*$, and
  $\beta^*$ are non-negative and integral. From \eqref{eq:D3} we infer
  \begin{equation}
\textrm{There is a unique $\bar{u} \in \Gamma(v)$ for which
      $\beta^*_{\bar{u}}=1$; for all $u \neq \bar{u}$, $\beta^*_u=0$.}\label{eq:olda}\end{equation}
 Moreover, since each $\lambda_U$ appears only in the two constraints \eqref{eq:D2'} and $\lambda_U \ge 0$, and since $\lambda_U$ has nonnegative coefficient in the objective, we may assume \begin{equation} \lambda^*_U = \lambda^*_U(\alpha^*, \beta^*) := \max\{\sum_{u\in U} (\alpha^*_u + \beta^*_u) - 1, 0\} \label{redu}\end{equation}
  for all $U$.

 Next, we establish the following:
  \begin{equation}
    \textrm{$\alpha^*_u+\beta^*_u \in \{0,1\}$ for all $u \in \Gamma(v)$.}\label{eq:oldb}\end{equation}
  Suppose for the sake of contradiction that property \eqref{eq:oldb} does not hold for our solution.
  Let $u$ be such that $\alpha^*_u+\beta^*_u \geq 2$. By \eqref{eq:olda}, $\alpha^*_u \ge 1$. We propose the following
  update to our solution:
  decrease $\alpha^*_u$ by $1$ (which by \eqref{redu} will decrease $\lambda^*_U$ by
  $1$ for all $U\in \U_v$). This maintains the
  feasibility of \eqref{eq:D2'}, and the objective value
  decreases by
  $$ \sum_{U\in \U_v : u \in U}z_U + (c_{uv} - \sum_{u\in U} z_U) $$
  which is non-negative as $c \geq 0$.
  By repeating this operation, we may clearly ensure property \eqref{eq:oldb}.

  Let $K\subseteq \Gamma(v)$ be the set $\{u \mid \alpha^*_u + \beta^*_u =
  1\}$ and recall $\bar{u}$ is the unique terminal with
  $\beta^*_{\bar{u}}=1$; $\bar{u}$ is clearly a member of $K$.
  At $(\alpha^*, \beta^*, \lambda^*)$, we evaluate the objective and collect like terms to get value
  \begin{align*}
  \sum_{U\in \U_v} z_U\rho(U\cap K) + \sum_{u\in K\setminus \bar{u}}
  (c_{uv} - \sum_{U: u\in U}z_U) + c_{\bar{u}v}
  &=
  \sum_{u\in K} c_{uv} + \sum_{U\in \U_v}z_U(\rho(U\cap K) -|(K \bs \bar{u}) \cap U|)  \\
  &= \sum_{u\in K} c_{uv} - \sum_{U \in \U_v: U \cap K \neq \varnothing, \bar{u} \not\in U}z_U
  \end{align*}
  where the last equality follows by considering cases. Finally, combining the fact that
  $\sum_{u\in K} c_{uv} \ge C_K$ (since these edges form one possible full component on terminal set $K$) together with \eqref{eq:LP-A2} for the
 pair $(K,\bar{u})$, it follows that the LP's optimal value is non-negative as needed.
\end{proof}

\begin{lemma}\label{lemma:tu}
The incidence matrix defined by \eqref{eq:D2'} and \eqref{eq:D3} is totally unimodular.
\end{lemma}
\begin{proof}
  The incidence matrix has $|\U_v|+1$ rows ($|\U_v|$ corresponding to
  \eqref{eq:D2'} and one last row corresponding to \eqref{eq:D3}) and
  $|\U_v| + 2|\Gamma(v)|$ columns. Furthermore, the columns corresponding to $\alpha_u$'s are same
  as those corresponding to $\beta_u$'s, except for the last row,
  where there are $0$'s in the $\alpha$-columns and $1$'s in the
  $\beta$-columns.

  To show that this matrix is totally unimodular
  we use Ghouila-Houri's characterization of total unimodularity (e.g.\ see \cite[Thm. 19.3]{Sc86}):
  \begin{theorem}[Ghouila-Houri 1962] A matrix is totally unimodular iff the following holds for \emph{every} subset $\mathcal{R}$ of rows: we can assign weights $w_r \in \{-1, +1\}$ to each row $r \in \mathcal{R}$ such that $\sum_{r\in \mathcal{R}} w_r r$ is a $\{0, \pm1\}$-vector.\end{theorem}

  Note that we can safely ignore the columns corresponding to
  variables $\lambda_U$ for sets $U \in \U_v$, since each of them contains a
  single $1$ occurring in constraint \eqref{eq:D2'} for set
  $U$.

  The row subset $\mathcal{R}$ corresponds to a subset of $\U_v$ --- which we will denote $\mathcal{R} \cap \U_v$ --- plus possibly the single row corresponding to \eqref{eq:D3}. Each row in $\mathcal{R} \cap \U_v$ has its values determined by the characteristic vector of $U\cap \Gamma(v)$. So long as any set appears more than once in $\{U\cap \Gamma(v) \mid U \in \mathcal{R} \cap \U_v\}$ we can assign one copy weight $+1$ and the other copy weight $-1$; these rows cancel out. Thus, henceforth we assume $\{U\cap \Gamma(v) \mid U \in \mathcal{R} \cap \U_v\}$ has no duplicate sets.

  There is a standard
  representation of a laminar family as a forest of rooted trees,
  where there is a node corresponding to each set, with containment in the family corresponding to ancestry in the forest.  Given the
  forest for the laminar family $\{U\cap \Gamma(v) \mid U \in \mathcal{R} \cap \U_v\}$, the assignment of weights to the rows of the matrix is as
  follows.  Let the root nodes of all trees be at height $0$ with
  height increasing as one goes to children nodes.  Give weight $-1$
  to rows corresponding to nodes at even height, and weight $+1$ to rows
  corresponding to nodes at odd height. If $\mathcal{R}$ contains
  the row corresponding to \eqref{eq:D3}, give it weight $+1$.

  Finally, let us argue that these weights have the needed property.
  Consider first a column corresponding to $\alpha_u$ for any $u$.
  The rows of $\mathcal{R}$ with $1$ in this column form a path, from the largest set containing $u$ (which is a root node) to the smallest set containing $u$.
  The weighted sum in this column is an alternating sum $-1+1-1+1\dotsb$, which is either $-1$ or $0$, which is in $\{0, \pm1\}$ as needed. Second, in a column for some $\beta_u$, if $\mathcal{R}$ doesn't contain (resp.\ contains) the row corresponding to \eqref{eq:D3}, the weighted sum is the same as for $\alpha_u$ (resp.\ plus 1); in either case its weighted sum is in $\{0, \pm1\}$ as needed.
\end{proof}

This finishes the proof of \prettyref{lem:main-qb}, and hence also
that of \prettyref{theorem:lifting}.
\end{proof}

}{
  For lack of space, we present only sketches for our main equivalence
  results in this extended abstract, and refer the reader to
  \cite{CKP09} for details.

  \begin{theorem}\label{theorem:p-pu}
    The LPs \eqref{eq:LP-P2} and \eqref{eq:LP-PU} have the same optimal
    value.
  \end{theorem}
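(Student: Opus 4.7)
The direction $\OPT\eqref{eq:LP-PU} \le \OPT\eqref{eq:LP-P2}$ is immediate, since any feasible solution to \eqref{eq:LP-P2} is feasible for \eqref{eq:LP-PU}. So the real content is the reverse inequality, which I would prove by exhibiting some optimal solution of \eqref{eq:LP-PU} that satisfies the equality constraint $\sum_K x_K(|K|-1) = |R|-1$. To make this work, I need two structural observations about the Steiner tree setting: the hyperedge family $\K$ is down-closed (every nonempty subset of terminals is a valid hyperedge), and the costs $C_K$ are non-decreasing with respect to inclusion (since spanning more terminals cannot make the min-cost full component cheaper).

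My plan is to choose an optimal solution $x^*$ of \eqref{eq:LP-PU} that lexicographically minimizes $\sum_K x^*_K |K|$ among all optima, and then argue it satisfies the equality. The key claim is: for every $K$ with $x^*_K > 0$ and every $r \in K$, there exists a tight partition $\pi$ whose part containing $r$ is disjoint from $K \setminus \{r\}$. Call this the \emph{separation property}. If the separation property fails for some $K$ and $r$, then every tight partition $\pi$ satisfies $\rc_K^\pi = \rc_{K \setminus \{r\}}^\pi$; I would then perform a \textsc{Shrink} operation, transferring a small amount $\delta>0$ from $x^*_K$ to $x^*_{K \setminus \{r\}}$. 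Because $\K$ is down-closed this is well-defined, because costs are non-decreasing this does not increase the objective, and because tight partitions are unaffected and non-tight partitions have slack, an appropriately small $\delta$ preserves feasibility in \eqref{eq:LP-PU}. But this shrink strictly decreases $\sum_K x^*_K|K|$, contradicting the choice of $x^*$.

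With the separation property in hand, I would invoke \prettyref{ppty:quote}: the family $\tight(x^*)$ of tight partitions is closed under meets, so the meet $\pi^* := \bigwedge_{\pi \in \tight(x^*)} \pi$ is itself tight. For each $K$ with $x^*_K > 0$, the separation property, applied for each $r \in K$, shows that $\pi^*$ places every vertex of $K$ in a distinct part, i.e.\ $\rc_K^{\pi^*} = |K|-1$. Using tightness at $\pi^*$ and this identity,
\begin{equation*}
r(\pi^*) - 1 \;=\; \sum_{K \in \K} x^*_K\,\rc_K^{\pi^*} \;=\; \sum_{K \in \K} x^*_K (|K|-1) \;\ge\; |R|-1,
\end{equation*}
where the final inequality is the partition constraint in \eqref{eq:LP-PU} applied to $\overline{\pi}$. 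Since $\overline{\pi}$ is the unique partition of maximal rank $|R|$, this forces $\pi^* = \overline{\pi}$, so the inequality is an equality, which is exactly the extra constraint defining \eqref{eq:LP-P2}. Therefore $x^*$ is feasible for \eqref{eq:LP-P2} with the same objective, giving $\OPT\eqref{eq:LP-P2} \le \OPT\eqref{eq:LP-PU}$.

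The main obstacle is the separation-property argument: specifically, verifying that the \textsc{Shrink} operation preserves feasibility by a concrete choice of $\delta$, which requires knowing that non-tight partitions have a strictly positive slack and that the change in rank contributions is controlled (indeed, $\rc_K^\pi - \rc_{K \setminus \{r\}}^\pi \in \{0,1\}$ for every $\pi$, so feasibility only degrades on partitions where this difference is $1$, all of which are non-tight by assumption). Setting $\delta$ equal to the minimum slack over such partitions, capped by $x^*_K$, is the natural choice, and it is positive precisely because the separation property fails. Everything else — monotonicity of costs, down-closedness used to form $K \setminus \{r\}$, and the application of \prettyref{ppty:quote} — is routine once this step is in place.
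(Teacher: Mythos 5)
Your proof is correct and follows essentially the same route as the paper: both pick the optimum of \eqref{eq:LP-PU} minimizing $\sum_K x_K|K|$, prove the separation property (the paper's Claim~\ref{claim:inter}) by a \Shrink\ step whose feasibility rests on the slack of non-tight partitions, then apply \prettyref{ppty:quote} to the meet of all tight partitions and conclude $\pi^*=\overline{\pi}$. The only cosmetic slip is the word ``lexicographically'' --- $\sum_K x_K|K|$ is a scalar, so you simply want its minimizer among the optima.
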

  \noindent{\em Proof sketch.} To show this, it suffices
    to find an optimum solution of \eqref{eq:LP-PU} which
    satisfies the equality in \eqref{eq:LP-P2}; i.e., we want
    to find a solution for which the maximal-rank partition
    $\overline{\pi}$ is tight.  We pick the optimum
    solution to \eqref{eq:LP-PU} which minimizes the sum $\sum_{K\in
      \K} x_K|K|$.  Using \prettyref{ppty:quote}, we show
    that either $\overline{\pi}$ is tight or there is a {\em
      shrinking} operation which decreases $\sum_{K\in \K} x_K|K|$
    without increasing the cost. Since the latter is impossible, the
    theorem is proved.

  \begin{theorem}\label{theorem:spe}
    The feasible regions of \eqref{eq:LP-P2} and \eqref{eq:LP-S} are the
    same.
  \end{theorem}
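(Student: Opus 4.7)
The plan is to prove feasibility in both directions using the identity from Lemma~\ref{lemma:rc}, namely $\rho(K) = \rc_K^\pi + \sum_{i=1}^{t} \rho(K \cap \pi_i)$ for any partition $\pi = \{\pi_1,\dotsc,\pi_t\}$. First I observe that the equality constraints of \eqref{eq:LP-P2} and \eqref{eq:LP-S} are literally the same: for $|K| \geq 2$ we have $\rho(K) = |K|-1$, and singleton hyperedges contribute zero to both sides, so $\sum_K x_K\rho(K) = \rho(R)$ is the same as $\sum_K x_K(|K|-1) = |R|-1$. Thus it suffices to translate between the partition inequalities and the subtour inequalities.

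For the forward direction, let $x$ be feasible for \eqref{eq:LP-S} and fix a partition $\pi = \{\pi_1,\dotsc,\pi_t\}$. I would apply the subtour inequality to each part $S = \pi_i$ and sum:
\begin{equation*}
\sum_K x_K \sum_{i=1}^t \rho(K \cap \pi_i) \;\leq\; \sum_{i=1}^t \rho(\pi_i).
\end{equation*}
Subtracting this from the equality $\sum_K x_K \rho(K) = \rho(R)$ and invoking Lemma~\ref{lemma:rc} rewrites the left-hand side as $\sum_K x_K \rc_K^\pi$, while the right-hand side becomes $\rho(R) - \sum_i \rho(\pi_i) = (|R|-1) - (|R|-t) = r(\pi)-1$. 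This is exactly the partition inequality \eqref{eq:LP-PU2}.

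For the reverse direction, let $x$ be feasible for \eqref{eq:LP-P2} and fix a nonempty $S \subsetneq R$. I would pick the partition $\pi$ whose parts are $S$ together with singletons $\{r\}$ for each $r \in R\setminus S$, so $r(\pi) = |R\setminus S|+1$. Since singleton parts have $\rho(\{r\})=0$, Lemma~\ref{lemma:rc} gives $\rc_K^\pi = \rho(K) - \rho(K\cap S)$. Subtracting the partition inequality $\sum_K x_K \rc_K^\pi \geq r(\pi)-1$ from the equality $\sum_K x_K \rho(K) = \rho(R)$ yields $\sum_K x_K \rho(K\cap S) \leq \rho(R) - (r(\pi)-1) = (|R|-1) - (|R|-|S|) = \rho(S)$, which is the subtour inequality for $S$. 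The case $S=\varnothing$ is vacuous since $\rho(\varnothing)=0$ and all terms on the left are nonnegative.

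There is no real obstacle here; the proof is a direct two-way translation, and the only subtlety is making sure the rank identity of Lemma~\ref{lemma:rc} is applied correctly and that the partition chosen in the reverse direction isolates the chosen set $S$ as a single nontrivial part. Since each step is an exact equality or a valid inequality of the corresponding LP, the two feasible regions coincide.
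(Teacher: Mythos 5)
Your proof is correct and follows essentially the same route as the paper's: establish that the equality constraints coincide, then translate each partition inequality into a sum of subtour inequalities (and conversely, each subtour inequality into a partition inequality for the partition $\{S\} \cup \{\{r\} : r \in R\setminus S\}$), with Lemma~\ref{lemma:rc} as the bridge in both directions.
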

  \noindent{\em Proof sketch.} We show that the inequalities defining
    \eqref{eq:LP-P2} are valid for \eqref{eq:LP-S}, and
    vice-versa. Note that both have the same equality and
    non-negativity constraints. To show that the partition inequality
    of \eqref{eq:LP-P2} for $\pi$ holds for any $x \in
    \eqref{eq:LP-S},$ we use the subtour inequalities in
    \eqref{eq:LP-S} for every part of $\pi$. For the other direction,
    given any subset $S\subseteq R$, we invoke the inequality in
    \eqref{eq:LP-P2} for the partition $\pi := \{\{S\} \textrm{ as one part and
      the remaining terminals as singletons}\}$.

  \begin{theorem}\label{theorem:lifting}
    On quasibipartite Steiner tree instances,
    $\OPT\eqref{eq:LP-B} \ge \OPT\eqref{eq:LP-PUDir}$.
\end{theorem}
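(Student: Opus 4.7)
\noindent\textit{Proof plan.} The plan is to compare the LP duals. By LP duality it suffices to show $\OPT\eqref{eq:LP-BD} \ge \OPT\eqref{eq:LP-A}$ on quasibipartite instances. I would start with an optimal dual solution $z$ to \eqref{eq:LP-A} and construct from it a feasible solution $\beta$ to \eqref{eq:LP-BD} of the same objective value. The first step is a standard uncrossing argument: among all optima of \eqref{eq:LP-A}, pick one maximizing $\sum_U z_U |U|^2$; if two positive sets $U,U'$ cross, shifting $\delta>0$ of the dual from $U,U'$ to $U\cap U'$ and $U\cup U'$ preserves feasibility (since the constraint $\sum_{U: K\cap U\neq\varnothing, i\notin U} z_U \le C_K$ is preserved term by term: both $U\cap U'$ and $U\cup U'$ appear on the LHS only if $U$ or $U'$ did) and strictly increases $\sum z_U|U|^2$, a contradiction. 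Hence we may assume $\supp(z)$ is a laminar family of terminal sets.

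The heart of the argument is the \emph{lifting}: turn $z$, defined on terminal subsets, into a solution $\beta$ on vertex subsets of $V$ with the same total value. Using quasibipartiteness (no Steiner-Steiner edges), I process one Steiner vertex $v$ at a time, transferring dual from each $U \in \U_v := \{U\in\supp(\beta) : U\cap\Gamma(v)\neq\varnothing\}$ to $U\cup\{v\}$. The target is to choose transfer amounts $x_U\in[0,z_U]$ so that for every $u\in\Gamma(v)$ both arcs of edge $uv$ become satisfied, i.e.\ $\sum_{U: u\in U}(z_U-x_U)\le c_{uv}$ (arc $u\to v$) and $\sum_{U: u\notin U} x_U\le c_{uv}$ (arc $v\to u$). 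Throughout I maintain the projection invariant $\sum_{U:U\cap R=S}\beta_U = z_S$ for each terminal set $S$, which guarantees (i) that edges already satisfied at previous Steiner vertices remain satisfied, (ii) that terminal-terminal edges stay satisfied (their constraints follow from the original \eqref{eq:LP-A2} applied to the two-terminal full component), and (iii) that the objective value is unchanged.

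The key technical step is showing that a feasible lifting always exists at $v$. I phrase the system \eqref{eq:L1}--\eqref{eq:L3} and apply Farkas' lemma; a certificate of infeasibility is multipliers $\lambda_U\ge 0$ (for \eqref{eq:L1}), $\alpha_u\ge 0$ (for \eqref{eq:L2}), $\beta_u\ge 0$ (for \eqref{eq:L3}) violating \eqref{eq:D1}. Since $\{$\eqref{eq:L1},\eqref{eq:L2}$\}$ is feasible via $x=z$, one must have $\sum_u\beta_u>0$, so I may normalize $\sum_u\beta_u=1$. The plan is then to minimize \eqref{eq:D1}'s LHS over the feasibility polytope defined by the normalized constraint, argue the constraint matrix is totally unimodular (use Ghouila-Houri, exploiting that the $\alpha$- and $\beta$-columns differ only in the normalization row, and that the $\{U\cap\Gamma(v): U\in\U_v\}$ rows form a laminar family which can be signed by parity of depth in its forest representation), and hence restrict attention to an integral optimal $(\lambda^*,\alpha^*,\beta^*)$. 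A short argument driving any $\alpha_u^*+\beta_u^*\ge 2$ down to $\{0,1\}$ (this can only decrease the objective, since the incremental change has sign $c_{uv}\ge 0$) then reduces us to the case where $K:=\{u:\alpha_u^*+\beta_u^*=1\}$ together with the unique $\bar u$ with $\beta_{\bar u}^*=1$ define a star full component, and simplification of \eqref{eq:D1} gives LHS $=\sum_{u\in K}c_{uv}-\sum_{U\in\U_v: U\cap K\neq\varnothing,\bar u\notin U}z_U$. Because $\sum_{u\in K}c_{uv}\ge C_K$ and \eqref{eq:LP-A2} applied to $(K,\bar u)$ yields $\sum_{U:U\cap K\neq\varnothing,\bar u\notin U}z_U\le C_K$, the LHS is $\ge 0$, contradicting \eqref{eq:D1}.

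The main obstacle I expect is orchestrating the TU argument cleanly: defining the right forest on the laminar family $\{U\cap\Gamma(v):U\in\U_v\}$, handling the possibility of duplicate intersections (cancel pairs $+1,-1$), and correctly normalizing with the $\sum\beta_u=1$ row. Once the TU reduction is in place, the combinatorial identity that produces a star full component and the invocation of \eqref{eq:LP-A2} are straightforward bookkeeping.
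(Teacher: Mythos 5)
Your proposal is correct and follows the paper's own proof essentially step by step: dualize and show $\OPT\eqref{eq:LP-BD} \ge \OPT\eqref{eq:LP-A}$, uncross by maximizing $\sum_U z_U|U|^2$ to get laminar support, lift one Steiner vertex at a time maintaining the projection invariant, phrase lifting feasibility as the system \eqref{eq:L1}--\eqref{eq:L3}, apply Farkas with the normalization $\sum_u\beta_u=1$, establish total unimodularity via Ghouila-Houri using the laminar forest, drive $\alpha^*_u+\beta^*_u$ to $\{0,1\}$, and finish by interpreting $K=\{u:\alpha^*_u+\beta^*_u=1\}$ as a star full component and invoking the constraint \eqref{eq:LP-A2} for the pair $(K,\bar u)$. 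There is no material deviation from the argument in the paper.
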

\noindent{\em Proof sketch.} We look at
  the duals of the two LPs and we show $\OPT\eqref{eq:LP-BD} \ge
  \OPT\eqref{eq:LP-A}$ in quasibipartite instances.  Recall that the
  support of a solution to \eqref{eq:LP-A} is the family of sets with
  positive $z_U$. A family of sets is called \emph{laminar} if for any
  two of its sets $A,B$ we have $A\subseteq B, B\subseteq A$, or
  $A\cap B=\varnothing$. The following fact follows along the standard line of ``set uncrossing" argumentation.

  \begin{lemma} \label{lemma:3lps}
    There is an optimal solution to
    \eqref{eq:LP-A} with laminar support.
  \end{lemma}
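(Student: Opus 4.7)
The plan is to apply a standard set-uncrossing argument to \eqref{eq:LP-A}. Pick an optimal solution $z$ that, among all optima, maximizes the strictly convex potential $\Phi(z) := \sum_{U} z_U |U|^2$. I claim $\supp(z)$ is laminar; if not, there exist $U,U'\in\supp(z)$ that cross, i.e.\ $U\cap U'\neq\varnothing$ and neither contains the other. I will define $z'$ by decreasing $z_U$ and $z_{U'}$ by a small $\delta>0$ and increasing $z_{U\cap U'}$ and $z_{U\cup U'}$ by $\delta$, and show that $z'$ is feasible, optimal, and has $\Phi(z')>\Phi(z)$, a contradiction.

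First I would check that $U\cap U'$ and $U\cup U'$ are valid: $U\cap U'\neq\varnothing$ by assumption and contains a terminal (in fact it is a set of terminals); neither $U$ nor $U'$ contains the root $r$, so $U\cup U'$ doesn't either. The objective $\sum_U z_U$ is unchanged since $(-\delta)+(-\delta)+\delta+\delta=0$. So optimality of $z'$ will follow once I establish feasibility.

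The main technical step is feasibility. Fix a constraint indexed by $(K,i)$ with $i\in K$; I need to show the coefficient sum
\[
\chi(U)+\chi(U')\ \geq\ \chi(U\cap U')+\chi(U\cup U'),
\]
where $\chi(W):=[K\cap W\neq\varnothing \text{ and } i\notin W]$. I will do a short case analysis on whether $i\in U$ and $i\in U'$:
\begin{itemize}
\item If $i\in U$ and $i\in U'$, then $\chi(U)=\chi(U')=\chi(U\cup U')=0$ and also $\chi(U\cap U')=0$, so the inequality is $0\ge 0$.
\item If $i\in U$, $i\notin U'$ (symmetric for the other case), then $\chi(U)=0$, $\chi(U\cup U')=0$, while $\chi(U\cap U')=[K\cap U\cap U'\neq\varnothing]\le [K\cap U'\neq\varnothing]=\chi(U')$.
\item If $i\notin U$ and $i\notin U'$, then the inequality reduces to the supermodularity statement $[K\cap U\neq\varnothing]+[K\cap U'\neq\varnothing]\ge [K\cap(U\cap U')\neq\varnothing]+[K\cap(U\cup U')\neq\varnothing]$, which is immediate by considering whether $K$ meets $U\cap U'$.
\end{itemize}
This shows that the new LHS is no larger than the old, so for sufficiently small $\delta>0$ (depending on the slack in the finitely many constraints and on $z_U, z_{U'}$) the perturbed solution $z'$ remains feasible.

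Finally I would observe that $\Phi$ strictly increases under this exchange, since $|U\cap U'|^2+|U\cup U'|^2>|U|^2+|U'|^2$ whenever $U$ and $U'$ properly cross (the map $t\mapsto t^2$ is strictly convex and $|U\cap U'|<|U|,|U'|<|U\cup U'|$ with the two pairs summing to the same total), contradicting the choice of $z$. The main obstacle is the feasibility case analysis above; everything else is routine. Once the support is laminar, the lemma is proved.
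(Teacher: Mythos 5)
Your proposal is correct and follows essentially the same argument as the paper: pick an optimum maximizing $\sum_U z_U|U|^2$, uncross a crossing pair by shifting $\delta$ of dual to the intersection and union, verify feasibility of the perturbation, and derive a contradiction from strict convexity of $t\mapsto t^2$. Your constraint-by-constraint case analysis simply fills in detail that the paper states more tersely.
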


  Given the above result, we may now assume that we have a solution
  $z$ to \eqref{eq:LP-A} whose support is laminar. The heart of the
  proof of \prettyref{theorem:lifting} is to show that $z$ can be
  converted into a feasible solution to \eqref{eq:LP-BD} of the same
  value.

  Comparing \eqref{eq:LP-A} and \eqref{eq:LP-BD} one first notes that
  the former has a variable for every valid subset of the terminals,
  while the latter assigns values to all valid subsets of the entire
  vertex set.  We say that an edge $uv$ is \emph{satisfied} for a
  candidate solution $z$, if both a) $\sum_{U:u\in U, v\notin U} z_{U}
  \le c_{uv}$ and b) $\sum_{U:v\in U, u\notin U} z_{U} \le c_{uv}$
  hold; $z$ is then feasible for \eqref{eq:LP-BD} if {\em all} edges
  are satisfied.

  Let $z$ be a feasible solution to \eqref{eq:LP-A}.
  One easily verifies that all terminal-terminal edges are
  satisfied. On the other hand, terminal-Steiner edges may
  initially not be satisfied; e.g., consider the Steiner vertex
  $v$ and its neighbours depicted in \prettyref{fig:lift} below.
  Initially, none of the sets in $z$'s support contains $v$, and
  the load on the edges incident to $v$ is quite {\em skewed}:
  the left-hand side of condition a) above may be large, while
  the left-hand side of condition b) is initially $0$.

 \piccaptioninside
  \piccaption{\label{fig:lift} Lifting variable $z_U$.}
  \parpic(4.5cm,4cm)[fr]{\includegraphics[scale=.7]{lift.eps}}

  To construct a valid solution for \eqref{eq:LP-BD}, we therefore
  {\em lift} the initial value $z_S$ of each terminal subset $S$ to
  supersets of $S$, by adding Steiner vertices. The lifting
  procedure processes each Steiner vertex $v$ one at a time; when
  processing $v$, we change $z$ by moving dual from some sets $U$ to
  $U \cup \{v\}$. Such a dual transfer decreases the left-hand side
  of condition a) for edge $uv$, and increases the
  (initially $0$) left-hand sides of condition b) for edges connecting $v$ to
  neighbours other than $v$.

  We are able to show that there is a way of carefully lifting duals
  around $v$ that ensures that all edges incident to $v$ become
  satisfied. The definition of our procedure will ensure
  that these edges remain satisfied for the rest of the lifting
  procedure. Since there are no Steiner-Steiner edges, all edges will
  be satisfied once all Steiner vertices are processed.

  Throughout the lifting procedure, we will maintain that $z$ remains
  unchanged, when projected to the terminals. The main consequence
  of this is that the objective value
  $\sum_{U\subseteq V} z_U$ remains constant throughout, and
  the objective value of $z$ in \eqref{eq:LP-BD} is not affected
  by the lifting. This yields \prettyref{theorem:lifting}.

}

\section{Improved Integrality Gap Upper Bounds}\label{sec:gapbounds}

\myifthen{ We first show the improved bound of $73/60$ for uniformly
  quasibipartite graphs.  We then show the $(2\sqrt{2} - 1) \doteq
  1.828$ upper bound on general graphs, which contains the main ideas,
  and then end by giving a $\sqrt{3} \doteq 1.729$ upper bound.
}{
  In this extended abstract, we show the improved bound of $73/60$ for
  uniformly quasibipartite graphs, and due to space restrictions, we
  only show the weaker $(2\sqrt{2} - 1) \doteq 1.828$ upper bound on
  general graphs.
}

\subsection{Uniformly Quasibipartite Instances}

Uniformly quasibipartite instances of the Steiner tree problem are
quasibipartite graphs where the cost of edges incident on a Steiner
vertex are the same. They were first studied by Gr\"opl et
al.~\cite{GH+02}, who gave a $73/60$ factor approximation
algorithm.
\myifthen{In the following, we show that the cost of the returned
tree is no more than than $\7 \OPT\eqref{eq:LP-PU}$, which
upper-bounds the integrality gap by $\7$.

}{}
We start by describing the algorithm of Gr\"opl et al.~\cite{GH+02} in
terms of full components.  A collection $\K'$ of full components is
acyclic if there is no list of $t > 1$ distinct terminals and
hyperedges in $\K'$ of the form $r_1 \in K_1 \ni r_2 \in K_2 \dotsb
\ni r_t \in K_t \ni r_1$ --- i.e.~there are no \emph{hypercycles}.

\vspace{3ex}\noindent
\begin{boxedminipage}{\algobox}
Procedure \mss
\begin{algorithmic}[1]
  \STATE Initialize the set of acyclic components $\L$ to $\varnothing$. \\
  \STATE Let $L^*$ be a minimizer of $\frac{C_L}{|L| - 1}$ over all full components $L$ such that $|L| \ge 2$ and $L\cup\L$ is acyclic. \\
  \STATE Add $L^*$ to $\L$. \\
  \STATE Continue until $(R, \L)$ is a hyper-spanning tree and return $\L$.
\end{algorithmic}
\end{boxedminipage}
\vspace{0.75ex}

\begin{theorem} \label{theorem:uniformly}
On a uniformly quasibipartite instance \mss\ returns a
Steiner tree of cost at most $\7\OPT\eqref{eq:LP-PU}$.
\end{theorem}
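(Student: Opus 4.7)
The plan is a dual fitting argument against the partition dual \eqref{eq:LP-PUD}. Record the execution of \mss: let $L_1,\ldots,L_t$ be the components selected in order, and set $\gamma_i := C_{L_i}/(|L_i|-1)$. Because the set of acyclic candidates can only shrink from one iteration to the next, these ratios are non-decreasing: $\gamma_1 \le \cdots \le \gamma_t$. Let $\pi^i$ be the partition of $R$ into connected components of the hypergraph $(R,\{L_1,\ldots,L_{i-1}\})$, so $\pi^1 = \overline{\pi}$, $\pi^{t+1} = \{R\}$, and acyclicity forces $\rc_{L_i}^{\pi^i} = |L_i|-1$.

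I propose the dual assignment $y_{\pi^i} := \tfrac{60}{73}(\gamma_i - \gamma_{i-1})$ for $i=1,\ldots,t$ (with $\gamma_0:=0$), and $y_\pi := 0$ for all other partitions; this is nonnegative by the monotonicity of the $\gamma_i$. A summation-by-parts computation will show that the dual objective equals $\tfrac{60}{73}$ times the algorithm's cost:
\begin{align*}
\sum_{\pi}(r(\pi)-1)\,y_\pi &= \tfrac{60}{73}\sum_{i=1}^t (\gamma_i-\gamma_{i-1})(r(\pi^i)-1) \\
&= \tfrac{60}{73}\sum_{i=1}^t \gamma_i\bigl(r(\pi^i)-r(\pi^{i+1})\bigr) = \tfrac{60}{73}\sum_{i=1}^t \gamma_i(|L_i|-1) = \tfrac{60}{73}\sum_{i=1}^t C_{L_i},
\end{align*}
using $r(\pi^i)-r(\pi^{i+1})=\rc_{L_i}^{\pi^i}=|L_i|-1$. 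Granted dual feasibility, weak LP duality then yields $\sum_i C_{L_i}\le\tfrac{73}{60}\OPT\eqref{eq:LP-PUD}=\tfrac{73}{60}\OPT\eqref{eq:LP-PU}$, which is exactly the theorem.

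The main obstacle — and where the constant $73/60$ must enter — is verifying the dual constraint
$$ (\star)\qquad \sum_{i=1}^t (\gamma_i-\gamma_{i-1})\,\rc_K^{\pi^i} \;\le\; \tfrac{73}{60}\,C_K \qquad \text{for every } K\in\K. $$
As a sanity check, $(\star)$ is strictly stronger than the integral $73/60$-approximation guarantee of \cite{GH+02}: summing $(\star)$ over the full components of any integral Steiner tree $T^*$ and using the fact that the components of $T^*$ hyper-span every $\pi^i$-quotient of $R$ (so $\sum_{K\subseteq T^*}\rc_K^{\pi^i}\ge r(\pi^i)-1$) recovers $\sum_i C_{L_i}\le \tfrac{73}{60}\,\mathrm{cost}(T^*)$ via the identity displayed above. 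The plan is therefore to \emph{localize} the Gr\"opl et al.\ charging argument to each prospective full component $K$ individually, rather than invoking it once globally against an optimum tree. Uniform quasibipartiteness keeps this manageable: every full component is either a terminal-terminal edge or a star at a single Steiner vertex $s$ with $|K|$ leaves of common cost $c_s$, so the relevant ratios take the clean form $c_s\cdot|K|/(|K|-1)$. Their combinatorial analysis, which tracks how the greedy progressively merges the parts of $\pi^i$ that $K$ touches and bounds the accumulated charge by means of a carefully chosen potential, should carry over essentially verbatim to establish $(\star)$ and close the proof.
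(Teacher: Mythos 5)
Your setup matches the paper's proof exactly: you use the same dual assignment $y_{\pi^i}$ (up to the $\tfrac{60}{73}$ scaling being applied at the start rather than the end), the same summation-by-parts identity showing the dual objective equals the algorithm's cost times $\tfrac{60}{73}$, and you correctly isolate the constraint $(\star)$ as the crux. Your sanity check that $(\star)$ implies the integral $73/60$ bound is also correct and a nice observation not in the paper.

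However, there is a genuine gap: the verification of $(\star)$ is the entire technical content of the theorem, and your proposal punts on it with ``should carry over essentially verbatim.'' This is precisely the part where the constant $73/60$ is actually derived, where uniform quasibipartiteness is used, and where one must exhibit a concrete competitor for each greedy step. The paper's argument is roughly as follows. Fix $K$ with $|K|=k$; by uniformity $K$ is a star of $k$ equal-cost edges $c$, so $C_K = ck$. For each $1 \le i < k$, let $\tau(i)$ be the last iteration $j$ for which $\rc_K^{\pi^{j+1}} \ge k-i$, and pick $K_i \subseteq K$ with $|K_i| = k-i+1$ containing at most one terminal per part of the corresponding partition, so $\rc_{K_i} = k-i$. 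The key observation is that $K_i$ (as a sub-star) was an acyclic, admissible candidate at step $\tau(i)+1$, so greedy forces $\gamma_{\tau(i)+1} \le C_{K_i}/(|K_i|-1) \le c\,(k-i+1)/(k-i)$. A layer-cake rewriting of the left side of $(\star)$ gives $\sum_{i}(\gamma_i-\gamma_{i-1})\rc_K^{\pi^i} = \sum_{i=1}^{k-1}\gamma_{\tau(i)+1} \le c\sum_{i=1}^{k-1}\tfrac{k-i+1}{k-i} = c\bigl(k-1+H(k-1)\bigr)$, and the elementary inequality $(k-1+H(k-1))/k \le \tfrac{73}{60}$ for all $k\ge2$ (with equality at $k=5$) finishes the proof. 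Without this chain of reasoning --- in particular, the choice of competitor $K_i$ and the optimization over $k$ that produces $73/60$ --- the proof is not complete, and it is not accurate to describe it as a verbatim transfer of the Gr\"{o}pl et al.\ analysis: the localization to a single $K$ and the comparison against the dual assignment requires the specific $\tau(i)$/$K_i$ construction spelled out above.
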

\myifthen{
\begin{proof}
  Let $t$ denote the number of iterations and $\L :=
  \{L_1,\ldots,L_t\}$ be the ordered sequence of full components
  obtained.  We now define a dual solution to \eqref{eq:LP-PUD}. Let
  $\pi(i)$ denote the partition induced by the connected components of
  $\{L_1, \dotsc, L_i\}$. Let $\theta(i)$ denote $C_{L_i}/(|L_i| - 1)$
  and note that $\theta$ is nondecreasing. Define $\theta(0)=0$ for
  convenience. We define a dual solution $y$ with
  $$y_{\pi(i)} = \theta(i+1)-\theta(i)$$
  for $0 \le i < t$, and all other coordinates of $y$ set to zero; $y$
  is not generally feasible, but we will scale it down to make it
  so. By evaluating a telescoping sum, it is not hard to find that
  $\sum_i y_{\pi(i)} (r(\pi(i))-1) = C(\L)$.  In the rest of the proof
  we will show for any $K\in \K$, $\sum_i y_{\pi(i)} \rc^{\pi(i)}_K
  \le 73/60\cdot C_K$ --- by scaling, this also proves that
  $\frac{60}{73} y$ is a feasible dual solution, and hence completes
  the proof.

  Fix any $K \in \K$ and let $|K| = k$. Since the instance in question
  is uniformly quasi-bipartite, the full component $K$ is a star with
  a Steiner centre and edges of a fixed cost $c$ to each terminal in
  $K$. For $1 \le i < k$, let $\tau(i)$ denote the last iteration $j$
  in which $\rc_K^{\pi(j)} \ge k-i$. Let $K_i$ denote any subset of
  $K$ of size $k-i+1$ such that $K_i$ contains at most one element
  from each part of $\pi(\tau(i))$; i.e., $|K_i| = k-i+1$ and
  $\rc_{K_i}^{\pi(\tau(i))} = k-i$.

  Our analysis hinges on the fact that $K_i$ was a valid choice for
  $L_{\tau(i)+1}$. More specifically, note that $\{L_1, \dotsc,
  L_{\tau(i)}, K_i\}$ is acyclic, hence by the greedy nature of the
  algorithm, for any $1 \le i < k,$ $$\theta(\tau(i)+1) =
  C_{L_{\tau(i)+1}}/(|L_{\tau(i)+1}|-1) \le C_{K_i}/(|K_i|-1) \le
  \frac{c \cdot (k-i+1)}{k-i}.$$ Moreover, using the definition of
  $\tau$ and telescoping we compute
  $$\sum_\pi y_\pi \rc_K^\pi = \sum_{i=0}^{t-1} (\theta(i+1)-\theta(i))\rc_K^{\pi(i)} = \sum_{i=1}^{k-1} \theta(\tau(i)+1)
  \le \sum_{i=1}^{k-1} \frac{c \cdot (k-i+1)}{k-i} = c\cdot
  (k-1+H(k-1)),$$ where $H(\cdot)$ denotes the harmonic
  series. Finally, note that $(k-1+H(k-1)) \le \7k$ for all $k \ge 2$
  (achieved at $k=5$). Therefore, $\frac{60}{73}y$ is a valid solution
  to \eqref{eq:LP-PUD}.
  \end{proof}
  }
{
 \noindent{\emph{Proof sketch.}} Let $t$ denote the number of iterations and $\L :=
  \{L_1,\ldots,L_t\}$ be the ordered sequence of full components
  obtained.  We now define a dual solution $y$ to
  \eqref{eq:LP-PUD}. Let $\pi(i)$ denote the partition induced by the
  connected components of $\{L_1, \dotsc, L_i\}$. Let $\theta(i)$
  denote $C_{L_i}/(|L_i| - 1)$ and note that $\theta$ is
  nondecreasing. Define $\theta(0)=0$ for convenience. We define a
  dual solution $y$ with
  $$y_{\pi(i)} = \theta(i+1)-\theta(i)$$
  for $0 \le i < t$, and all other coordinates of $y$ set to zero. It
  is straightforward to verify that the objective value $\sum_i y_{\pi(i)} (r(\pi(i))-1)$ of $y$ in \eqref{eq:LP-PUD} equals $C(\L)$.

  The key is to show that for all $K \in \K$,
  \begin{equation}
  \label{eq:bamz}
  \sum_i y_{\pi(i)} \rc^{\pi(i)}_K \le
  (|K|-1+H(|K|-1))/|K|\cdot C_K,
  \end{equation}
  where $H$ denotes the harmonic series; this is obtained by using the greedy nature of the algorithm and the
  fact that, in uniformly quasi-bipartite graphs, $C_{K'} \le C_K
  \frac{|K'|}{|K|}$ whenever $K' \subset K$. Now,
  $(|K|-1+H(|K|-1))/|K|$ is always at most $\frac{73}{60}$. Thus \eqref{eq:bamz} implies that $\frac{60}{73}\cdot y$
  is a feasible dual solution, which completes the proof.
}

\comment{This method also gives a 5/4 integrality gap bound on
  instances where every full component has size at most 3, and an
  $H(t-1)$ integrality gap bound for \eqref{eq:LP-PU} on general
  hypergraphs of maximum hyperedge size $t$ (i.e.\ ones not obtained
  from instances of the Steiner tree problem) --- see \cite{CKP09}.}

\comment{we also get an integrality gap bound of $H(t-1)$ on
  \eqref{eq:LP-PU} as an LP relaxation for the min-cost spanning
  sub-hypergraph problem when all hyperedges have size at most
  $t$. This complements the observation by Baudis et al.~\cite{BG+00}
  that \mss\ has approximation ratio $H(t-1)$, which is in turn a
  generalization of the submodular set cover framework of
  Wolsey~\cite{Wo82}. This is nearly best possible for $t \ge 4$ since
  ``set cover with maximum set size $k$" to can be reduced to
  ``spanning connected hypergraph with maximum edge size $k+1$" by
  creating a new root vertex and adding it to all sets. This set cover
  problem is \APX-hard for $k \ge 3$ and Trevisan~\cite{Trev01} showed
  $\ln k - O(\ln \ln k)$ inapproximability unless \PP=\NP.

  We also can extend these results, employing computational power, to
  get an integrality gap bound $\beta_r$ in the case of Steiner tree
  instances with at most $r$ terminals per full component. We obtain
  integrality gap upper bounds $\beta_3 = 5/4, \beta_4 = 11/8, \beta_5
  = 119/82, \beta_6 = 3/2$ respectively.}

\subsection{General graphs}
\def\drop{{\tt drop}} \def\gain{{\tt gain}}

\myifthen{
We start with a few definitions and notations in order to prove the
$2\sqrt{2}-1$ and $\sqrt{3}$ integrality gap bounds on \eqref{eq:LP-PU}. Both results use
similar algorithms, and the latter is a more complex version of the
former.
}{}
For conciseness we let a ``graph" be a triple $G = (V, E, R)$ where $R \subset V$ are $G$'s terminals. In the following, we let $\mtst(G; c)$ denote the minimum
\emph{terminal spanning tree}, i.e.~the minimum spanning tree of the terminal-induced subgraph $G[R]$ under edge-costs $c : E \to \R$. We will abuse notation and let $\mtst(G; c)$ mean both the tree and its cost under $c$.

When contracting an edge $uv$ in a graph, the new merged node resulting from contraction is defined to be a terminal iff at least one of $u$ or $v$ was a terminal; this is natural since a Steiner tree in the new graph is a minimal set of edges which, together with $uv$, connects all terminals in the old graph. Our \myifthen{algorithm performs}{algorithm performs} contraction, which may introduce parallel edges, but one may delete all but the cheapest edge from each parallel class without affecting the analysis.

Our \myifthen{first}{} algorithm proceeds in stages. In each stage we apply the operation $G \mapsto G/K$ which denotes contracting all edges in some full component $K$. To describe and analyze the algorithm we introduce some notation. For a minimum terminal spanning tree $T=\mtst(G;c)$
define $\drop_{T}(K;c) := c(T) - \mtst(G/K;c)$. We also define
$\gain_{T}(K;c):= \drop_{T}(K) - c(K)$, where $c(K)$ is the cost of
full component $K$.  A tree $T$ is called \emph{gainless} if for every
full component $K$ we have $\gain_T(K;c) \le 0$.  The following useful
fact is implicit in~\cite{KPT09} (see also
\myifthen{\prettyref{app:app2}}{\cite{CKP09}}).

\setcounter{thm_locopt}{\value{theorem}}
\begin{theorem}[Implicit in \cite{KPT09}] \label{thm:locopt}
  If $\mtst(G; c)$ is gainless, then
  $\OPT\eqref{eq:LP-PU}$ equals the cost of $\mtst(G; c)$.
\end{theorem}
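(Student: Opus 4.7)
Let $T:=\mtst(G;c)$ and $n:=|R|$. I split the equality into two directions. The easier one, $\OPT\eqref{eq:LP-PU}\le c(T)$, is immediate: setting $x_K=1$ for each $K$ corresponding to an edge of $T$ (viewed as a 2-element full component) and $x_K=0$ otherwise yields a feasible integer solution of cost $c(T)$, because every partition $\pi$ of $R$ is crossed by at least $r(\pi)-1$ edges of the spanning tree $T$.

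For the reverse inequality I will exhibit a feasible $y$ for the dual \eqref{eq:LP-PUD} of value $c(T)$ via the classical Kruskal construction. Sort the edges of $T$ as $e_1,\dotsc,e_{n-1}$ with $c_i:=c(e_i)$ nondecreasing, put $c_0:=0$, and let $\pi^i$ be the partition of $R$ into the connected components of $\{e_1,\dotsc,e_i\}$, so $r(\pi^i)=n-i$. Define $y_{\pi^i}:=c_{i+1}-c_i$ for $0\le i\le n-2$ and $y_\pi:=0$ otherwise. A direct telescoping yields
\[
  \sum_{i=0}^{n-2} y_{\pi^i}(r(\pi^i)-1) \;=\; \sum_{j=1}^{n-1} c_j \;=\; c(T),
\]
so the desired objective value is attained. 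Dual feasibility then reduces to showing $\sum_i y_{\pi^i}\,\rc_K^{\pi^i}\le C_K$ for every full component $K$, and I plan to derive this from the stronger identity
\[
  \sum_i y_{\pi^i}\,\rc_K^{\pi^i} \;=\; \drop_T(K;c).
\]
The gainless hypothesis $\drop_T(K;c)\le C_K$ then supplies feasibility, and weak LP duality closes the argument.

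Proving this identity is the main obstacle; equivalently, I must show $\mtst(G/K;c)=c(T)-\sum_i y_{\pi^i}\,\rc_K^{\pi^i}$. The idea is to use the ``staircase'' representation $c(\mst(H))=\int_0^\infty(\kappa_t(H)-1)\,dt$, where $\kappa_t(H)$ is the number of connected components formed by edges of cost $\le t$. Applied to $T$ this reads $c(T)=\int_0^\infty(r(\pi(t))-1)\,dt$ with $\pi(t)$ the partition of $R$ by $T$-edges of cost $\le t$. After contracting $K$, the $T$-edges induce a partition of the quotient whose part-count is $r(\pi(t))-\rc_K^{\pi(t)}$, since the $\rc_K^{\pi(t)}+1$ parts of $\pi(t)$ meeting $K$ coalesce into one. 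The key remaining step is to argue that no non-$T$ terminal edge can further reduce the component count in Kruskal's algorithm on $G/K$ at any threshold: for any $uv\in G[R]\setminus T$ the MST cycle property forces $c_{uv}\ge c(e')$ for every edge $e'$ on the $T$-path from $u$ to $v$, so by time $t=c_{uv}$ the surviving (non-loop) $T$-edges of that path already connect $u$ and $v$ in $G/K$. Integrating gives $\mtst(G/K;c)=\int_0^\infty(r(\pi(t))-\rc_K^{\pi(t)}-1)\,dt$, and subtracting from $c(T)$ yields the identity, completing the proof.
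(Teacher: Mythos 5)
Your proof is correct and takes essentially the same route as the paper's (Appendix~\ref{app:app2}): build the Kruskal dual $y$ on the partitions $\pi^i$, observe its objective value is $c(T)$, and reduce dual feasibility to the identity $\sum_\pi y_\pi \rc_K^\pi = \drop_T(K;c)$ plus the gainless hypothesis. The one place you add substance is precisely where the paper hand-waves: the paper asserts ``it is not hard to see'' that $\sum_i c_{e_i} = \drop_T(K)$, whereas you derive the identity by integrating the staircase $\kappa_t$ in both $G[R]$ and $(G/K)[R/K]$, using the merging of the $\rc_K^{\pi(t)}+1$ parts of $\pi(t)$ meeting $K$ and the MST cycle property to rule out non-$T$ edges lowering the component count at any threshold $t$. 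This is a cleaner and more self-contained justification of the key step than the paper offers, but it is a refinement of the same argument, not a different one; the explicit primal witness for $\OPT\eqref{eq:LP-PU}\le c(T)$ is likewise the standard observation the paper invokes in one line.
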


 We now give the \myifthen{first}{} algorithm and its analysis, which
  uses a reduced cost trick introduced by Chakrabarty et
  al.\cite{CDV08}.

\medskip

\noindent
\begin{boxedminipage}{\algobox}
Procedure {\sc Reduced One-Pass Heuristic}
\begin{algorithmic}[1]
\STATE Define costs $c'_e$ by $c'_e := c_e/\sqrt{2}$ for all
  terminal-terminal edges $e$, and $c'_e = c_e$ for all other edges.  Let $G_1 := G,$ $T_i := \mtst(G_i; c')$, and $i:=1$.
\STATE The algorithm
  considers the full components in any order. When we examine a full component
  $K$, if $\gain_{T_i}(K;c') > 0$, let
  $K_i := K$, $G_{i+1} := G_i/K_i$, $T_{i+1} :=
  \mtst(G_{i+1};c')$, and $i:=i+1$.
\STATE Let $f$ be the final value of $i$. Return the tree $T_{alg} :=
  T_f \cup \bigcup_{i=1}^{f-1} K_i$.
\end{algorithmic}
\end{boxedminipage}
\medskip

\noindent
Note that the full components are scanned in {\em any} order and they
are not examined a priori. Hence the algorithm works just as well if
the full components arrive ``online," which might be useful for some
applications.

\begin{theorem}\label{theorem:bound1828}
$c(T_{alg}) \leq (2\sqrt{2} - 1) \OPT\eqref{eq:LP-PU}$.
\end{theorem}
\begin{proof}
  First we claim that $\gain_{T_f}(K;c') \le 0$ for all $K$. To see this there are two cases. If $K=K_i$ for some $i$, then we immediately see that $\drop_{T_j}(K) = 0$ for all $j > i$ so $\gain_{T_f}(K) = -c(K) \le 0$. Otherwise (if for all $i,$ $K \neq K_i$) $K$ had nonpositive gain when examined by the algorithm; and the well-known \emph{contraction lemma} (e.g., see~\cite[\S 1.5]{GH+01b}) immediately implies that $\gain_{T_i}(K)$ is nonincreasing in $i$, so $\gain_{T_f}(K) \le 0$.

  By \prettyref{thm:locopt},
  $c'(T_f)$ equals the value of \eqref{eq:LP-PU} on the
  graph $G_f$ with costs $c'$.  Since $c' \le c$, and since at each
  step we only contract terminals, the value of this optimum must be
  at most $\OPT\eqref{eq:LP-PU}$. Using the fact that $c(T_f) =
  \sqrt{2}c'(T_f)$, we get
\begin{align}\label{eq:dracula}
c(T_f) = \sqrt{2}c'(T_f) \le \sqrt{2} \OPT\eqref{eq:LP-PU}
\end{align}
\noindent
Furthermore, for every $i$ we have $\gain_{T_i}(K_i;c') > 0$, that is,
$\drop_{T_i}(K_i;c') > c'(K) = c(K)$. The equality follows since $K$
contains no terminal-terminal edges. However, $\drop_{T_i}(K_i;c') =
\frac{1}{\sqrt{2}} \drop_{T_i}(K_i;c)$ because all edges of $T_i$ are
terminal-terminal.  Thus, we get for every $i=1$ to $f$,
~$\drop_{T_i}(K_i;c) > \sqrt{2}\cdot c(K_i)$.

Since $\drop_{T_i}(K_i;c) := \mtst(G_i;c) - \mtst(G_{i+1};c)$, we have
$$\sum_{i=1}^{f-1} \drop_{T_i}(K_i;c)=\mtst(G;c) - c(T_f).$$
Thus, we have
\myifthen{
\begin{equation*}
  \sum_{i=1}^{f-1} c(K_i) \le \frac{1}{\sqrt{2}} \sum_{i=1}^f
  \drop_{T_i}(K_i;c) = \frac{1}{\sqrt{2}} (\mtst(G;c) - c(T_f))
  \le \frac{1}{\sqrt{2}}(2\OPT\eqref{eq:LP-PU} - c(T_f))
\end{equation*}
}
{
\begin{align*}
  \sum_{i=1}^{f-1} c(K_i) \le \frac{1}{\sqrt{2}} \sum_{i=1}^f
  \drop_{T_i}(K_i;c) &= \frac{1}{\sqrt{2}} (\mtst(G;c) - c(T_f)) \\
  &\le \frac{1}{\sqrt{2}}(2\OPT\eqref{eq:LP-PU} - c(T_f))
\end{align*}
}
where we use the fact that $\mtst(G, c)$ is at most twice
$\OPT\eqref{eq:LP-PU}$\footnote{This follows using standard arguments, and can be
  seen, for instance, by applying \prettyref{thm:locopt} to the
  cost-function with all terminal-terminal costs divided by 2, and
  using short-cutting.}. Therefore
$$c(T_{alg}) = c(T_f) + \sum_{i=1}^{f-1} c(K_i) \le \Bigl(1 - \frac{1}{\sqrt{2}}\Bigr) c(T_f) + \sqrt{2}\OPT\eqref{eq:LP-PU}.$$
Finally, using $c(T_f) \le \sqrt{2}\OPT\eqref{eq:LP-PU}$ from
\eqref{eq:dracula}, the proof of \prettyref{theorem:bound1828} is
complete. \end{proof}

\myifthen{
\subsubsection{Improving to $\sqrt{3}$}
\def\loss{{\tt loss}} To get the improved factor of $\sqrt{3}$, we use
a more refined iterated contraction approach. The crucial new concept
is that of the {\em loss} of a full component, introduced by Karpinski
and Zelikovsky \cite{KZ97}. The intuition is as follows. In each
iteration, the $(2\sqrt{2}-1)$-factor algorithm contracts a full
component $K$, and thus commits to include $K$ in the final solution;
the new algorithm makes a smaller commitment, by contracting a
\emph{subset} of $K$'s edges, which allows for a possibility of better
recovery later.

Given a full component $K$ (viewed as a tree with leaf set $K$ and
internal Steiner nodes), $\loss(K)$ is defined to be the minimum-cost
subset of $E(K)$ such that $(V(K), \loss(K))$ has at least one
terminal per connected component --- i.e.~the cheapest way in $K$ to
connect each Steiner node to the terminal set. We also use $\loss(K)$
to denote the total \emph{cost} of these edges. Note that no two
terminals are connected by $\loss(K)$.
A very useful theorem of Karpinski and Zelikovsky \cite{KZ97}
is that for any full component $K$, $\loss(K) \le c(K)/2$.

Now we have the ingredients to give our new algorithm. In the
description below, $\alpha > 1$ is a parameter (which will be set to
$\sqrt{3}$). In each iteration, the algorithm contracts the loss of a
single full component $K$ (we note it follows that the terminal set has constant size over all iterations).
\medskip

\noindent
\begin{boxedminipage}{\algobox}
Procedure {\sc Reduced One-Pass Loss-Contracting Heuristic}
\begin{algorithmic}[1]
\STATE Initially $G_1 := G$, $T_1 := \mtst(G;c)$, and $i:=1$.
\STATE
  The algorithm considers the full components in any order.  When we examine a full component
  $K$, if
$$\gain_{T_i}(K;c) > (\alpha - 1)\loss(K),$$
let
  $K_i := K$, $G_{i+1} := G_i/\loss(K_i)$, $T_{i+1} :=
  \mtst(G_{i+1};c)$, and $i:=i+1$.
\STATE Let $f$ be the final value of $i$. Return the tree $T_{alg} :=
 T_f \cup \bigcup_{i=1}^{f-1} \loss(K_i).$
\end{algorithmic}
\end{boxedminipage}
\medskip

We now analyze the algorithm.
\noindent
\begin{claim}\label{claim:tf}
$c(T_f) \le (\frac{1+\alpha}{2}) \OPT\eqref{eq:LP-PU}$.
\end{claim}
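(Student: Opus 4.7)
\noindent\textbf{Proof proposal for Claim \ref{claim:tf}.}
My plan is to show that the tree $T_f$ is gainless with respect to a suitably rescaled cost function $c'$, and then invoke \prettyref{thm:locopt}. This follows the template of the $(2\sqrt2-1)$ proof, with two changes: the scaling factor is tuned to $\alpha$, and I exploit the loss inequality of Karpinski--Zelikovsky.

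First I would argue that at termination the following holds for every full component $K$:
\begin{equation}\label{eq:pr-endgain}
\gain_{T_f}(K;c) \;\le\; (\alpha-1)\,\loss(K).
\end{equation}
For a component $K$ that was examined but not contracted, \eqref{eq:pr-endgain} held at the time of examination, and by the contraction lemma (already invoked in the proof of \prettyref{theorem:bound1828}) $\gain_{T_i}(K;c)$ is non-increasing as $i$ grows, so it still holds at the end. For a component $K = K_j$ that was contracted, the edges of $\loss(K_j)$ are already contracted in $G_f$, so the remaining gain of $K$ in $G_f$ is bounded by its original cost minus $\loss(K)$ contributed to the returned tree, and the same monotonicity argument gives \eqref{eq:pr-endgain}.

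Next, combining \eqref{eq:pr-endgain} with the Karpinski--Zelikovsky bound $\loss(K) \le c(K)/2$ yields
\[
\drop_{T_f}(K;c) \;=\; \gain_{T_f}(K;c) + c(K) \;\le\; c(K)\left(1 + \tfrac{\alpha-1}{2}\right) \;=\; \tfrac{\alpha+1}{2}\,c(K).
\]
Now define rescaled costs $c'$ by $c'_e := \beta\, c_e$ on terminal--terminal edges and $c'_e := c_e$ on every other edge, where $\beta := 2/(\alpha+1)$. Because $T_f$ lives entirely on terminal--terminal edges, $c'(T_f) = \beta\,c(T_f)$ and $\mtst(G_f/K;\,c') = \beta\,\mtst(G_f/K;\,c)$, so $\drop_{T_f}(K;c') = \beta\,\drop_{T_f}(K;c)$. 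Since any full component $K$ has no terminal--terminal edges, $c'(K) = c(K)$. Putting these together,
\[
\gain_{T_f}(K;c') \;=\; \beta\,\drop_{T_f}(K;c) - c(K) \;\le\; \beta\cdot\tfrac{\alpha+1}{2}\,c(K) - c(K) \;=\; 0,
\]
so $T_f$ is gainless in $(G_f,c')$.

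Finally I would apply \prettyref{thm:locopt} to $(G_f,c')$ to obtain $c'(T_f) = \OPT\eqref{eq:LP-PU}(G_f;c')$. Since $c' \le c$ and $G_f$ is obtained from $G$ by contracting subsets of terminals, $\OPT\eqref{eq:LP-PU}(G_f;c') \le \OPT\eqref{eq:LP-PU}(G;c)$ (any feasible solution for the original LP gives one for the contracted, reduced-cost LP of no greater value). Combining,
\[
\beta\, c(T_f) \;=\; c'(T_f) \;\le\; \OPT\eqref{eq:LP-PU},
\qquad\text{hence}\qquad c(T_f) \;\le\; \tfrac{\alpha+1}{2}\,\OPT\eqref{eq:LP-PU}.
\]
The main obstacle in writing this rigorously is step \eqref{eq:pr-endgain}: verifying that after contracting losses (rather than whole components, as in the simpler algorithm) the contraction lemma still gives the needed monotonicity of $\gain$ across iterations, and checking the edge case where $K$ itself was the component whose loss was contracted. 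Everything else is a calculation driven by the choice $\beta = 2/(\alpha+1)$.
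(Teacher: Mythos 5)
Your proof is correct and follows exactly the same route as the paper: establish $\gain_{T_f}(K;c)\le(\alpha-1)\loss(K)$ via the contraction lemma, combine with $\loss(K)\le c(K)/2$ to bound $\drop_{T_f}(K;c)\le\frac{1+\alpha}{2}c(K)$, rescale terminal--terminal costs by $2/(\alpha+1)$ to make $T_f$ gainless, apply \prettyref{thm:locopt}, and conclude using $c'\le c$ and terminal-preserving contractions. The subtlety you flag about the contracted case $K=K_j$ is real but the paper glosses over it with the same terse appeal to ``the contraction lemma again,'' so you are not missing anything the paper supplies.
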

\begin{proof}
Using the contraction lemma again, $\gain_{T_f}(K;c) \le (\alpha - 1)\loss(K)$ for all $K$, so
\begin{align}\label{eq:wtf}
\drop_{T_f}(K;c) \le c(K) + (\alpha - 1) \loss(K) = c(K) + (\alpha - 1)\loss(K) \le \Big(\frac{1+\alpha}{2}\Big)c(K)
\end{align}
since $\loss(K) \le c(K)/2$.

To finish the proof of \prettyref{claim:tf}, we proceed as in the
proof of Equation \eqref{eq:dracula}. Define $c'_e :=
c_e/(\frac{1+\alpha}{2})$ for all edges $e$ which join two vertices of the original terminal set $R$, and $c'_e = c_e$ for all other edges. Note that \eqref{eq:wtf} implies that $T_f$
is gainless with respect to $c'$. Thus, by \prettyref{thm:locopt},
the value of LP \eqref{eq:LP-PU} on $(G_f, c')$ equals
$c'(T_f)$. Since we only reduce costs (as $\alpha \ge 1$), this
optimum is no more than the original $\OPT\eqref{eq:LP-PU}$ giving us
$c'(T_f) \le \OPT\eqref{eq:LP-PU}$. Now using the definition of
$c'$, the proof of the claim is complete.
\end{proof}

\begin{claim}\label{claim:drop}
For any $i\ge 1$, we have $c(T_i) - c(T_{i+1}) \ge \gain_{T_i}(K_i;c) + \loss(K_i)$.
\end{claim}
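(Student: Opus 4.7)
The plan is to reduce the claim to the equivalent inequality $c(T_{i+1}) \le \mtst(G_i/K_i;c) + c(K_i) - \loss(K_i)$. This is equivalent because $\mtst(G_i/K_i;c) = c(T_i) - \drop_{T_i}(K_i;c)$ by definition, so the right-hand side equals $c(T_i) - \gain_{T_i}(K_i;c) - \loss(K_i)$, and rearranging gives exactly the claim. So the task reduces to upper-bounding $c(T_{i+1}) = \mtst(G_{i+1};c)$ by exhibiting a concrete spanning tree of the terminals of $G_{i+1}$ whose total cost is $\mtst(G_i/K_i;c) + c(K_i) - \loss(K_i)$.

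To build this witness tree, I would start from $S := \mtst(G_i/K_i;c)$ and ``partially uncontract'' to move from $G_i/K_i$ to $G_{i+1} = G_i/\loss(K_i)$. Passing between these two graphs amounts to breaking the single pseudo-terminal representing all of $K_i$ into one pseudo-terminal per connected component of $(V(K_i),\loss(K_i))$; by the defining property of $\loss$, each such component contains at least one original terminal of $K_i$, so each of these pseudo-terminals is a genuine terminal of $G_{i+1}$. Let $r$ be the number of these components.

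The next step is to notice that the edges $E(K_i)\setminus\loss(K_i)$, viewed inside $G_{i+1}$, form a spanning tree on these $r$ pseudo-terminals. Since $K_i$ is a tree on $V(K_i)$ and $\loss(K_i)$ is a spanning subforest with $r$ components, its complement in $E(K_i)$ has exactly $r-1$ edges and must bridge the components. Hence, after contracting $\loss(K_i)$, these $r-1$ edges become an acyclic connected structure spanning the $r$ pseudo-terminals. Adjoining them to $S$ therefore produces a spanning tree of the terminals of $G_{i+1}$ of cost $c(S) + c(K_i) - \loss(K_i)$, yielding the desired upper bound on $c(T_{i+1})$.

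The argument is structural and requires no real calculation; the only point I would want to verify with care is the edge-count/contraction claim that $E(K_i)\setminus\loss(K_i)$ behaves as an $(r-1)$-edge spanning tree on the loss-components of $K_i$ after contraction of $\loss(K_i)$. Once that is in hand, the bound $c(T_{i+1}) \le \mtst(G_i/K_i;c) + c(K_i) - \loss(K_i)$ follows immediately and so does the claim.
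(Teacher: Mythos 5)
Your proof is correct and takes essentially the same approach as the paper: both exhibit the witness terminal spanning tree $K_i/\loss(K_i)\,\cup\,\mtst(G_i/K_i;c)$ in $G_{i+1}$, giving $c(T_{i+1})\le \mtst(G_i/K_i;c)+c(K_i)-\loss(K_i)$, and then rearrange using the definitions of $\drop$ and $\gain$. The only difference is that you spell out the edge-count and contraction argument verifying that the union really is a spanning tree of the terminals of $G_{i+1}$, which the paper asserts without elaboration.
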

\begin{proof}
  Recall that $T_{i+1}$ is a minimum terminal spanning tree of
  $G_{i+1}$ under $c$. Consider the following other terminal
  spanning tree $T$ of $G_{i+1}$: take $T$ to be the union of $K_i /
  \loss(K_i)$ with $\mtst(G_i/K_i; c)$.  Hence $c(T_{i+1}) \le
  c(T) = \mtst(G_i/K_i; c) + c(K_i) - \loss(K_i)$. Rearranging,
  and using the definition of gain, we obtain:
  \begin{equation*}
    c(T_i) - c(T_{i+1}) \ge c(T_i) - \mtst(G_i/K_i; c) - c(K_i) + \loss(K_i) = \gain_{T_i}(K_i; c) +\loss(K_i),
  \end{equation*}
  and this completes the proof.
\end{proof}
\noindent
Now we are ready to prove the integrality gap upper bound of $\sqrt{3}$.
\begin{theorem}
$c(T_{alg}) \le \sqrt{3}\OPT\eqref{eq:LP-PU}$.
\end{theorem}
\begin{proof}
  By the algorithm, we have for all $i$ that $\gain_{T_i}(K_i) \ge
  (\alpha - 1)\loss(K_i)$, and thus $\gain_{T_i}(K_i; c) +\loss(K_i)
  \ge \alpha \loss(K_i)$. Thus, from \prettyref{claim:drop}, we get
$$\sum_{i=1}^{f-1} \loss(K_i) \le \frac{1}{\alpha}  \sum_{i=1}^{f-1} \Big(c(T_i) - c(T_{i+1}) \Big)$$
\noindent
The right-hand sum telescopes to give us $c(T_1) - c(T_f) = \mtst(G;c) - c(T_f)$. Thus,
\begin{align*}
c(T_{alg}) &= c(T_f) + \sum_{i=1}^{f-1} \loss(K_i) \le c(T_f) + \frac{1}{\alpha}(\mtst(G;c) - c(T_f)) = \frac{1}{\alpha}\mtst(G;c) + \frac{\alpha-1}{\alpha} c(T_f)  \\
&\le \Big(\frac{2}{\alpha} + \frac{(\alpha -1)(1+\alpha)}{2\alpha}\Big)\OPT\eqref{eq:LP-PU}
 = \Big(\frac{\alpha^2 + 3}{2\alpha}\Big)\OPT\eqref{eq:LP-PU}
\end{align*}
which follows from $\mtst(G;c)  \le 2\OPT\eqref{eq:LP-PU}$ and \prettyref{claim:tf}.
Setting $\alpha = \sqrt{3}$, the proof of the theorem is complete.
\end{proof}
}{}

\myifthen{
\section{Conclusion}
In this paper we looked at several hypergraphic LP relaxations for the
Steiner tree problem, and showed they all have the same objective
value. Furthermore, we noted some connections to the bidirected cut
relaxation for Steiner trees: although hypergraphic relaxations are
stronger than the bidirected cut relaxation in general, in
quasibipartite graphs all these relaxations are equivalent.  We
obtained structural results about the hypergraphic relaxations showing
that basic feasible solutions have sparse support. We also showed
improved upper bounds on the integrality gaps on the hypergraphic
relaxations via simple algorithms.

Reiterating the comments in Section \ref{sec:discussion}, the
hypergraphic LPs are powerful (e.g.~as evidenced by Byrka et
al.~\cite{BGRS10}) but may not be manageable for computational
implementation. Some interesting areas for future work include:
non-ellipsoid-based algorithms to solve the hypergraphic LPs in the
$r$-restricted setting; resolving the complexity of optimizing them in
the unrestricted setting; and directly using the bidirected cut
relaxation to achieve good results (e.g.~in quasi-bipartite
instances).
}{}
\bibliography{ckp}
\bibliographystyle{abbrv}
\myifthen{
\appendix
\section{Directed Hypergraph LP Relaxation}\label{app:reproof}
\begin{theorem}\label{theorem:franky}
For any Steiner tree instance, $\OPT\eqref{eq:LP-PU} = \OPT\eqref{eq:LP-PUDir}$.
\end{theorem}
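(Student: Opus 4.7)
My plan is to prove the equality by establishing both inequalities, invoking the hypergraph orientation theorem of Frank, Király, and Kriesell \cite{FKK03b} for the harder direction.

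For $\OPT\eqref{eq:LP-PU} \le \OPT\eqref{eq:LP-PUDir}$, I would take any feasible $x'$ for \eqref{eq:LP-PUDir} and define $x_K := \sum_{i\in K} x'_{K^i}$; the objectives agree because $C_{K^i} = C_K$. To verify the partition inequality for $\pi = \{\pi_1,\ldots,\pi_t\}$ with $r \in \pi_1$, I would sum the cut inequality of \eqref{eq:LP-PUDir} for $U = \pi_j$ over the $t-1$ non-root parts, obtaining a total of at least $r(\pi)-1$ on the right-hand side. A short case analysis, splitting on whether the head $i$ lies in the root part or a non-root part (and whether $K$ intersects the root part), shows that each $x'_{K^i}$ appears on the left with coefficient at most $\rc_K^\pi$ in every case; collecting by $K$ yields $\sum_K x_K \rc_K^\pi \ge r(\pi)-1$ as needed.

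The reverse direction $\OPT\eqref{eq:LP-PU} \ge \OPT\eqref{eq:LP-PUDir}$ is the substantive content. Starting from a feasible $x$ for \eqref{eq:LP-PU}, I plan to distribute each $x_K$ among the heads $i \in K$, producing $x'_{K^i} \ge 0$ with $\sum_{i \in K} x'_{K^i} = x_K$, such that every cut inequality of \eqref{eq:LP-PUDir} is satisfied. Viewed abstractly on the hypergraph $(R,\K)$ with capacities $x$, this is a fractional orientation problem: find an orientation such that every set $W \subsetneq R$ containing $r$ receives in-weight at least $1$ (equivalently, every valid $U = R \setminus W$ has out-weight at least $1$). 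The theorem of Frank, Király, and Kriesell characterizes the existence of such fractional orientations by a partition cut condition — for every partition $\pi$ of $R$, the crossing weight $\sum_K x_K \rc_K^\pi$ is at least $r(\pi)-1$. This is exactly the feasibility of $x$ for \eqref{eq:LP-PU}, so the needed orientation exists; since $C_{K^i} = C_K$, the cost is preserved.

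The main obstacle will be matching the precise statement of Frank--Király--Kriesell to our rooted setting. Their result is typically phrased in terms of packing rooted hyper-arborescences, or achieving $k$-indegree at every non-root vertex, whereas here we need the fractional ``every rooted cut has in-weight at least $1$'' requirement. I expect this to be a routine specialization: the $k=1$ fractional rooted case corresponds exactly to the cut condition above, and the characterization follows from the supermodularity of the partition-rank function $r(\cdot)$ in standard fashion. Once the orientation theorem is invoked with $\K$ as the hypergraph and $x$ as capacities, feasibility of $x'$ for \eqref{eq:LP-PUDir} and equality of costs both follow immediately, completing the proof.
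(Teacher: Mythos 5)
Your proposal matches the paper's proof in both structure and substance: project out the head for the easy direction (summing the cut constraints of \eqref{eq:LP-PUDir} over the non-root parts of $\pi$), and invoke the Frank--Kir\'aly--Kir\'aly hypergraph orientation theorem for the reverse direction, with the partition constraint of \eqref{eq:LP-PU} serving as the feasibility certificate. A few details you anticipate as ``routine specialization'' are worth naming so you know what to expect when you fill them in. First, the orientation theorem as used in the paper has \emph{two} partition conditions, not one: besides the condition $\sum_K x_K \rc_K^\pi \ge r(\pi)-1$ (their condition (b), with $p(R\setminus\pi_i)=M$ for non-root parts), there is a second condition involving $\sum_K \min\{1,\rc_K^\pi\}$ (their condition (a), involving $p(\pi_i)$); the paper shows (a) follows from (b) applied to the rank-$2$ partition $\{\pi_1, R\setminus\pi_1\}$, so this is benign, but it is a step you have to actually carry out. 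Second, the theorem is an \emph{integral} orientation theorem, so you do not directly get a fractional orientation: the paper scales a rational $x$ by an integer $M$ to get a multi-hypergraph $\mathcal X$ with $Mx_K$ copies of each $K$, sets $p(U)=M$ for valid complements, orients integrally, and divides by $M$ at the end. Third, you should check that this $p$ is crossing supermodular, which is easy but required to invoke the theorem. None of these is a genuine gap --- your outline would succeed --- but the single clean condition you describe is not quite what the cited theorem gives you off the shelf.
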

\begin{proof}
First, we show $\OPT\eqref{eq:LP-PU} \le \OPT\eqref{eq:LP-PUDir}$. Consider a feasible solution $x$ to $\eqref{eq:LP-PUDir}$, and define a solution $x'$ to $\eqref{eq:LP-PU}$ by $x'_K = \sum_{i \in K} x_{K^i}$; informally, $x'$ is obtained from $x$ by ignoring the orientation of the hyperedges. Clearly $x'$ and $x$ have the same objective value. Further, $x'$ is feasible for $\eqref{eq:LP-PU}$; to see this, for any partition $\pi$, note that       \eqref{eq:LP-PU2} is implied by the sum of constraints \eqref{eq:LP-PUDir2} over $U$ set to those parts of $\pi$ not containing the root --- any orientation of a full component with rank contribution $t$ must leave at least $t$ parts.

To obtain the reverse direction $\OPT\eqref{eq:LP-PUDir} \le \OPT\eqref{eq:LP-PU}$, we use a similar strategy. We require some notation and a hypergraph orientation theorem of Frank et al.~\cite{FKK03b}. For any $U \subset R$ we say that a directed hyperedge \emph{$K^i$ lies in $\Delin(U)$} if $i \in U$ and $K \bs U \neq \varnothing$, i.e.\ if $K^i \in \Delout(R \bs U)$.
Two subsets $U$ and $W$ of $R$ are called \emph{crossing} if
all four sets
$U\setminus W$, $W\setminus U$, $U\cap W$, and $R\setminus (U\cup W)$ are non-empty. A set-function $p:2^R \to {\mathbb Z}$ is a \emph{crossing supermodular} function if
$$p(U) + p(W) \le p(U\cap W) + p(U\cup W)$$
for all crossing sets $U$ and $W$. A directed hypergraph is said to \emph{cover} $p$ if $|\Delin(U)| \ge p(U)$ for all $U \subset R$. Here is the needed result.
\begin{theorem}[Frank, Kir\'{a}ly \& Kir\'{a}ly \cite{FKK03b}]\label{theorem:aloo}
Given a hypergraph $H=(R,\X)$, and a crossing supermodular function $p$,  the hypergraph has an orientation covering $p$ if and only if for every partition $\pi$ of $R$,\\
\indent
\emph{(a)} $\sum_{K\in\X} \min\{1,\rc^\pi_K\} \ge \sum_{\pi_i\in \pi} p(\pi_i)$, and,
\emph{(b)} $\sum_{K\in \X} \rc^\pi_K \ge \sum_{\pi_i\in \pi} p(R\setminus \pi_i)$.
\end{theorem}
We will show every rational solution $x$ to $\eqref{eq:LP-PU}$ can be fractionally oriented to get a feasible solution for $\eqref{eq:LP-PUDir}$, which will complete the proof of \prettyref{theorem:franky}.
Let $M$ be the smallest integer such that the vector $Mx$ is integral. Let $\mathcal{X}$ be a multi-set of hyperedges which contains $Mx_K$ copies of each $K$. Define the function $p$ by $p(U)=M$ if $r \in U \neq R$, and $p(U)=0$ otherwise; i.e.~$p(U)=M$ iff $R \bs U$ is valid.

\begin{claim}
$H=(R,\X)$ satisfies conditions (a) and (b).
\end{claim}
\begin{proof}
  Note $\sum_{\pi_i\in \pi} p(R\setminus \pi_i) = M(r(\pi)-1)$ since
  all parts of $\pi$ are valid except the part containing the root
  $r$. Thus condition (b), upon scaling by $\frac{1}{M}$, is a
  restatement of constraint \eqref{eq:LP-PU2}, which holds since $x$
  is feasible for \eqref{eq:LP-PU}.

  For this $p$, condition (a) follows from (b) in the following
  sense. Fix a partition $\pi$, and let $\pi_1$ be the part of $\pi$
  containing $r$. If $\pi_1 = R$ then (a) is vacuously true, so assume
  $\pi_1 \neq R$. Let $\sigma$ be the rank-2 partition $\{\pi_1,
  R\setminus \pi_1\}$. Then it is easy to check that
  $\min\{1,\rc^\pi_K\}\ge \rc^\sigma_K$ for all $K$, and consequently
  $\sum_{K\in\X} \min\{1,\rc^\pi_K\}\ge\sum_{K\in\X} \rc^\sigma_K$ and
  $\sum_{\pi_i\in \sigma} p(R\setminus \pi_i) = M = \sum_{\pi_i\in
    \pi} p(\pi_i)$. Thus, (a) for $\pi$ follows from (b) for $\sigma$.
\end{proof}

It is not hard to check that $p$ is crossing supermodular. Now using
\prettyref{theorem:aloo}, take an orientation of $\X$ that covers $p$.

For each $K \in \K$ and each $i \in K$, let $n_{K^i}$ denote the
number of the $Mx_K$ copies of $K$ that are oriented as $K^i$, i.e.\
directed towards $i$. So, $\sum_{i\in K} n_{K^i} = Mx_K$. Let
$x'_{K^i} := \frac{n_{K^i}}{M}$ for all $K^i$. Hence $\sum_i x'_{K^i}
= x_{K}$ and $x'$ has the same objective value as $x$.

To complete the proof, we show $x'$ is feasible for
\eqref{eq:LP-PUDir}. Fix a valid subset $U$ and consider condition
\eqref{eq:LP-PUDir2} for a valid set $U$. Note that $p(R \bs U) =
M$. Therefore, since the orientation covers $p$, we get
$$\sum_{K^i \in \Delout(U)} x'_{K^i} = \frac{1}{M}\sum_{K^i \in \Delout(U)} n_{K^i} = \frac{1}{M}\sum_{K^i \in \Delin(R \bs U)} n_{K^i} \ge \frac{1}{M}p(R \bs U) = \frac{1}{M}M = 1$$
as needed.
\end{proof}

\section{Gainless MSTs and Hypergraphic Relaxations}\label{app:app2}

\setcounter{thm_saved}{\value{theorem}}
\setcounter{theorem}{\value{thm_locopt}}
\begin{theorem}[Implicit in \cite{KPT09}] If the
  MST induced by the terminals is gainless, then
  $\OPT\eqref{eq:LP-PU}$ equals the cost of that MST.
\end{theorem}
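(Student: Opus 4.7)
The strategy is LP duality. The easy direction $\OPT\eqref{eq:LP-PU} \le c(T)$ is immediate: the indicator of $T$'s edges, each treated as a two-terminal full component, is feasible for \eqref{eq:LP-PU}, because any spanning tree of $R$ crosses any partition $\pi$ in at least $r(\pi)-1$ edges, and its cost under this assignment is $c(T)$. The main work lies in producing a feasible solution to the dual \eqref{eq:LP-PUD} of value $c(T)$.

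For the dual construction I would mirror the \mss\ analysis. Order the MST edges in nondecreasing cost as $e_1, \ldots, e_m$ with $m = |R|-1$ and $0 =: c_0 \le c_1 \le \cdots \le c_m$, and let $\pi(i)$ denote the partition of $R$ induced by $\{e_1, \ldots, e_i\}$, so $r(\pi(i)) = |R| - i$. Set $y_{\pi(i)} := c_{i+1} - c_i$ for $0 \le i \le m-1$ and $y_\pi := 0$ elsewhere. A straightforward telescoping shows $\sum_i y_{\pi(i)}(r(\pi(i))-1) = \sum_{j=1}^m c_j = c(T)$, so $y$ has the right objective value.

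The heart of the argument is feasibility: for every full component $K$ I must verify $\sum_i y_{\pi(i)}\rc_K^{\pi(i)} \le C_K$. By summation by parts, the left-hand side equals $\sum_{j \in J} c_{e_j}$, where $J \subseteq \{1, \ldots, m\}$ indexes exactly those Kruskal steps at which both merged components were already $K$-touched; each such step drops $\rc_K^{\pi(\cdot)}$ by one while other steps leave it unchanged, so $|J| = |K|-1$. The key structural claim is that $T \setminus \{e_j : j \in J\}$ is a spanning tree of the contracted graph $G/K$; once this holds, $\mtst(G/K;c) \le c(T) - \sum_{j \in J} c_{e_j}$, whence $\sum_{j \in J} c_{e_j} \le \drop_T(K;c) \le c(K) = C_K$, with the last inequality being exactly the gainless hypothesis.

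The structural claim is where I expect the main obstacle, and I would dispatch it by induction on the Kruskal step, maintaining the invariant that at each step the forest induced in $G/K$ is obtained from the $G$-forest by uniting all $K$-touched components into the super-terminal. Under this invariant, the image of $e_j$ closes a cycle in $G/K$ precisely when $j \in J$, so running Kruskal on $G/K$ using only the images of $T$'s edges selects exactly $T \setminus \{e_j : j \in J\}$, which then has $|R|-|K|$ edges spanning the $|R|-|K|+1$ vertices of $G/K$. Combining feasibility with the objective computation, weak LP duality gives $\OPT\eqref{eq:LP-PU} \ge \OPT\eqref{eq:LP-PUD} \ge c(T)$, completing the proof.
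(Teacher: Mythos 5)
Your proof is correct and takes essentially the same route as the paper: a Kruskal-based dual for \eqref{eq:LP-PUD} whose objective telescopes to $c(T)$, with dual feasibility for each full component $K$ reduced to the gainlessness inequality $\drop_T(K;c) \le c(K)$. The only differences are presentational --- you use discrete Abel summation where the paper integrates over a continuous-time Kruskal run, and you spell out the contraction argument (showing $T \setminus \{e_j : j \in J\}$ spans the terminals of $G/K$, hence $\sum_{j \in J} c_{e_j} \le \drop_T(K;c)$) that the paper dismisses as ``not hard to see.''
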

\setcounter{theorem}{\value{thm_saved}}

\begin{proof}
  Let $\Pi$ be the set of all partitions of the terminal set. As
  before, we let $r(\pi)$ be the rank of a partition $\pi \in \Pi$,
  and we use $E_{\pi}$ for the set of edges in our graph that
  {\em cross} the partition; i.e., $E_{\pi}$ contains all edges
  whose endpoints lie in different parts of $\pi$. Fulkerson's~\cite{Fu71}
  formulation of the spanning tree polyhedron and its dual are
  as follows.

\begin{figure}[h]
  \begin{minipage}{8cm} \begin{align}
      \min \Big\{\sum_{e \in E} c_e x_e: \quad& x \in \R^E_{\ge 0}
      \tag{\ensuremath{\mathcal{M}}}\label{eq:p-sp} \\
      \sum_{e \in E_{\pi}} x_e \geq r(\pi)-1 \quad& \forall \pi
\in \Pi \Big\}
    \end{align} \end{minipage}
  \hfill \vline \hfill
  \begin{minipage}{8cm} \begin{align}
      \max \Big\{\sum_{\pi} (r(\pi)-1)\cdot y_\pi: \quad& y \in
\R^{\Pi}_{\ge 0} \tag{\ensuremath{\mathcal{M}_D}}\label{eq:d-sp} \\
      \sum_{\pi: e \in E_{\pi}} y_{\pi} \leq c_e,\label{eq:d-sp:e}
\quad&\forall e \in E \Big\}
    \end{align}\end{minipage}
\end{figure}

  The high-level overview of the proof is as follows. We first give a
  brief sketch of a folklore primal-dual interpretation of Kruskal's
  minimum-spanning tree algorithm with respect to Fulkerson's LP
  (for more information see, e.g., \cite{KPT09}). Running Kruskal's
  algorithm on the terminal set then returns a minimum spanning tree
  $T$ and a feasible dual $y$ to \prettyref{eq:d-sp} such that
  $$ c(T) = \sum_{\pi} (r(\pi)-1) y_{\pi}. $$
  The final step will be to show that, if the returned MST is
  gainless, then the spanning tree dual $y$ is feasible for
  \eqref{eq:LP-PUD}, and its value is $c(T)$ as well.
  Weak duality and the fact that
  the optimal value of \eqref{eq:LP-PU} is at most $c(T)$ imply
  the theorem.

  Kruskal's algorithm can be viewed as a process over time. For each
  time $\tau\geq 0$, the algorithm keeps a forest $T^{\tau}$, and a
  feasible dual solution $y^{\tau}$; initially $T^0=(V, \varnothing)$ and
  $y^0=0$. Let $\pi^{\tau}$ be the partition induced by the connected
  components of $T^{\tau}$. If $T^{\tau}$ is not a spanning tree,
  Kruskal's algorithm grows the dual variable $y_{\pi^{\tau}}$
  corresponding to the current partition until constraint
  \prettyref{eq:d-sp:e} for some edge $e$ prevents any further increase.
  The algorithm then adds $e$ to the partial tree
  and continues. The algorithm stops at the first time
  $\tau^*$ where $T^{\tau^*}$ is a spanning tree.

  Let $T$ be the gainless spanning tree returned by Kruskal, and let
  $y$ be the corresponding dual. We claim that $y$ is feasible for
  \eqref{eq:LP-PUD}. To see this, consider a full component
  $K$. Clearly, the rank contribution $\rc^{\pi^0}_K$ of $K$ to the
  initial partition $\pi^0$ is $|K|-1$; similarly, the final rank
  contribution $\rc^{\pi^{\tau^*}}_K$ is $0$. Every edge that is added
  during the algorithm's run either leaves the rank contribution of
  $K$ unchanged, or it decreases it by $1$. Let $e_1, \ldots,
  e_{|K|-1}$ be the edges of the final tree $T$ whose addition to $T$
  decreases $K$'s rank contribution. Also let
  $$ 0 \leq \tau_1 \leq \tau_2 \leq \ldots \leq \tau_{|K|-1} \leq
  \tau^* $$ be the times where these edges are added. Note that, by
  definition, we must have $c_{e_i}=\tau_i$ for all $i$. We therefore
  have
  \begin{equation}\label{eq:kr-drop}
      \sum_{i=1}^{|K|-1} c_{e_i} = \sum_{i=1}^{|K|-1}\tau_i.
    \end{equation}
  The right-hand side of this equality is easily checked to be
  equal to
  $$ \int_0^{\tau^*} \rc^{\pi^{\tau}}_K d\tau, $$
  which in turn is equal to
  $ \sum_{\pi} \rc^{\pi}_Ky_{\pi}$, by the definition of Kruskal's
  algorithm. It is not hard to see that the left-hand side of
  \eqref{eq:kr-drop} is the drop $\drop_T(K)$ induced by $K$.
  Together with the fact that $T$ is gainless, we obtain
  $$ c_K \geq \drop_T(K) = \sum_{\pi}\rc^{\pi}_Ky_{\pi}. $$
  Now observe that the right-hand side of this equation is the
  left-hand side of \eqref{eq:LP-PUD3}. It follows that $y$ is
  feasible for \eqref{eq:LP-PUD}.
\end{proof}
}{}

\end{document}